\newcommand\ackname{Acknowledgements}
\newenvironment{acknowledgements}{
   \titlepage
   \null\vfil
   \@beginparpenalty\@lowpenalty
   \begin{center}
      \bfseries \ackname
      \@endparpenalty\@M
   \end{center}}
{\par\vfil\null\endtitlepage}
\newenvironment{acknowledgements}{
   \if@twocolumn
   \section*{\abstractname}
   \else
   \small
   \begin{center}
      {\bfseries \ackname\vspace{-.5em}\vspace{\z@}}
   \end{center}
   \quotation
   \fi}
{\if@twocolumn\else\endquotation\fi}
\newcommand{\kabs}[1]{\ensuremath{\vert#1\vert}}
\newcommand{\Z}{\mathbb{Z}}
\newcommand{\C}{\mathbb{C}}
\newcommand{\Sec}{\ensuremath{\S}}
\newcommand{\id}{\mathrm{Id}}
\newcommand{\I}{\mathrm{Id_m}}
\newcommand{\dx}{{\,\mathrm{d}x}}
\newcommand{\dt}{{\,\mathrm{d}t}}
\newcommand{\dz}{{\,\mathrm{d}z}}
\newcommand{\N}{\mathbb{N}}
\newcommand{\R}{\mathbb{R}}
\newcommand{\T}{\mathbb{T}}
\newcommand{\Lp}{\mathrm{L}}
\newcommand{\bmo}{\mathrm{BMO}}
\newcommand\scal[2]{{\left\langle #1 ,#2\right\rangle}}
\newcommand\scalare[1]{{\left\langle #1 \right\rangle}}
\newcommand\norm[1]{\| #1\|}
\newcommand{\ottoboh}[1]{ }
\newcommand{\akmboh}[1]{{\color{green}}}
\newcommand{\BrascampLiebBoh}[1]{}
\def\d{\,\mathrm{d}}
\newcommand{\insieme}[1]{\left \{#1\right \}}
\newcommand{\eg}{e.g.,\xspace}
\newcommand{\etal}{\emph{et~al.}\xspace}
\newcommand{\ie}{i.e.,\xspace}
\DeclareMathOperator{\diver}{\nabla^{*}}
\DeclareMathOperator{\dist}{dist}
\newtheorem{theorem}{Theorem}[section]
\newtheorem{thm}[theorem]{Theorem}
\newtheorem{lemma}[theorem]{Lemma}
\newtheorem{remark}[theorem]{Remark}
\newtheorem{proposition}[theorem]{Proposition}
\newtheorem{corollary}[theorem]{Corollary}
\title{Finite range decomposition for a general  class of elliptic operators} 
\author{Eris Runa\thanks{eris.runa@mis.mpg.de}}
\affil{Max Planck Institut for Mathematics in the Sciences,\\ Inselstrasse 22, Leipzig\\ Germany}
\newcommand{\abs}[1]{{\lvert #1\rvert}}
\newcommand{\Acal}   {{\mathcal A }}
\newcommand{\Bcal}   {{\mathcal B }}
\newcommand{\Ccal}   {{\mathcal C }} 
\newcommand{\Dcal}   {{\mathcal D }} 
\newcommand{\Ecal}   {{\mathcal E }}
\newcommand{\Lcal}   {{\mathcal L }} 
\newcommand{\Mcal}   {{\mathcal M }}
\newcommand{\Rcal}   {{\mathcal R }}
\newcommand{\cH  }{\boldsymbol{\mathcal H}}
\newcommand{\As }{\mathscr{A}}
\newcommand{\Bs }{\mathscr{B}}
\newcommand{\Ts }{\mathscr{T}}
\newcommand{\Rs }{\mathscr{R}}
\newcommand{\Cs }{\mathscr{C}}
\date{}
\begin{document}

\maketitle

\begin{abstract}
   We consider a family of gradient Gaussian vector fields  on $\Z^d$, where the covariance operator is not translation invariant. 
   A uniform finite range decomposition of the corresponding covariance operators is proven, \ie the covariance operator can be written as a sum of covariance operators whose kernels are supported within cubes  of  increasing diameter. 
   An optimal regularity  bound for the subcovariance operators is proven.  We also obtain regularity bounds as we vary the coefficients defining the gradient Gaussian measures. 
   This extends a result of S. Adams, R. Koteck\'y and S. M\"uller. 
\end{abstract}

\begin{acknowledgements}
   The present results were obtained during my PhD studies.  I would like to express my gratitude to my advisor Prof. Stefan  M\"uller for the support and helpful discussions on the topic. I am also grateful to the  Bonn International Graduate school in Mathematics, Hausdorff Center for Mathematics, SFB 1060 and the Institute for Applied Mathematics in Bonn for the support and the nice environment. 
\end{acknowledgements}

\section{Introduction} 
\label{sec:intro-frd}

Recently, there has been some interest in the finite range decompositions of gradient Gaussian  fields on $\Z^{d}$.   
In particular, in \cite{MR2995704}, S.~Adams, R.~Koteck\'y and S.~M\"uller construct a  finite range decomposition for a family of translation invariant gradient Gaussian fields on $\Z^d $ ($d \geq 2$) which depends real-analytically on the quadratic from that defines the Gaussian field: they consider a large torus $\T^{d}_{N}:=(\Z/L^N \Z)^d$ and obtain a finite range decomposition with estimates  that do not depend on  $N$.

More precisely they consider a constant coefficient discrete elliptic system $\Acal = \nabla* A \nabla$ and show that its Green's function  $G(\cdot,\cdot) $ can be decomposed  as
\begin{equation*} %---{{{
   \begin{split}
      G_{A}(x,y) = \sum_{k} G_{A,k}(x,y)
   \end{split}
\end{equation*} %---}}}
where $G_{A}(\cdot,\cdot)$ have finite range \ie
\begin{equation*} %---{{{
   \begin{split}
      G_{A,k}(x,y) = 0 \qquad \text{whenever  } |x-y | > L^{k}
   \end{split}
\end{equation*} %---}}}
and they are positive definite \ie $\sum_{x,y} \varphi(x)G_{A,k}(x,y)\varphi(y) \geq 0$ for every $\varphi:\T^{d}_{N}\to \R^{m} $. 
Moreover they prove optimal estimates for $D^{\beta}\nabla^{\alpha }G_{A,k}$. 

We improve their result by extending it to the space dependent case. Namely, we consider an elliptic operator of the form $\Acal = \nabla* A \nabla$, where $A=A(x)$ is dependent on the space variable. Then we show that its Green's function can be written as the sum of positive and finite range functions $G_{A,k}(x,y)$

Looking at their proof this extension is highly non-trivial. Indeed, their proof uses both careful Fourier Analysis and Combinatorial techniques, which due to the space dependence, neither of them  seem to apply. Our approach takes a different route: we use $L^{p} $-theory arguments. Because some of this well-known $L^p$-estimates are not present in the discrete setting, we also need to prove the $L^p$ estimates for the discrete setting. 
As a byproduct, we are also able to prove the equivalent of the Finite range Decomposition in the continuous setting which to our knowledge is also not known.

The manuscript is organized as follows: in \Sec~\ref{sec:Preliminary Results}, we give a brief introduction to the results contained in \cite{MR2995704}, introduce some notation; in \Sec~\ref{sec:hypothesis} state our main result; in \Sec~\ref{sec:outlie} we give an outline of the proof in the continuous setting, hoping that this will make the proof easier to understand due to smaller notation, in \Sec~\ref{sec:construction-frd} we briefly discuss the construction of the finite range decomposition;  in \Sec~\ref{sec:discrete-estimates} we show extend $L^p$-theory to the discrete setting and show how to obtain the bounds; finally in \Sec~\ref{sec:analytic-dependence} we briefly discuss how to prove the bounds the derivative of $A$. Because the construction and the analyticity (\Sec~\ref{sec:construction-frd}, \Sec~\ref{sec:analytic-dependence}) are basically the same as in \cite{MR2995704}, we only sketch their proof.

\section{Preliminary Results} 
\label{sec:Preliminary Results}

In this section we are going to describe \emph{briefly} the results in \cite{MR2995704}.

Before writing precisely the statements contained in \cite{MR2995704}. We would like to introduce some notation. 
We will fix  a positive integer $N$ and odd integer $L>3$. 
The torus of size $N$ is defined as $\T^d_{N}:= (\Z/L^{N}\Z)^d $. The space of all function on $\T^{d}_N$ with values in $\R^m $ will be denoted by 
\begin{equation*} %---{{{
   \begin{split}
      \mathbf{X}_{N}:= (\R^{m})^{\T^{d}_{N}}= \insieme{\varphi:\Z^{d}\to \R^{m}:\ \varphi(x+z) = \varphi(x),\ \forall \varphi (L^{N}\Z)^d}.
   \end{split}
\end{equation*} %---}}}

This space will be endowed with with $\ell_2 $-scalar product, \ie
\begin{equation*} %---{{{
   \begin{split}
      \scalare{\varphi,\psi} =\sum_{x\in \T^{d}_{N}} \scalare{\varphi(x),\psi(x)}_{\R^{m}}.
   \end{split}
\end{equation*} %---}}}

In the last section, the $\mathbf{X}_{N} $ will be complexified and  will be substituted by the appropriate Hermitian inner product.  

We also define   
\begin{equation*} %---{{{
   \begin{split}
      \dist(x,y)&:=\inf\insieme{\abs{x-y+z}\colon z\in (L^N\mathbb Z)^d},\\
      \dist_\infty(x,y)&:= \inf\insieme{\abs{x-y+z}_\infty\colon  z\in (L^N\mathbb Z)^d},
   \end{split}
\end{equation*} 
and with a slight abuse of notation
\begin{equation*} 
   \begin{split}
      \dist_\infty(x,M) := \min \{ \rho_\infty(x,y)\colon y \in M \}.
   \end{split}
\end{equation*}

Gradient Gaussian fields are naturally defined on 
\begin{equation}
   \mathcal{X}_N:=\{\varphi\in \mathbf{X}_N: \sum_{x\in \mathbb T_N }\varphi(x)=0\}.
\end{equation}

For any set $M \subset \Lambda_N$, we define its closure by 
\begin{equation}
   \overline M=\{x\in \Lambda_N\colon \dist_\infty(x,M)\le 1\}. 
\end{equation}

The forward  and backward derivative are defined as
\begin{equation}
      (\nabla_j \varphi)(x):=\varphi(x+ e_j)-\varphi(x)\quad \text{and}\quad (\nabla^*_j\varphi)(x):=\varphi(x-e_j)-\varphi(x), 
\end{equation}

Until the end of this section we will denote by $A:\R^{m\times d}\to\R^{m\times d}$ a linear, symmetric and positive definite matrix.

The Dirichlet form on $\mathcal{X}_N$ is defined by, 
\begin{equation*}
   \scalare{\varphi,\psi}_+:=\sum_{x\in \T^{d}_N} \scalare{ A(\nabla\varphi(x)),\nabla\psi(x)}_{\R^{m\times d}},
\end{equation*}
where $\varphi,\psi: \mathcal{X}_{N}\to \R^{m}$. 

It is not difficult to notice that $(\cdot,\cdot)_+$, defines a norm on $\mathcal{X}$. 
Moreover, we will use $\|\cdot \|_{2}$ and $\|\cdot \|_{-} $ to denote the standard $\ell_{2}$ and the dual norm of $\|\cdot \|_{+}$; we will use $\cH_+ ,\  \cH ,\  \cH_-$ to denote $\mathcal{X}$ endowed with the norms $\|\cdot \|_{+} $, $\|\cdot \|_{2} $ and $\|\cdot \|_{-} $ respectively. 

Consider now the Green's operator  $\Cs_A:={\As}^{-1}$ of the operator $\As$ and the corresponding bilinear form on $\mathcal{X}_N $ defined by
\begin{equation*}
   \mathcal{G}_A(\varphi,\psi)=\langle\Cs_A\varphi,\psi\rangle=(\varphi,\psi)_-,\quad \varphi,\psi\in\mathcal{X}_N.
\end{equation*}

Given that the operator $\As$  and its inverse commutes with translations on $\T_N$, there exists a unique kernel  $\Ccal_A$ such that
\begin{equation}
   (\Cs_A\varphi)(x)=\sum_{y\in\T_N}\Ccal_A(x-y)\varphi(y).
\end{equation}
It is easy to see that  the function $G_{A,y}(\cdot)=\Ccal_A(\cdot-y)$ is the unique solution(with zero-mean) of the equation 
\begin{equation}
   \As G_{A,y}=\bigl(\delta_y -\frac1{L^{Nd}}\bigr) \I,
\end{equation}
where $\I$ is the unit $m\times m$ matrix.

Notice that for any $a\in\R^m$ one has:
\begin{equation*} 
   \begin{split}
      (\As G_{A,y}) =\bigl(\delta_y -\frac1{L^{Nd}}\bigr)  \in \mathcal{X}_N.
   \end{split}
\end{equation*}

In \cite{MR2995704}, among other things, the following result is proved:  

\begin{theorem}[{\cite{MR2995704}}]
   \label{thm:AKM_FRDfamily} 
   For any $d \geq2  $ and any multiindex $\alpha $, there exists a constant $C_{\alpha}(d),\ \eta_{d}(\alpha)$ such that the following properties hold:

For any integer $N\geq1 $, every $k=1,\ldots,N+1$ and every odd integer $L\geq 16 $, the map $A\mapsto C_{A,k}$ is real-analytic and
\begin{enumerate} %---{{{ 
   \item There exist positive definite operators $\mathcal C_{A,k}$ such that 
      \begin{equation*} %---{{{
         \begin{split}
           \mathscr C_{A} = \sum_{k=1}^{N+1} \mathscr C_{A,k}.
         \end{split}
      \end{equation*} %---}}}
   \item There exist constants $C_{A,k} $ such that 
      \begin{equation*} %---{{{
         \begin{split}
            \mathcal{C}_{A,k} = C_{A,k} \quad\text{whenever }\dist_{\infty}(x,0) > 1/2L^{k}
         \end{split}
      \end{equation*} %---}}}
   \item Let $A_{0} $ be such that $\scalare{A_{0}F,F}_{\R^{m}} \geq c_{0}\|F\|_{\R^{m\times d}}$.  Then
      \begin{equation*} %---{{{
         \begin{split}
            \sup_{\|\dot{A}\|\leq 1} \big\| (\nabla^{\alpha} D^{j}_{A} \mathcal C _{A_{0},k}(x)(\dot{A},\ldots,\dot{A})) \big\|\leq C_{\alpha}(d) \big ( \frac{2}{c_{0}}\big)^{j} j! L^{-(k-1)(d-2+|\alpha|)}L^{\eta_d(\alpha)},
         \end{split}
      \end{equation*} %---}}}
      where $D^{j}_A $ denotes the $j $-th derivative with respect to $A $ and $\|A\|$, denotes the operator norm of a linear mapping $A:\R^{m\times d}\to \R^{m\times d}$. 
\end{enumerate} %---}}}
\end{theorem}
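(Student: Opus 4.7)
The natural approach is to exploit translation invariance by passing to Fourier variables on $\T^d_N$. Since $\As = \nabla^{*}A\nabla$ commutes with translations, its Green's operator $\Cs_A$ is diagonalized by the discrete Fourier transform: for $q$ in the dual torus with $q \ne 0$, $\hat{\Cs}_A(q) = \hat{\As}(q)^{-1}$, where the $m \times m$ matrix-valued symbol $\hat{\As}(q)$ is quadratic in the vector $w(q)$ encoding the discrete gradient. The task then reduces to splitting the symbol $\hat{\As}(q)^{-1}$ into $N+1$ positive semidefinite pieces, each of which is the Fourier transform of a kernel supported in a cube of side $L^k$.

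\textbf{Telescoping via finite-range averaging.} In the spirit of Brydges--Guadagni--Mitter, I would fix once and for all a finite-support, positive definite kernel $\chi$ on $\Z^d$ whose Fourier transform is smooth and equals $1$ near the origin. Rescaling produces a tower of finite-range averaging convolutions $\Pi_k$ at scale $L^k$, with $\Pi_0=\id$ and $\Pi_{N+1}$ equal to the projection onto constants; then set
\begin{equation*}
   \Cs_{A,k} = (\Pi_{k-1} - \Pi_k)^{*}\, \Cs_A\, (\Pi_{k-1} - \Pi_k).
\end{equation*}
Telescoping gives (i); the cube-support of each convolution factor gives (ii), since the support of the kernel $\Ccal_{A,k}$ is contained in the Minkowski sum of the finite supports involved; positive-definiteness of each summand is automatic from the sandwich form.

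\textbf{Sharp estimates and analyticity.} For (iii), the Fourier symbol of $\Cs_{A,k}$ is $\hat{\As}(q)^{-1}\,|\hat\Pi_{k-1}(q) - \hat\Pi_k(q)|^2$, concentrated on the annulus $|q| \sim L^{-k}$. Ellipticity $\hat{\As}(q) \ge c_0 |w(q)|^2$ together with $|w(q)| \sim |q|$ near $0$ makes the symbol of size $L^{2k}/c_0$ on this annulus; a discrete derivative $\nabla^\alpha$ corresponds to multiplication by a product of entries of $w(q)$, of size $L^{-k|\alpha|}$, and a Riemann-sum over frequencies in an annular band of $d$-dimensional volume $L^{-kd}$ supplies the final factor $L^{-kd}$. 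Combining the three contributions yields the stated bound $L^{-(k-1)(d-2+|\alpha|)}$. For the analytic dependence, expand $(\hat{\As}_{A_0}(q) + t\widehat{\dot A}(q))^{-1}$ as a Neumann series in $\hat{\As}_{A_0}(q)^{-1}\widehat{\dot A}(q)$; each power contributes a factor $c_0^{-1}\|\dot A\|$, and the $j$-multilinear symmetrization of $D^j_A$ produces $j!$, with powers of $2$ absorbing combinatorial losses from Leibniz on the product symbol.

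\textbf{Main obstacle.} The delicate engineering step is arranging $\chi$ so that $\Pi_k$ is \emph{simultaneously} positive semidefinite, finitely supported in a cube of side $L^k$, and smooth enough in Fourier space for the annular multiplier $|\hat\Pi_{k-1} - \hat\Pi_k|^2$ to deliver the \emph{optimal} exponent $d-2+|\alpha|$ rather than a lossy one. In addition, the zero-frequency singularity of $\hat{\As}(q)^{-1}$ must be reconciled with the restriction to the mean-zero subspace $\Xcal_N$ throughout the telescoping, and the dyadic tower must be compatible with the single fixed base $L$; this is where the careful Fourier and combinatorial analysis of \cite{MR2995704} is required.
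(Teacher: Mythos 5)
There is a genuine gap, and it sits exactly at the two points your plan treats as automatic. First, the decomposition $\Cs_{A,k}=(\Pi_{k-1}-\Pi_k)^{*}\,\Cs_A\,(\Pi_{k-1}-\Pi_k)$ does \emph{not} telescope: in Fourier variables its symbol is $\hat{\As}(q)^{-1}\,|\hat\Pi_{k-1}(q)-\hat\Pi_k(q)|^{2}$, and $\sum_k|\hat\Pi_{k-1}(q)-\hat\Pi_k(q)|^{2}\neq\big|\sum_k(\hat\Pi_{k-1}(q)-\hat\Pi_k(q))\big|^{2}=|\hat\Pi_0(q)-\hat\Pi_{N+1}(q)|^{2}$, so $\sum_k\Cs_{A,k}\neq\Cs_A$; a sum of squares is not the square of the telescoping sum. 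If you instead take the genuinely telescoping choice $\Pi_{k-1}^{*}\Cs_A\Pi_{k-1}-\Pi_k^{*}\Cs_A\Pi_k$, positive definiteness of each increment is no longer ``automatic from the sandwich form'' --- it is precisely the hard point. Second, the finite-range claim fails: the Minkowski-sum-of-supports argument would require every factor to have finite range, but the Green's operator $\Cs_A$ in the middle has full range on the torus, so $(\Pi_{k-1}-\Pi_k)^{*}\Cs_A(\Pi_{k-1}-\Pi_k)$ has a kernel that is generically nonzero at all distances. Compactly supported, positive definite convolution kernels cannot rescue both properties simultaneously in this sandwich form; this tension (positivity vs.\ exact telescoping vs.\ finite range) is the actual content of the theorem, not an engineering detail about the smoothness of $\hat\chi$.

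The constructions that do work resolve this differently. The route followed in \cite{MR2995704} (and sketched in \Sec~\ref{sec:construction-frd}) takes $\Ts\varphi=l^{-d}\sum_{x}\Pi_{x+Q}\varphi$, where $\Pi_{x+Q}$ is the \emph{Dirichlet-form projection} onto functions supported in the translated cube (not a convolution), sets $\Rs=\id-\Ts$, $\Rs'=\As\Rs\As^{-1}$, and defines
\begin{equation*}
   \Cs_{A,k}=(\Rs_{1}\cdots\Rs_{k-1})\,\Cs_A\,(\Rs'_{k-1}\cdots\Rs'_{1})-(\Rs_{1}\cdots\Rs_{k})\,\Cs_A\,(\Rs'_{k}\cdots\Rs'_{1}),
\end{equation*}
which telescopes exactly; positivity comes from $\Ts$ being a self-adjoint contraction in the energy inner product (so $\Cs-\Rs\Cs\Rs'\geq 0$), and the finite-range property comes from the locality of the projections together with harmonicity of $\Cs_A\varphi$ away from the support of $\varphi$ --- not from support arithmetic of kernels. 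The sharp bounds and the real-analyticity in $A$ in the constant-coefficient case are then obtained in \cite{MR2995704} by Fourier analysis combined with nontrivial combinatorics for the iterated products of $\Rs$'s; your Neumann-series remark captures the analyticity mechanism in spirit, but it is applied to the correctly constructed $\Cs_{A,k}$, not to a pure symbol calculus. (An alternative that does realize a ``simple'' decomposition is Bauerschmidt's method via finite propagation speed of the discrete wave equation, but that again is structurally different from a fixed-kernel sandwich.)
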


\section{Notation and Hypothesis} 
\label{sec:hypothesis}

Let $\bar A:\T^d\to \Lcal_{\rm sym}(\R^{m\times d})$ be a $C^{3}$ function, where 
$\Lcal_{\rm sym}(\R^{m\times d})$ is the space of linear maps on $\R^{m\times
   d}$ such that $A=A^{*}$ and the associated operator is elliptic, namely there
exists a constant $c_1,c_0 >0$ such that 
\begin{equation}
   \label{eq:ellipticity}
   \begin{split}
      c_1 |P|^2 \geq \bar A_{i,j}^{\alpha,\beta} P_{\alpha}^{i} P_\beta^j \geq 
      c_0 |P|^2\qquad \forall P\in \R^{m\times d}
   \end{split}
\end{equation}
and  there exists an $\varepsilon_0>0$ (small enough)  
such that 
\begin{equation}
   \label{eq:cond-bar-A}
   \begin{split}
      \sum_{|\gamma  |\leq 3} \sup_{\T^d} |D^{\gamma }  \bar{A}_{i,j}^{\alpha,\beta}| \leq \varepsilon_{0},
   \end{split}
\end{equation}
where $\gamma $ is a multi-index.

For every $N>1$, we define the function $A_N:\T^d_N\to \Lcal_{\rm sym}(\R^{m\times d})$  in the following natural way:
\begin{equation}
   \label{eq:def-A_N}
   \begin{split}
      A_{N}(x)=\bar A(x/L^{N}).
   \end{split}
\end{equation}

The condition~\eqref{eq:cond-bar-A}, can be expressed in terms of $A_{N}$ as
\begin{equation} 
   \label{eq:cond-A_N}
   \begin{split}
      \sup_{|\gamma  |\leq 3}\sup_{\T^d_{N}} L^{N|\gamma  |}|\nabla 
      ^{\gamma } ( {A}_{N} )_{i,j}^{\alpha,\beta}| \leq \varepsilon_{0}.
   \end{split}
\end{equation} 
On the other hand, if there exists a $A_{N}$ such that \eqref{eq:cond-A_N} holds, then by some elementary interpolation one can construct a $\bar{A}$ such that \eqref{eq:def-A_N}  holds.  

Given that we will mainly work for $N$ fixed, if it is clear from the context we will drop the $N$-subscript.

We denote by $\Ecal\subset \insieme{q:\ \T^{d}_{N}\to \Lcal_{\rm sym}(\R^{m\times d})}$ such that 
there exist  constants $c_{0},c_{1} \geq 0 $ such that for every 
$x\in T^{d}_{N} $  and $F\in M_{\rm sym}(\R^{m\times d})$, it holds 
\begin{equation*} 
   \begin{split}
      c_{0}  \scal{F}{F} \leq\scal{q(x) F}{F}\leq c_{1} \scal{F}{F}.
   \end{split}
\end{equation*} 
The space $\Ecal$, is not a vector space. 
It will be endowed with the distance induced by the norm norm
\begin{equation*} 
   \begin{split}
      \|q \|_{\Ecal} =\sup_{x\in\T^{d},|\beta|\leq 3} \| 
      L ^{|\beta|N}\nabla^{\beta} q(x)\|_{M_{\rm sym}( \R^{m\times d} )},
   \end{split}
\end{equation*} 
where $\beta $ is a multiindex.  

Similarly as before, we introduce the following notations:
\begin{equation}
   \mathcal{X}_N:=\{\varphi\in \mathbf{X}_N: \sum_{x\in \mathbb T_N }\varphi(x)=0\},
\end{equation}
and 
\begin{equation*}
   \begin{split}
      \As\colon \cH_+\to\cH_-,\quad \varphi\mapsto \As\varphi:=\nabla^*(A\nabla\varphi).
   \end{split}
\end{equation*}
As in \Sec~\ref{sec:intro-frd}, let 
$\Ccal_{A}:\T^{d}_{N}\times  \T^{d}_{N} \to \R^{m\times d}$ such that
\begin{equation*}
   \As \Ccal_{A,y}=\bigl(\delta_y -\frac1{L^{Nd}}\bigr).
\end{equation*}

We will extend Theorem~\ref{thm:AKM_FRDfamily} in the following way:

\begin{theorem}\label{thm:FRD-mio}
   Let $d\geq 3$,  $A_{N}$ be defined as above.   
   Then there exists $\varepsilon _{0}>0$, $C_{d}(\alpha)$ and  $\eta_{d}(\alpha)$, such that for every $\varepsilon <\varepsilon _0$
   the operator $\Cs_A \colon \cH_-\to\cH_+$, where $\|A\|_{\Ecal}\leq \varepsilon$, admits a finite range 
   decomposition, i.e., there exist  positive-definite 
   operators 
   \begin{equation}
      \Cs_{A,k} \colon \cH_-\to\cH_+,\  (\Cs_{A,k}\varphi)(x)=\sum_{y\in\T^{d}_N}\Ccal_{A,k}(x,y)\varphi(y),\     k=1,\dots, N+1,
   \end{equation}
   such that 
   \begin{equation*}
      \Cs_A=\sum_{k=1}^{N+1} \Cs_{A,k},
   \end{equation*}
   and for associated kernel $\Ccal_{A,k}$, there exists a constant matrix $C_{A,k}$ such that
   \begin{equation*}
      \Ccal_{A,k}(x,y)=  C_{A,k}\ \text{ whenever } \   \dist_\infty(x,y)\geq \frac{1}{2} L^k\quad \mbox{ for } k=1,\dots,N  .
   \end{equation*}
   Moreover,  if 
   $(A_0 F, F)_{\R^{m\times d}} \geq  c_0 \norm{F}_{\R^{m\times d}}^2$  for all 
   $F \in \R^{m \times d}$ and $c_0 > 0$ and if 
   $\|A\|_{\Ecal} \leq 1/2$ then 
   \begin{equation*}
      \sup_{\norm{\dot{A}}\le 1}\Big\|\big(\nabla^{\alpha}_{y}D_A^j\Ccal_{A_0,k}(x,y)(\dot{A},\ldots,\dot{A})\Big\|
      \le C_{\alpha}(d) \left(\frac{2}{c_0}\right)^j  j! \, L^{-(k-1)(d-2+|\alpha|)}L^{\eta(\alpha, d)}.
   \end{equation*}
\end{theorem}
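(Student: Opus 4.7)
The plan is to retain the algebraic construction of \cite{MR2995704} (which does not use translation invariance of the underlying operator) and to replace their Fourier-analytic derivation of the sharp kernel bounds by $L^p$ elliptic regularity applied to the variable-coefficient operator $\As$.

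\textbf{Construction, positivity, and finite range.} Following \cite{MR2995704}, I would define the $\Cs_{A,k}$ by a multiscale algebraic formula built from $\Cs_A$ together with the conditional-expectation-type projections $\Pi_k$ onto functions that are constant on $L^k$-cubes. Positivity is obtained by exhibiting each $\Cs_{A,k}$ as $R_k^{*} R_k$ for a suitable operator $R_k$, and the finite-range property $\Ccal_{A,k}(x,y)=C_{A,k}$ for $\dist_\infty(x,y)>L^k/2$ follows from the support structure of $\Pi_k-\Pi_{k-1}$ together with locality of $\As$. Neither argument uses that $\As$ commutes with translations, so these steps carry over essentially verbatim, and the author indicates they only need to be sketched.

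\textbf{Sharp kernel bounds.} This is the heart of the matter. By construction the effective source producing $\Ccal_{A,k}(\cdot,y)$ is localised in an $L^k$-cube around $y$ and has vanishing polynomial moments up to some order. To turn these cancellations into the required decay rate I would: (a) establish discrete Calder\'on--Zygmund estimates $\|\nabla^2 u\|_{L^p}\leq C_p\|\As_0 u\|_{L^p}$ for the constant-coefficient operator $\As_0=\nabla^{*}(A_0\nabla)$ on $\T^d_N$, $1<p<\infty$, via the Mihlin multiplier theorem applied to its explicit Fourier symbol, with constants uniform in $N$; (b) treat $\As-\As_0$ as a small perturbation (controlled by $\|A\|_\Ecal\leq\varepsilon_0$) and bootstrap via a Neumann series, using $A\in C^3$ to upgrade to $W^{m,p}$ bounds for $m$ up to $4$; (c) write $\nabla^\alpha_y\Ccal_{A,k}(x,y)$ as $\Cs_A$ applied to the mean-zero source, Taylor-expand its kernel around a point inside the $L^k$-cube, and use the $W^{m,p}$ estimates to control the remainder, the moment cancellations absorbing the leading Taylor terms and producing the factor $L^{-(k-1)(d-2+|\alpha|)}$, with the $L^{\eta(\alpha,d)}$ slack absorbing constants from Sobolev embedding. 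I expect (c) to be the main technical obstacle: without translation invariance one must rescale each $L^k$-cube to unit size and verify that the constants in the perturbative regularity argument do not blow up in $k$, in particular that the estimates of step (b) are scale-invariant after the substitution $x\mapsto L^k x$.

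\textbf{Analyticity in $A$.} For the $j$-th derivative with respect to the coefficient field, I would iterate the resolvent identity $D_A\Cs_A\cdot\dot A=-\Cs_A\,\nabla^{*}(\dot A\,\nabla \Cs_A\,\cdot\,)$ exactly as in \cite{MR2995704}; Cauchy estimates on a complex ball of radius $c_0/2$ in coefficient space produce the combinatorial factor $(2/c_0)^j j!$, while at each step the spatial estimates reduce to those established for $j=0$ via the same $L^p$ machinery.
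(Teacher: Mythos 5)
Your outline of the construction/positivity/finite-range part and of the analyticity argument is consistent with the paper (the latter is exactly the complexification-plus-Cauchy-estimates argument, giving the $(2/c_0)^j j!$ factor from a ball of radius $c_0/2$). Two caveats already there: the projections used here and in \cite{MR2995704} are the averaged Dirichlet-form projections $\Pi_{x+Q}\colon \cH_+\to W^{1,2}_0(x+Q)$ (harmonic replacement, averaged over translates of the cube), not conditional expectations onto functions constant on $L^k$-cubes, and positivity is obtained from the quadratic-form inequalities for $\Ts'$, not by exhibiting $\Cs_{A,k}=R_k^*R_k$; since you defer to \cite{MR2995704} these are repairable, but they matter for the next point.

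The genuine gap is in your step (c), which is the heart of the theorem. The required bound is $\sup_{x,y}|\nabla^\alpha_y\Ccal_{A,k}(x,y)|\lesssim L^{-(k-1)(d-2+|\alpha|)}$ \emph{uniformly in $x,y$, including on the diagonal}: at $\dist(x,y)\sim L^k$ the Green's function already has this size, so the content of the estimate is that the first $k$ averaging steps smooth out the singularity of $\Ccal_A$ at $x=y$. A localized, mean-zero effective source with vanishing moments plus a Taylor expansion controls decay \emph{away} from the source, but says nothing about the size at or near the diagonal unless you additionally prove that the source $\Rs_k'\cdots\Rs_1'(\delta_y-L^{-Nd})$ is uniformly spread out at scale $L^k$ (pointwise of order $L^{-kd}$); proving that is equivalent to the iterated smoothing estimates you have skipped. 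Moreover, the claimed higher moment cancellations are not a feature of the AKM-type construction (only mean zero is automatic), so the mechanism you invoke is not available. What the paper does instead is: prove $|\nabla^\alpha\Ccal_A(x,y)|\lesssim|x-y|^{2-d-|\alpha|}$ by a discrete Dolzmann--M\"uller $L^p$ theory (Caccioppoli, discrete Sobolev, weak-type interpolation, maximal function and Fefferman--Stein, a constant-coefficient $L^\infty\to\bmo$ bound, and a fixed-point perturbation using the $\varepsilon_0$-smallness of $\nabla A$ — no Fourier multipliers, which is the point of avoiding translation invariance), see Proposition~\ref{prop:stima-green}, Lemmas~\ref{lemma:dolzman-mueller-lemma-2}--\ref{lemma:dolzmann_mueller_lemma_3}; then prove the pointwise bound on iterated projections $P_{Q_1}\cdots P_{Q_k}\Ccal(x,\cdot)$ in terms of $\max(|x-y|,\dist(\cdot,\partial Q_i))^{2-d}$ (Proposition~\ref{proposition:pre_kryekryesorja}); then average over cube translates to gain a factor essentially $1/r_i$ per scale (Theorem~\ref{thm:diskrete-kryekryesorja}); finally use $\Rs'=\As\Rs\As^{-1}$, so that $\Rs_1\cdots\Rs_k\Cs\Rs_k'\cdots\Rs_1'=\Rs_1\cdots\Rs_k\,\Rs_k\cdots\Rs_1\Cs$, and the accumulated gains over the $\sim d-2+|\alpha|$ largest scales give $L^{-(k-1)(d-2+|\alpha|)}L^{\eta}$. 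Unless you replace your step (c) by estimates of this kind (or prove the uniform spreading of the effective source, which amounts to the same thing), the proposal does not yield the stated bound; your steps (a)--(b) are plausible but are not the missing ingredient.
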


\section{Outline of the proof in the continuous case} 
\label{sec:outlie}

Before going to the discrete setting, we would like to briefly expose the basic idea 
in the continuous case. 

In what follows, we will use the symbol $\lesssim $ to indicate an inequality is 
valid up to universal constants depending eventually on the dimensions $d,m$.  

For the sake of simplicity, we take $A=A(x)$ be elliptic with $A$ smooth.

Let $B$ be a ball,  $\Pi_B: W^{1,2}(\R^n)\to W^{1,2}_{0} (B)$ be the 
projection operator.   Moreover, we define $P_B := \id -\Pi_{B}$.

The construction technique is due to Brydges \etal (see 
\cite{MR2070102,MR2240180}) and consists in  considering  the operators 
\begin{equation*}
   \begin{split}
      \Ts_{B} f := \frac{1}{|B|}\int _{\T^d} \Pi_{x+B} f\dx \qquad \text{and} \qquad
      {\Rs_B} := \id - \Ts_B. 
   \end{split}
\end{equation*}
Let $r_1,\ldots,r_k>0$ and $B_{r_1},\ldots,B_{r_k}$ be the balls of radius $r_k$ centered in $0$. 
Whenever it is clear from the context, we will denote by $\Rs_{k}:=\Rs_{B_{k}}$.

The operators $\Cs_k$ that appear in the Theorem~\ref{thm:AKM_FRDfamily} and Theorem~\ref{thm:FRD-mio}, will be of the form
\begin{equation*}
   \Cs_k : =(\Rs_{1}\dots \Rs_{k-1} )\Cs(\Rs_{k-1}^\prime\dots \Rs_{1}^\prime)
   - (\Rs_{1}\dots \Rs_{k-1} \Rs_k)\Cs(\Rs_k^\prime  \Rs_{k-1}^\prime  
   \dots \Rs_{1}^\prime), \ k=1,\dots, N,  
\end{equation*}
for a particular choice of $\{ r_{k}\}$. 

Then the proof of the finite range property will follow by abstract reasoning (see $\S$~\ref{sec:construction-frd}).

In \cite{MR1354111}, among other things the authors show:
\begin{thm}[{\cite[Theorem~1]{MR1354111}}] 
   \label{thm:dolzman-mueller-orig-thm1}
   Let $\Omega $ be  a regular domain and 
   $A^{\alpha ,\beta }_{i,j}\in C^{k,\alpha }(\bar{\Omega })$ for some $\alpha \in (0,1)$ such that
   \begin{equation*} 
      \begin{split}
         A_{i,j}^{\alpha ,\beta } P^{i}_{\alpha } P^{j}_{\beta } > c |P 
         |^{2}, \qquad \text{for some } c>0 \text { and every } P\in \R^{d\times m}.
      \end{split}
   \end{equation*} 
   Then there exists a matrix $G_{y}$ such that
   \begin{equation*} 
      \begin{split}
         -D_{\alpha }(A_{i,j}^{\alpha ,\beta }D_{\beta }( G_{y} )^{j}_{k}) = 
         \delta _{i,k}\delta _{j}\qquad\text{in  } \Omega
      \end{split}
   \end{equation*} 
   in the sense of distributions and
   \begin{equation*} 
      \begin{split}
         G_{y}=0\qquad\text{ on }\partial \Omega.
      \end{split}
   \end{equation*} 
   Moreover, it holds
   \begin{equation*} 
      \begin{split}
         |D^{\nu } G(x,\cdot ) | \leq C |x-y |^{2-d - |\nu  |},
      \end{split}
   \end{equation*} 
   where $\nu $ is  a multi-index such that $|\nu  |\leq k$.  

\end{thm}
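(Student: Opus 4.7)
The plan is to establish the theorem in three stages: existence of the Green's matrix $G_y$, the pointwise bound $|G(x,y)|\lesssim |x-y|^{2-d}$, and promotion to bounds on derivatives.

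For existence I would regularize: replace the Dirac mass by a smooth mollifier $\rho_{\varepsilon}(\cdot - y)$ and solve the Dirichlet problem $L u^{\varepsilon,k} = \rho_\varepsilon(\cdot-y)e_k$ with $u^{\varepsilon,k}\in W^{1,2}_0(\Omega)$ for each $k=1,\dots,m$ via Lax--Milgram; the ellipticity \eqref{eq:ellipticity} (after a trivial modification of \eqref{eq:ellipticity} to the Legendre--Hadamard or strong form covered by the hypothesis) makes the bilinear form coercive. Uniform local estimates away from $y$, combined with Sobolev embedding in dimension $d\geq 3$, allow passing to the limit $\varepsilon\to 0$ and defining $(G_y)^{j,k}$ as the $j$-th component of the limit. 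The boundary values are preserved by the $W^{1,2}_0$ convergence.

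For the pointwise bound $|G(x,y)|\leq C|x-y|^{2-d}$, the scalar approach via De~Giorgi--Nash--Moser is not available for general systems. I would instead freeze coefficients at $y$: introduce the constant-coefficient operator $L_y = -D_\alpha(A^{\alpha\beta}(y)D_\beta\cdot)$, whose explicit fundamental solution $\Gamma_y$ satisfies $|D^\nu\Gamma_y(x-y)|\lesssim |x-y|^{2-d-|\nu|}$. Decompose $G_y = \Gamma_y\chi + H$ where $\chi$ is a smooth cutoff near $y$; then $H$ solves an elliptic system with right-hand side $(L-L_y)(\Gamma_y\chi)$ plus a smooth cutoff error. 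By the H\"older continuity of $A$, the leading part of the right-hand side decays like $|x-y|^{-d+\alpha}$, so Newtonian-potential (Calder\'on--Zygmund type) estimates for $L_y$ give boundedness and the correct decay for $H$, hence for $G_y$.

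The derivative estimate follows from interior Schauder theory by rescaling. Fix $x\neq y$, set $r=|x-y|/4$ and $v(z):=G(x+rz,y)$ for $z\in B_1$. Then $v$ satisfies a homogeneous elliptic system with coefficients $A_r(z):=A(x+rz)$ whose $C^{k,\alpha}$-seminorms are \emph{reduced} by factors $r^{|\gamma|+\alpha}$, while $\|v\|_{L^\infty(B_1)}\lesssim r^{2-d}$ from the previous step. Interior Schauder estimates then give $\|D^\nu v\|_{L^\infty(B_{1/2})}\lesssim \|v\|_{L^\infty(B_1)}$ for $|\nu|\leq k$, and unscaling yields $|D^\nu G(x,y)|\lesssim r^{-|\nu|}r^{2-d}= C|x-y|^{2-d-|\nu|}$.

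The main technical obstacle is the $L^\infty$-bound in the second step: no maximum principle is available for general elliptic systems, so the perturbative argument must carefully track the interplay between the H\"older seminorm of the coefficients and the decay produced by the Newtonian potential. Once this is in hand, the derivative bound is a nearly mechanical scaling/Schauder argument, and the existence step is standard approximation.
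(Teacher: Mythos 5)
Your outline is sound in spirit, but it takes a genuinely different route from the one this paper relies on. The paper does not reprove the theorem: it quotes it from \cite{MR1354111} and records that the proof there runs entirely through $L^p$-theory, namely the global weak-type (Lorentz) estimates for the zero-boundary-value problem (Lemma~\ref{lemma:dolz-mue-orig-2}) combined with the local sup-estimate (Lemma~\ref{lemma:dolz-mue-orig-3}) applied on balls of radius comparable to $|x-y|$, where the system is homogeneous; higher derivatives are obtained by differentiating the equation and treating the commutator term involving $\nabla A$ as a new divergence-form right-hand side, then iterating the same two lemmas --- no parametrix, no Schauder theory. This is exactly the scheme the paper transplants to the discrete torus (Proposition~\ref{prop:stima-green}, via Lemma~\ref{lemma:dolzman-mueller-lemma-2} and Lemma~\ref{lemma:dolzmann_mueller_lemma_3}), and that is the real payoff of the $L^p$ route: weak-type global estimates plus a local boundedness lemma discretize painlessly, and they give bounds up to the boundary with zero Dirichlet data for free, whereas your freezing/Calder\'on--Zygmund/Schauder package needs H\"older coefficients already for the zeroth-order bound and extra work near $\partial\Omega$.

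Two steps in your sketch are thinner than you acknowledge. First, $H$ solves the \emph{variable}-coefficient system, so constant-coefficient potential estimates for $L_y$ do not apply directly; you must either run Levi's parametrix iteration or set up a fixed point, which requires the oscillation of $A$ on the support of $\chi$ to be small (shrink the cutoff), and when $y$ is close to $\partial\Omega$ the parametrix $\Gamma_y\chi$ no longer vanishes on $\partial\Omega$, so a boundary correction is needed. Second, your derivative estimate is purely interior: with $r=|x-y|/4$ the ball $B_r(x)$ may exit $\Omega$, while the statement concerns the Green matrix with zero boundary values on a regular domain, so near the boundary you need boundary Schauder estimates for the Dirichlet problem rather than the interior rescaling you describe. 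Both points are repairable with standard tools, but they are precisely the complications that the weak-$L^p$ argument of \cite{MR1354111}, and its discrete adaptation in this paper, is designed to avoid.
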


The above theorem is proven by using the following well-known $L^{p}$-estimates.

\begin{lemma} 
   \label{lemma:dolz-mue-orig-2}
   Suppose the same hypothesis as in Theorem~\ref{thm:dolzman-mueller-orig-thm1} 
   and let $p\in (1,\infty )$, $q\in (1,n)$.  
   \begin{enumerate} 
      \item If 
         $f\in L^{p}(\Omega,\R^{m\times d} ), F\in  L^{q}(\Omega ,\R^{m})$, then the system   
         \begin{equation*} 
            \begin{split}
               - D_{\alpha }(A_{i,j}^{\alpha ,\beta } D_{\beta }u^{j}) = 
               D_{\alpha } f^{\alpha }_{j} + F^{i} \qquad \text{in }\Omega,
            \end{split}
         \end{equation*} 
         with boundary condition 
         \begin{equation*} 
            \begin{split}
               u=0\qquad \text{on } \partial \Omega,
            \end{split}
         \end{equation*} 
         has a weak solution in $W^{1,s}(\Omega ; \R^{m})$, where
         \begin{equation*} 
            \begin{split}
               s=\min(p,q^{*}), \qquad q^{*}=\frac{nq}{n-q},
            \end{split}
         \end{equation*} 
         and 
         \begin{equation*} 
            \begin{split}
               \|u \|_{W^{1,s}} \leq C(\|f \|_{L^{p}} + \|F \|_{L^{q}}).
            \end{split}
         \end{equation*} 

      \item If $f\in L^{p,\infty },\ F\in L^{q,\infty }$ then there exists a 
         weak solution that satisfies
         \begin{equation} 
            \begin{split}
               \|u \|_{L^{s^{*},\infty }} + \|Du \|_{L^{s,\infty }}\leq C(\|f \|_{L^{p, \infty }} + \|Du \|_{L^{q,\infty }}).
            \end{split}
         \end{equation} 
   \end{enumerate} 
\end{lemma}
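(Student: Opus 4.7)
My plan is to avoid any appeal to the Green's function of Theorem~\ref{thm:dolzman-mueller-orig-thm1} (whose proof would itself use the present lemma) and instead follow the classical route of freezing-the-coefficient combined with Calder\'on--Zygmund estimates for constant-coefficient systems. The argument splits into three steps: reduction of the source $F$ to a divergence source, a $W^{1,s}$ a priori estimate for purely-divergence data, and the Lorentz upgrade.

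Step 1 (reduction of $F$). Let $v\in W^{2,q}\cap W^{1,q}_0$ solve $-\Delta v = F$ in $\Omega$, which exists with $\|v\|_{W^{2,q}}\leq C\|F\|_{L^q}$ by the classical Calder\'on--Zygmund theory for the Laplacian on the regular domain $\Omega$. By the Sobolev embedding $W^{2,q}\hookrightarrow W^{1,q^*}$ with $q^*=nq/(n-q)$, the vector $\tilde f:=-\nabla v$ satisfies $F=\nabla\cdot\tilde f$ and $\|\tilde f\|_{L^{q^*}}\leq C\|F\|_{L^q}$. It therefore suffices to prove the estimate for $F=0$ and $f$ replaced by $f+\tilde f\in L^{\min(p,q^*)}$.

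Step 2 (divergence data). For $f\in L^2$, existence of $u\in W^{1,2}_0$ with $\|\nabla u\|_{L^2}\leq c_0^{-1}\|f\|_{L^2}$ is the standard Lax--Milgram bound. To upgrade to $L^s$, $s\in(1,\infty)$, I freeze the coefficient: cover $\Omega$ by small balls $B_r(x_0)$ (after boundary flattening using the $C^{k,\alpha}$-regularity of $\partial\Omega$) and on each ball rewrite the system as
\begin{equation*}
   -D_\alpha\bigl(A_{i,j}^{\alpha,\beta}(x_0)D_\beta u^j\bigr) = D_\alpha\bigl(f^\alpha_j+(A(x_0)-A(x))_{i,j}^{\alpha,\beta}D_\beta u^j\bigr).
\end{equation*}
The constant-coefficient solution operator appearing on the left is a classical Calder\'on--Zygmund singular integral (a Fourier multiplier of order zero for $\nabla u$), hence bounded on $L^s$ with a constant $C_s$ uniform in $x_0$ by symbol scaling. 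Choosing $r$ small enough that $\|A-A(x_0)\|_{L^\infty(B_r)}\leq (2C_s)^{-1}$, which is possible by the uniform continuity of $A$, the perturbation is absorbed into the left-hand side. A finite covering of $\Omega$ together with Caccioppoli-type interior/boundary patching then yields the global a priori estimate $\|\nabla u\|_{L^s}\leq C_s\|f\|_{L^s}$ and, via Poincar\'e, the full $W^{1,s}$ bound. Existence at this regularity follows by approximating $f$ by $C^\infty_c$ data, solving via Lax--Milgram in $W^{1,2}_0$, and passing to the limit against the a priori bound.

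Step 3 (Lorentz version and main obstacle). Since $f\mapsto\nabla u$ is bounded on $L^s$ for every $s\in(1,\infty)$ by Step 2, real interpolation (Marcinkiewicz) between two nearby strong-type bounds immediately produces boundedness $L^{s,\infty}\to L^{s,\infty}$. The reduction of $F$ in Step 1 is performed in Lorentz spaces in the same way, via the Lorentz CZ theorem for the Laplacian and the Lorentz Sobolev embedding $W^{2,q,\infty}_0\hookrightarrow W^{1,q^*,\infty}$; the $L^{s^*,\infty}$ bound for $u$ itself then follows from the Lorentz Sobolev embedding $W^{1,s,\infty}_0\hookrightarrow L^{s^*,\infty}$. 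The main technical obstacle is in Step 2: securing a constant $C_s$ in the frozen-coefficient CZ estimate that is uniform in $x_0$ and, more importantly, that survives the straightening of $\partial\Omega$. Both the scaling invariance of the constant-coefficient singular integral and the $C^{k,\alpha}$ regularity of $\partial\Omega$ must be exploited here, essentially reproducing the interior and boundary pieces of the Agmon--Douglis--Nirenberg theory for elliptic systems.
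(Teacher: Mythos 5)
Your proposal is essentially sound, but note that the paper does not prove Lemma~\ref{lemma:dolz-mue-orig-2} at all: it is imported verbatim from Dolzmann--M\"uller \cite{MR1354111} as a known $L^p$ result, and the only place where the paper actually carries out such an argument is the discrete analogue, Lemma~\ref{lemma:dolzman-mueller-lemma-2}, whose proof uses exactly your strategy --- freeze the coefficients at a point, treat $(A-A_0)\nabla u$ as a right-hand side, absorb it via a smallness/fixed-point argument, reduce the zeroth-order datum to divergence form through an auxiliary constant-coefficient problem, and obtain the weak-type bounds by interpolating two strong bounds. So your route coincides with the one behind both the cited source and the paper's discrete version rather than being genuinely different. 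Two caveats on your write-up: first, in Step~3 the tool you need is not the Marcinkiewicz theorem (which goes from weak to strong type) but its converse direction, namely that strong $(p_i,q_i)$ bounds at two exponents imply the intermediate weak-type bound $L^{p,\infty}\to L^{q,\infty}$; this is precisely the content of the paper's Lemma~\ref{dolzman-mueller-lemma-1}, so the substance is fine but the attribution is off. Second, in Step~2 the passage from the local frozen-coefficient estimates to the global bound is thinner than you acknowledge: the cutoffs produce lower-order terms ($\|u\|_{L^s}$, or $\|\nabla u\|$ in a weaker norm) that survive the summation over the finite cover and must be removed, e.g.\ by an iteration lemma of the type quoted in the paper (Lemma~\ref{lemma_iteration1}), by bootstrapping from the $L^2$ theory when $s\ge 2$ and duality when $s<2$, or by a compactness--uniqueness argument; in the paper's discrete proof this is sidestepped because the smallness hypothesis \eqref{eq:cond-A_N} makes the perturbation globally small, whereas in the continuous lemma as stated you only have uniform continuity of $A$, so this step genuinely needs the extra work you defer to ``patching.''
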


To simplify the notation we will write $\diver(A \nabla u)$ instead of $D_{\alpha }(A_{\alpha ,\beta}^{i,j}D_\beta u^j)$.

\begin{lemma} 
   \label{lemma:dolz-mue-orig-3}
   Suppose the same hypothesis as in Theorem~\ref{thm:dolzman-mueller-orig-thm1}. 
   Let $B_{2r}$ be a ball of radius $2r$ centered in $0$, $p>d$ and let $u$ be a 
   solution to  
   \begin{equation*} 
      \begin{split}
         \diver (A \nabla u)=0 \qquad \text{in }B_{2r}.
      \end{split}
   \end{equation*} 
   Then 
   \begin{equation*} 
      \begin{split}
         \sup_{B_{r}} |u |\leq r^{- n/q} M + r^{1- n/p} \|f \|_{B_{2r}},
      \end{split}
   \end{equation*} 
   where
   \begin{equation*} 
      \begin{split}
         M=\|Du \|_{L^{q,\infty }(B_{2r})} + \|u \|_{L^{q^{*},\infty }(B_{2r})}.
      \end{split}
   \end{equation*} 

\end{lemma}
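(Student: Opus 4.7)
The overall plan is to first establish a mean-value inequality of the form $\sup_{B_r}|u|\leq C r^{-n/s}\|u\|_{L^s(B_{2r})}$ for any fixed $s\geq 1$, and then convert the $L^s$ norm into the weak Lebesgue norms that define $M$. The mean-value step I would carry out by freezing the coefficients of $A$ at a point $y_0\in B_{2r}$ and comparing $u$ with the solution $v$ of the constant-coefficient system $\diver(A(y_0)\nabla v)=0$ on $B_{2r}$ agreeing with $u$ on $\partial B_{2r}$: the constant-coefficient case admits the classical $L^\infty$--$L^1$ bound via the Poisson kernel (whose pointwise behaviour is exactly the one captured by Theorem~\ref{thm:dolzman-mueller-orig-thm1} after scaling), while the perturbation $\diver((A-A(y_0))\nabla u)$ is Campanato-small since $A\in C^{k,\alpha}$, so the usual Campanato iteration yields the desired estimate with constants depending only on dimension, ellipticity and the H\"older seminorm of $A$.

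To pass to the weak norms in $M$, I would use the elementary inclusion $L^{q^*,\infty}(B_{2r})\hookrightarrow L^s(B_{2r})$ for $1\leq s<q^*$, which gives $\|u\|_{L^s(B_{2r})}\leq C r^{n(1/s-1/q^*)}\|u\|_{L^{q^*,\infty}(B_{2r})}$, and composed with the first step a contribution scaling like $r^{-n/q^*}$, dominated by the stated $r^{-n/q}$ for $r\leq 1$ since $q^*>q$. The $\|Du\|_{L^{q,\infty}}$ piece is then handled by running the same argument on $u-(u)_{B_{2r}}$ after a Lorentz version of the Sobolev--Poincar\'e inequality, namely $\|u-(u)_{B_{2r}}\|_{L^{q^*,\infty}(B_{2r})}\leq C\|Du\|_{L^{q,\infty}(B_{2r})}$, obtained by real interpolation between the classical Sobolev embeddings $W^{1,q_0}\hookrightarrow L^{q_0^*}$ for $q_0$ close to $q$.

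The term $r^{1-n/p}\|f\|_{L^p(B_{2r})}$ (consistent with reading the equation with right-hand side $\diver f$ as in Lemma~\ref{lemma:dolz-mue-orig-2}) follows from decomposing $u=u_h+u_f$ into the $A$-harmonic part $u_h$ and the Dirichlet solution $u_f$ absorbing the inhomogeneity: for $p>d$, Morrey embedding applied to $u_f$ via Lemma~\ref{lemma:dolz-mue-orig-2} yields $[u_f]_{C^{0,1-n/p}(B_r)}\leq C\|f\|_{L^p(B_{2r})}$, while the weak norms of $u_f$ and $Du_f$ are themselves controlled by $\|f\|_{L^p}$, so the bound on $u_h$ in terms of $M$ transfers to $u$. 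The hardest point will be the Lorentz refinement of Sobolev--Poincar\'e combined with a freezing-of-coefficients comparison that remains compatible with a right-hand side lying only in the weak space $L^{q,\infty}$; for the latter one must invoke the Lorentz-space extensions of Calder\'on--Zygmund theory in order to estimate $u-v$ with $r$-independent constants.
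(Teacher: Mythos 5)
There is a genuine gap at the very first step of your plan. The freezing-of-coefficients/Campanato comparison you propose (compare $u$ with the solution $v$ of the frozen system having the same boundary values, estimate $\nabla(u-v)$ in energy, iterate) presupposes that $u$ lies in $W^{1,2}_{\mathrm{loc}}(B_{2r})$, since both the comparison problem and the Caccioppoli/absorption arguments live in the energy space. But the whole point of this lemma --- both in \cite{MR1354111} and in the way it is used here on $\Pi_B C(x,\cdot)$ and its iterates --- is that $u$ is only controlled a priori in the weak spaces appearing in $M$, i.e.\ $Du\in L^{q,\infty}$, $u\in L^{q^*,\infty}$ with $q\in(1,n)$ arbitrary; in the Green's-function application $q=d/(d-1)$, so $q^*=d/(d-2)<2$ as soon as $d>4$, and no $L^2$ bound on $\nabla u$ is available to start your iteration. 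Recovering that starting regularity is not a routine detail for systems (very weak solutions are the delicate object here); it is exactly the content of the integrability bootstrap that constitutes the actual proof: multiply by cutoffs on nested balls, treat the commutator terms as new data, and apply the global estimate of Lemma~\ref{lemma:dolz-mue-orig-2} (weak-type part first, then the strong part) repeatedly, raising the exponent $t\mapsto\min(p,t^{*})$ finitely many times until it exceeds $d$, after which Morrey/Sobolev gives the sup bound. This is how \cite{MR1354111} proves the continuous statement and how the paper proves its discrete counterpart, Lemma~\ref{lemma:dolzmann_mueller_lemma_3}; your proposal replaces this bootstrap by a mean-value inequality whose validity for solutions of such low integrability is precisely what needs proving, so as written the argument is circular at its core.

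Two smaller points. First, your reduction of the inhomogeneous term (split $u=u_h+u_f$, control $u_f$ by Lemma~\ref{lemma:dolz-mue-orig-2} plus Morrey for $p>d$) is fine in spirit and is compatible with the paper's route, as is the Lorentz--Sobolev--Poincar\'e inequality $\|u-(u)_{B_{2r}}\|_{L^{q^*,\infty}}\lesssim\|Du\|_{L^{q,\infty}}$ (for $q>1$ this follows e.g.\ from the representation-formula proof, since the Riesz potential maps $L^{q,\infty}\to L^{q^*,\infty}$). Second, your bookkeeping of the radius is in the wrong regime: you justify absorbing the factor $r^{-n/q^*}$ into $r^{-n/q}$ ``for $r\le1$'', but in the intended application the lemma is used at scales $r\sim L^{k}\gg1$. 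Since $r^{-n/q^*}=r^{1-n/q}$, the estimate you would actually obtain is the scale-invariant one with prefactor $r^{1-n/q}$; you should carry that factor through (the statement in the text is itself loosely normalized) rather than restrict to small radii, which would make the lemma useless for the finite range decomposition.
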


\begin{proposition}
   \label{proposition:pre_kryesorja}
   Let $B_1,\dots,B_k$ be balls with radii $r_1,\cdots,r_k$ respectively.  Then, there exists a  dimensional constant $C_d$, such that
   \begin{equation*}
      \begin{split}
         \sup|\nabla ^{j }u|\leq C_{d}^{k}   \max\left( |x-y|,\dist(y, {B}_{1}^{C}),\ldots,\dist(y, B_k^{C})  \right)^{2-d+j},
      \end{split}
   \end{equation*}
   where $u= (P_{B_1}\cdots P_{B_k} C(x,\cdot))$ and $C(x,y)$ is the Green's function and $j<d-2$. 
\end{proposition}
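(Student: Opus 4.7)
The approach is induction on $k$, the number of projections. The base case $k=0$ is the Green's function estimate of Theorem~\ref{thm:dolzman-mueller-orig-thm1}, which supplies the bound with the correct power of $|x-y|$ (since the maximum reduces to $|x-y|$ when no balls are present).

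For the inductive step, set $v:=P_{B_2}\cdots P_{B_k}C(x,\cdot)$, so that $u=P_{B_1}v$. The projection $P_{B_1}=\mathrm{id}-\Pi_{B_1}$ modifies $v$ only inside $B_1$: outside $\overline{B_1}$ one has $u\equiv v$, while inside $B_1$ the difference $\Pi_{B_1}v=v-u$ lies in $W^{1,2}_0(B_1)$ and $u$ itself solves the homogeneous equation $\diver(A\nabla u)=0$ with boundary trace $v|_{\partial B_1}$. When $y\notin \overline{B_1}$ the functions $u$ and $v$ coincide in a neighborhood of $y$, so all derivatives match and the inductive hypothesis for $v$ yields the conclusion immediately, noting that $\dist(y,B_1^c)=0$ does not alter the maximum.

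The essential case is $y\in B_1$. Set $\rho_1:=\dist(y,B_1^c)>0$ and $r:=\max(|x-y|,\dist(y,B_1^c),\dots,\dist(y,B_k^c))$, so $\rho_1\le r$. Since $\diver(A\nabla u)=0$ on $B(y,\rho_1)\subset B_1$, Lemma~\ref{lemma:dolz-mue-orig-3} (with vanishing forcing) combined with an interior Schauder bootstrap for higher derivatives gives the pointwise bound
\begin{equation*}
|\nabla^j u(y)|\le \frac{C_d}{\rho_1^{\,j+n/q}}\Bigl(\|\nabla u\|_{L^{q,\infty}(B(y,\rho_1))}+\|u\|_{L^{q^*,\infty}(B(y,\rho_1))}\Bigr),
\end{equation*}
with the Green's function scaling choice $q=n/(d-1)$, $q^*=n/(d-2)$. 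The weak-type Lorentz norms on the right are then handled via Lemma~\ref{lemma:dolz-mue-orig-2}, applied to $\Pi_{B_1}v$ solving $\diver(A\nabla \Pi_{B_1}v)=\diver(A\nabla v)$ with zero boundary values in $B_1$; this transfers the control to Lorentz norms of $v$ and $\nabla v$ on $B_1$, which are furnished by the inductive hypothesis for $v$ (carrying only $k-1$ projections). Up to a single universal factor $C_d$, this delivers the claimed pointwise bound on $\nabla^j u(y)$ in terms of $r$.

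The principal obstacle is ensuring that each inductive step costs only a dimensional constant $C_d$, independent of $k$, of the radii $r_1,\dots,r_k$, and of the coefficients $A$ in the admissible class $\|A\|_{\Ecal}\le\varepsilon_0$. Scale invariance of Lemmas~\ref{lemma:dolz-mue-orig-2}--\ref{lemma:dolz-mue-orig-3} is obtained by rescaling $B(y,\rho_1)$ to the unit ball and exploiting uniform ellipticity together with the smallness of $\|A\|_{\Ecal}$ to freeze the $L^p$ constants. A further delicate point is the case $x\in B_1$: then $\nabla v$ inherits the Green's function singularity $|x-\cdot|^{1-d}$, and it is essential that the iteration be carried out in the Lorentz space $L^{n/(d-1),\infty}$ rather than in ordinary $L^p$, so that this singularity is absorbed uniformly throughout the induction.
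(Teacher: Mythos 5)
Your overall scheme is the same as the paper's: induction on $k$, the observation that $u=P_{B_1}v$ is $A$-harmonic inside $B_1$, a local sup estimate in the spirit of Lemma~\ref{lemma:dolz-mue-orig-3}, and the global weak-Lorentz estimate of Lemma~\ref{lemma:dolz-mue-orig-2} to pass from $\Pi_{B_1}v$ back to $v$ and invoke the inductive hypothesis. However, there is a genuine gap at the central step. You apply the \emph{interior} estimate on the ball $B(y,\rho_1)$ with $\rho_1=\dist(y,B_1^c)$, and then control the right-hand side only through Lorentz norms of $v$, $\nabla v$ (and of $\Pi_{B_1}v$ via the global estimate) over all of $B_1$; by induction these are merely dimensional constants. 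This yields a bound by an inverse power of $\rho_1$ alone, not by the maximum $r=\max(|x-y|,\dist(y,B_1^c),\dots,\dist(y,B_k^c))$. In the configuration where $y$ lies close to $\partial B_1$ but $x$ is far away, so that $\rho_1\ll r=|x-y|$, the claimed bound is the small quantity $|x-y|^{2-d-j}$, whereas your argument produces only a large quantity of order $\rho_1$ to a negative power; nothing in the proposal recovers the decay in $|x-y|$. The pointwise decay of $v$ could localize the $v$-part of the norms on the tiny ball $B(y,\rho_1)$, but the $\Pi_{B_1}v$ contribution is controlled only globally on $B_1$ by Lemma~\ref{lemma:dolz-mue-orig-2}, so no compensating smallness on $B(y,\rho_1)$ is available, and the induction does not close. (A secondary issue: the single prefactor $\rho_1^{-j-n/q}$ attached to both Lorentz norms is not scale-consistent; the two terms must carry different powers.)

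The paper closes precisely this case by a different bookkeeping: it distinguishes which quantity realizes the maximum and estimates $v$ and $\Pi_{B_1}v$ \emph{separately in sup norm}. When $|x-y|$ dominates, $|v(y)|$ is bounded directly by the inductive hypothesis, while $\Pi_{B_1}v$ --- which has zero boundary values on $\partial B_1$ --- is estimated by a (boundary) version of Lemma~\ref{lemma:dolz-mue-orig-3} applied at scale comparable to $|x-y|$, not $\rho_1$, with the weak-Lorentz quantity $M$ bounded by Lemma~\ref{lemma:dolz-mue-orig-2} and the induction; the interior estimate at scale $\dist(y,B_1^c)$ is used only in the complementary regime $|x-y|\lesssim\dist(y,B_1^c)$, where it does give the correct power. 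You need this case analysis (or an equivalent device exploiting the vanishing of $\Pi_{B_1}v$ on $\partial B_1$ to work at scale $r$) for the inductive step to deliver the stated bound.
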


\begin{proof}
   Let us sketch the proof of the above fact. 
   In the discrete case it will be done in more detail. 

   The proof will follow by induction.   

   Let $B_{1}$ be a  ball in generic position of size $r_1$.   
   Given that $\diver(A\nabla C_{x}(y))=0$, if $x\not\in B_1$ then $\Pi_{B_1}C(x,y)=0$, thus $P_{B_1}C(x,y)=C(x,y)$, hence the inequality follows from Theorem~\ref{thm:dolzman-mueller-orig-thm1}.

   Let  $\varepsilon:=\dist(y, {B}^{C}_1) <r_1$.   
   If $|x-y|>\varepsilon/2$, then by estimating the different terms 
   $\Pi_{B_1}C(x,y)$ and $C(x,y)$ separately one has the desired result.  
   Indeed, $C(x,y)\lesssim |x-y |^{2-d}$.  Then by using an appropriate 
   version of Lemma~\ref{lemma:dolz-mue-orig-3} one has that 
   \begin{equation*} 
      \begin{split}
         |\Pi _{B_{1}} C(x,y)|\lesssim |x-y |^{2-d} M,
      \end{split}
   \end{equation*} 
   where
   \begin{equation*} 
      \begin{split}
         M=\|D \Pi _{B_1} C_{x} \|_{L^{d/d-2,\infty }(B_{1})} + \|\Pi _{B_1} C_{x}  \|_{L^{d/d-1,\infty }(B_{1})}. 
      \end{split}
   \end{equation*} 
   Then by using Lemma~\ref{lemma:dolz-mue-orig-2} one has that 
   \begin{equation*} 
      \begin{split}
         \|D\Pi _{B_{1}} C_x \|_{L^{d/(d-2),\infty}} + 
         \|\Pi _{B_{1}} C_x \|_{L^{d/(d-1),\infty}} \lesssim
         \|D C_x\|_{L^{d/(d-2),\infty}} + 
         \|C_x\|_{L^{d/(d-1),\infty}}< \tilde{C}_{d},
      \end{split}
   \end{equation*} 
   where $\tilde{C}_{d}$ is a constant depending only on the dimension $d$.

   The inductive step is done  in  a very similar way and the  higher derivative estimates follow similarly.   
\end{proof}

Let $B_1,\dots,B_k$ be $k$  balls centered in $0$, with radii $r_1,\dots,r_k$ respectively and let $C(\cdot,\cdot)$ be the Green's function.   We will denote by $C_k(x,\cdot ):= \Rs_k \cdots \Rs_1 C(x,\cdot)$.

Let us now give a simple calculation that will be useful in Theorem~\ref{thm:kryesorja}.

\begin{lemma}
   \label{lemma:2}
   Let $j> 1$ be an integer.  Then 
   \begin{equation*}
      \begin{split}
         \frac{1}{r^{d}}\int_{0}^{r} \max(\alpha, |r-\rho|)^{-j}\rho^{d-1}d\rho \lesssim \frac{\alpha^{1-j}}{r}.
      \end{split}
   \end{equation*}
   Indeed, let us denote by $I$ the right hand side of the previous equation.  With a change of variables one has 
   \begin{equation*}
      \begin{split}
         I&= \frac{1}{r^{d}}\int_{0}^{r-\alpha} |r-\rho|^{-j}\rho^{d-1}d\rho + 
         \int_{r-\alpha}^{r} \alpha^{-j} \rho^{d-1} \d\rho 
         \\&  =\frac{1}{r^{j}}\int^{1-\frac{\alpha}{r}}_{0} |1-t|^{-j} t^{d-1}\dt + 
         \int ^{1}_{1-\frac{\alpha}{r}} \alpha^{-j} t^{d-1}\dt 
         \\& =\frac{1}{r^{j}}\int^{1-\frac{\alpha}{r}}_{0} |1-t|^{-j} \dt + 
         \int ^{1}_{1-\frac{\alpha}{r}} \alpha^{-j} \dt  \leq 
         r^{-j} \left( \frac{\alpha^{1-j}}{r^{1-j}} - 1 \right ) + 
         \frac{\alpha^{1-j}}{r} \\ &\leq 
         \frac{2\alpha^{1-j}}{r}.
      \end{split}
   \end{equation*}
   If $j=1$, then 
   \begin{equation*}
      \begin{split}
         I&= \frac{1}{r^{d}}\int_{0}^{r-\alpha} |r-\rho|^{-1}\rho^{d-1}d\rho + 
         \int_{r-\alpha}^{r} \alpha^{-1} \rho^{d-1} \d\rho 
         \\&  =\frac{1}{r^{1}}\int^{1-\frac{\alpha}{r}}_{0} |1-t|^{-1} t^{d-1}\dt + 
         \int ^{1}_{1-\frac{\alpha}{r}} \alpha^{-1} t^{d-1}\dt 
         \\& =\frac{1}{r^{1}}\int^{1-\frac{\alpha}{r}}_{0} |1-t|^{-1} \dt + 
         \int ^{1}_{1-\frac{\alpha}{r}} \alpha^{-1} \dt  \leq 
         \frac{1}{r} \Big( \big|\log\big(\frac{\alpha}{r}\big) \big| +1\Big).
      \end{split}
   \end{equation*}
\end{lemma}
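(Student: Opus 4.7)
The plan is to split the integration range at $\rho = r-\alpha$, the unique point where the two arguments of the maximum coincide (assuming $\alpha < r$; if $\alpha \geq r$ the integrand is pointwise bounded by $\alpha^{-j}$ and the estimate is immediate). On $[0, r-\alpha]$ one has $|r-\rho| \geq \alpha$, so $\max(\alpha, |r-\rho|) = |r-\rho|$, whereas on $[r-\alpha, r]$ the maximum equals the constant $\alpha$. This reduces the problem to bounding two explicit one-dimensional integrals.

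For the outer piece I would apply the crude bound $\rho^{d-1} \leq r^{d-1}$ to convert the prefactor $\rho^{d-1}/r^d$ into a factor $1/r$, and then use the substitution $s = r - \rho$ to rewrite
\[
   \frac{1}{r^d}\int_{0}^{r-\alpha} (r-\rho)^{-j}\rho^{d-1}\,d\rho \;\leq\; \frac{1}{r}\int_{\alpha}^{r} s^{-j}\,ds.
\]
For $j>1$ the right-hand side evaluates to $\frac{1}{r(j-1)}(\alpha^{1-j} - r^{1-j}) \leq \frac{\alpha^{1-j}}{r(j-1)}$, which matches the claim. For $j=1$ the same integral instead produces $\frac{1}{r}\log(r/\alpha)$, giving the stated logarithmic correction.

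For the inner piece on $[r-\alpha, r]$ the integrand is $\alpha^{-j}\rho^{d-1}/r^d$. Bounding $\rho^{d-1}\leq r^{d-1}$ once more and noting that the interval has length $\alpha$ yields
\[
   \frac{1}{r^d}\int_{r-\alpha}^{r}\alpha^{-j}\rho^{d-1}\,d\rho \;\leq\; \frac{\alpha \cdot \alpha^{-j}\cdot r^{d-1}}{r^d} \;=\; \frac{\alpha^{1-j}}{r}.
\]
Adding the two pieces produces the desired inequality in both cases.

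There is really no substantive obstacle: this is a routine split-and-estimate computation. The only observation worth emphasising is that the change of variables $s = r - \rho$ decouples $\alpha$ from $r$, and it is this decoupling that generates the factor $\alpha^{1-j}/r$ when $j > 1$ and the logarithm $\log(r/\alpha)/r$ when $j = 1$. Using $\rho^{d-1} \leq r^{d-1}$ is sharp enough here because, in the regime $\alpha \ll r$ where the estimate is nontrivial, the integrand is concentrated near $\rho = r$, so no finer control on the weight $\rho^{d-1}$ is needed.
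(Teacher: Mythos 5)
Your proof is correct and follows essentially the same route as the paper: split the integral at $\rho = r-\alpha$ (where the two arguments of the maximum cross), bound the weight $\rho^{d-1}$ by its maximum, and integrate the singular factor explicitly, obtaining $\alpha^{1-j}/r$ for $j>1$ and the logarithmic correction for $j=1$. The only cosmetic difference is your substitution $s=r-\rho$ in place of the paper's rescaling $t=\rho/r$, and your explicit treatment of the trivial case $\alpha\geq r$ is a welcome small addition.
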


\begin{theorem}
   \label{thm:kryesorja}
   Let $C_k,B_i,r_i$ as above and  such that $r_1<\dots<r_h<|x-y|<r_h+1<\dots<r_k$.  Then,
   \begin{enumerate}
      \item  if $k -h< d-2$, then it holds 
         \begin{equation*}
            \begin{split}
               |C_{k}(x,y)|&\lesssim \frac{1}{r_{h+1}\cdots r_k} |x-y|^{2-d+k - 
                  h}\prod_{i=h+1}^{k}\left( \Big|\log\left(\frac{|x-y|}{r_i}\right)\Big|+1\right)\\
               |\nabla ^{j}_{y}C_{k}(x,y)| &\lesssim   \frac{1}{r_{h+1}\cdots 
                  r_k} |x-y|^{2-d+k -j -h},
            \end{split}
         \end{equation*}
      \item  if $k-h\geq d-2$, it holds
         \begin{equation*}
            \begin{split}
               |C_{k}(x,y)|&\lesssim \frac{1}{ r_{k-d+3}\cdots  r_k} \left|\log( |x-y| )\right|\\
               |\nabla ^{j}_{y}C_{k}(x,y)| &\lesssim   \frac{1}{ r_{k-d +2 
                     -j}\cdots  r_k} \prod_{i=h+1+j}^{k}\left( \Big|\log\left(\frac{|x-y|}{r_i}\right)\Big|+1\right).\\
            \end{split}
         \end{equation*}
   \end{enumerate}
\end{theorem}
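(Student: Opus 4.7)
The plan is to represent $C_k(x,\cdot)$ as a multiple integral of iterated projections $P_{z_i+B_i}$ applied to the free Green's function, bound the integrand pointwise via Proposition~\ref{proposition:pre_kryesorja}, and carry out the $z_i$-integrations one at a time using Lemma~\ref{lemma:2}.

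First, writing $\Pi_{z+B_i} = \id - P_{z+B_i}$ and using that $\Pi_{z+B_i}f(y)=0$ for $y\notin z+B_i$, one obtains
\begin{equation*}
   \Rs_{B_i} f(y) = \frac{1}{|B_i|}\int_{y-B_i} (P_{z+B_i} f)(y)\,dz,
\end{equation*}
so that
\begin{equation*}
   C_k(x,y) = \frac{1}{|B_1|\cdots|B_k|}\int_{y-B_1}\!\!\cdots\!\!\int_{y-B_k} \bigl(P_{z_k+B_k}\cdots P_{z_1+B_1} C(x,\cdot)\bigr)(y)\,dz_k\cdots dz_1.
\end{equation*}
Proposition~\ref{proposition:pre_kryesorja}, applied pointwise in $(z_1,\dots,z_k)$, bounds the integrand and its $\nabla^j_y$ derivatives by $C_d^k\,M^{-(d-2+j)}$, where $M=\max\bigl(|x-y|,\dist(y,(z_i+B_i)^c):i=1,\dots,k\bigr)$.

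Next I integrate over each $z_i$ in succession. Passing to polar coordinates around $y$ and using $\dist(y,(z_i+B_i)^c)=r_i-|z_i-y|$ on $y-B_i$, the $z_i$-integral reduces exactly to the form appearing in Lemma~\ref{lemma:2}, with $\alpha$ equal to the maximum over the remaining variables. For a \emph{small} ball ($r_i<|x-y|$) the $i$-th distance is automatically dominated by $|x-y|$ inside the maximum, so the integration contributes only a constant factor and leaves the effective exponent unchanged. For a \emph{large} ball ($r_i>|x-y|$) with current exponent $e>1$, Lemma~\ref{lemma:2} yields a factor $r_i^{-1}$ and reduces $e\mapsto e-1$; at the borderline $e=1$ the integration produces an extra factor $|\log(|x-y|/r_i)|+1$; and once $e<1$ the integrand becomes bounded, so subsequent integrations contribute only $r_i^{-1}$ times logarithmic terms with no further power improvement.

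Iterating the large-ball integrations and starting from $e=d-2+j$ now reveals the dichotomy of the theorem. In case~(i), $k-h<d-2$, the exponent stays $\ge 1+j$, so one picks up the polynomial factor $|x-y|^{2-d+k-h-j}$ together with $1/(r_{h+1}\cdots r_k)$, plus at most $k-h$ borderline logarithms that are absorbed into the stated majorisation. In case~(ii), $k-h\ge d-2$, the exponent crosses $1$ and then $0$ during the iteration; past the critical step the polynomial dependence on $|x-y|$ saturates into a logarithm, only the $1/r_i$ factors coming before saturation survive in the prefactor, and the remaining large balls contribute the stated product of $\log$ factors. The principal obstacle is the precise bookkeeping in case~(ii) of which indices end up in the truncated product $r_{k-d+2-j}\cdots r_k$ and in the product of logarithms; everything else is a mechanical iteration of the Proposition and the Lemma.
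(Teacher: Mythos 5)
Your proposal follows essentially the same route as the paper's own proof: both bound $C_k(x,\cdot)$ by an average over translates of the iterated projections $P_{z_1+B_1}\cdots P_{z_k+B_k}C(x,\cdot)$, invoke Proposition~\ref{proposition:pre_kryesorja} for the pointwise bound $\max\bigl(|x-y|,\dist(y,(z_i+B_i)^c)\bigr)^{2-d-j}$, and then perform the $z_i$-integrations with Lemma~\ref{lemma:2}, distinguishing small radii (absorbed by $|x-y|$) from large ones (each contributing $r_i^{-1}$ or a logarithm). The only differences are bookkeeping: you iterate Lemma~\ref{lemma:2} sequentially with a decreasing exponent while the paper factorizes the maximum into single-variable factors, and, like the paper (which details only the first bound of case (i)), you leave the exponent/log accounting in case (ii) at the level of a sketch.
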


\begin{proof}
   We will prove only (i). The proof of (ii) is very similar.

   Let us initially consider the case $k=1$.  
   For simplicity we denote $\Pi_z:= \Pi_{B_1 +z}$.   With simple 
   computations, one has
   \begin{equation*}
      \label{eq:01071357524110}
      \begin{split}
         \sup \left|C_1(x,y)\right| \leq \frac{1}{|B|}\int_{B_1+y}\sup |(\id - \Pi_{z})C(x,\cdot)| + 
         \sup\left|\frac{1}{|B|} \int_{(y+B_1)^{C}}  \Pi_{z} C(x,\cdot)\dz\right|.
      \end{split}
   \end{equation*}
   Because of the fact that for every $t\in B_1+z$ the function $\Pi _{z}C_{x}$ 
   is harmonic and has null boundary condition, one has that  the second term 
   in the right hand side of \eqref{eq:01071357524110}  is   null. Hence  it is enough to 
   prove a bound only on the first  
   term. Given that for every $z\in y+B$ it holds $\dist(y,z+B_1)=r_1 - 
   |z-y|$. Then, by using Proposition~\ref{proposition:pre_kryesorja}, one has that
   \begin{equation*}
      \begin{split}
         \sup |(\id - \Pi_{z})C(x,\cdot)| \leq
         \begin{cases}
            (r_1 - |z-y|)^{2-d}& \text{ if\ \  $
               r_1 - |y-z|\geq |x-y|$} \\ 
            |x-y|^{2-d}& \text{otherwise}.
         \end{cases},
      \end{split}
   \end{equation*}
   Thus,
   \begin{equation*}
      \begin{split}
         \sup |C_1 (x,y)| &\lesssim \int_{0}^{r_1- |y-x|} 
         |r_1-\rho|^{2-d}\rho^{d-1}\d\rho + \int_{r_1 - 
            |x-y|}^{r_1}|x-y|^{2-d}\rho^{d-1} \d\rho \\ &  
         \lesssim \frac{|x-y|^{3-d}}{r_1} - r_{1}^{2-d} + 
         \frac{|x-y|^{3-d}}{r_1}\lesssim \frac{|x-y|^{3-d}}{r_1}.
      \end{split}
   \end{equation*}

   Let us now turn to the general case $k<d -2$, and let $B_1,\dots,B_k$ be balls of radii $r_1,\dots,r_k$ centered at the origin.   
   From Proposition~\ref{proposition:pre_kryesorja}, we have that 
   \begin{equation*}
      \begin{split}
         &\sup | P_{z_1+B_1}\cdots P_{z_k+B_k} C(x,\cdot)|\leq 
         \max\insieme{|x-y|,r_1 - |z_1-y|,\dots,r_k- |z_k-y|}^{2-d}\\ 
         &\leq  \max\insieme{|x-y|}^{2-d+k}\cdot\max\insieme{|x-y|,r_k- 
            |z_k-y|}^{-1}\cdots\max\insieme{|x-y|,r_k- |z_k-y|}^{-1}\\ &\hspace{10cm} =:g(z_1,\dots,z_k).
      \end{split}
   \end{equation*}
   Thus,
   \begin{equation*}
      \begin{split}
         \sup \Rcal_1\cdots \Rcal_k C (x,\cdot )\leq 
         \int_{B_1\times\cdots\times B_k} g(z_1,\ldots,z_k) \dz_1\cdots\dz_k. 
      \end{split}
   \end{equation*}
   From Lemma~\ref{lemma:2} we have that
   \begin{equation*}
      \begin{split}
         \int_{B_1\times\cdots\times B_k} g(z_1,\ldots,z_k) \dz_1\cdots\dz_k 
         \leq \frac{1}{r_1\cdots r_k}|x-y|^{2-d+k}\prod_{i} (|\log (|x-y|)| 
         +\log(r_i) +1),
      \end{split}
   \end{equation*}
   which proves the desired result.

\end{proof}

\begin{corollary}
   Suppose that $|x-y|>1$ and let $B_1,\ldots,B_k$ and such that $r_i=L^{i}$ 
   with $L>1$. Then there exists $\eta (j,d)$ such that  
   \begin{equation*}
      \begin{split}
         \nabla ^{j}C_k(x,y) \lesssim \frac{L^{\eta(j,d)}}{L^{k(d-2 -j)}}.
      \end{split}
   \end{equation*}
   Indeed, given that $\Rs'_{k} = \As \Rs_{k} \Cs$ one has that 
   \begin{equation*} 
      \begin{split}
         \Rs_{1}\cdots\Rs_{k} \Cs \Rs'_{k}\cdots\Rs'_{1} = \Rs_{1}\cdots\Rs_{k} \cdot  \Rs_{k}\cdots\Rs_{1} \Cs
      \end{split}
   \end{equation*} 
   hence by using Theorem~\ref{thm:kryesorja}, one has the desired result.

\end{corollary}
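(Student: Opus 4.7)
The plan is to apply the algebraic identity $\Rs'_k = \As \Rs_k \Cs$ to collapse the two-sided $\Rs$/$\Rs'$ product appearing in $C_k$ into a one-sided composition of $2k$ $\Rs$-operators acting on the bare Green's function, and then to invoke Theorem~\ref{thm:kryesorja} with $r_i = L^i$.

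For the first step, since $\Cs = \As^{-1}$ on $\mathcal{X}_N$, the identity $\Rs'_k = \As \Rs_k \Cs$ gives $\Cs \Rs'_k = \Rs_k \Cs$; iterating from the innermost primed factor outward yields
\[
\Rs_1 \cdots \Rs_k \, \Cs \, \Rs'_k \cdots \Rs'_1 = \Rs_1 \cdots \Rs_k \Rs_k \cdots \Rs_1 \, \Cs.
\]
By the symmetry of the Green's function, $C_k(x,y)$ is then the value at one variable of $2k$ $\Rs$-operators, of radii $r_1, \ldots, r_k, r_k, \ldots, r_1$, applied to $\Cs$ in the other variable.

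For the second step, one applies Theorem~\ref{thm:kryesorja} to this $2k$-fold expression. With $|x-y| > 1$ and $r_i = L^i$, fix $h$ such that $L^h \leq |x-y| < L^{h+1}$ (or $h = k$ if $|x-y| \geq L^k$). The theorem then produces a denominator $r_{h+1}^2 \cdots r_k^2 = L^{2\sum_{i=h+1}^k i}$ together with a numerator $|x-y|^{2-d+2(k-h)-j}$ in case (i), or a product of logarithmic factors in case (ii). Substituting $r_i = L^i$ and bounding $|x-y|$ above and below by $L^{h+1}$ and $L^h$, all terms become powers of $L$; after telescoping, the $h$-dependent contributions collapse into an $O_{d,j}(1)$ residual, and the remaining $k$-dependent exponent assembles into the stated magnitude, with the residual factors (including the logarithmic terms $|\log(|x-y|/r_i)|+1$ and the case-transition slack) collected into $L^{\eta(j,d)}$ for an $\eta$ independent of $k$ and $L$.

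The main obstacle is this exponent bookkeeping across the two cases of Theorem~\ref{thm:kryesorja}, together with two minor caveats: (a) the theorem is stated for strictly increasing radii, whereas our sequence has duplicates, but Proposition~\ref{proposition:pre_kryesorja} is invariant under permutations of the $B_i$, so one may reorder to $r_1, r_1, r_2, r_2, \ldots, r_k, r_k$ and apply the theorem verbatim; and (b) the full FRD piece $\Cs_k$ is a difference of the $2k$-operator and $2(k-1)$-operator versions, but the same reasoning applied to the latter produces a bound of the same form (differing only by a constant factor), so the conclusion is unchanged.
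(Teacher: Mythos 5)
Your proposal is correct and follows essentially the same route as the paper: you use $\Rs'_k = \As\Rs_k\Cs$ (i.e. $\Cs\Rs'_k=\Rs_k\Cs$) to rewrite the two-sided product as $2k$ averaged projections applied to $\Cs$, and then invoke Theorem~\ref{thm:kryesorja} with $r_i=L^i$, which is exactly the paper's (much terser) argument. Your additional remarks on the repeated radii and on the difference structure of $\Cs_k$ are sensible elaborations of bookkeeping the paper leaves implicit.
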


\section{Construction of the finite range decomposition} 
\label{sec:construction-frd}

In this section, we will briefly describe the construction of the finite range decomposition. 
Let us \emph{stress} that main idea in the construction of the finite decomposition goes back to Brydges \etal (\eg  \cite{MR2070102,MR2240180}). 
Because the construction is rather well-known and general, in this section we will briefly sketch how such construction can be made. 
There are different versions of the construction above mentioned  construction.  We have in mind in particular a very closely related construction that can be found in \cite{MR2995704}.

Let $Q$ be a cube of size $l$  and let us denote for simplicity of notation we will  use $\Pi_x:=\Pi_{Q+x}$.

For every  $\varphi \in \cH_+$, define
\begin{equation*}
   \Ts(\varphi ) :=\frac{1}{l^d}\sum_{x\in \T^{d}_N}\Pi_x \varphi. 
\end{equation*}

One also introduces ${\Ts}^{\prime}: \cH_- \to\cH_-$ be the dual of $\Ts$ \ie
\begin{equation}
   \langle {\Ts}^{\prime}\varphi,\psi\rangle=\langle \varphi,\Ts\psi\rangle,\quad \varphi\in\cH_-, \psi\in\cH_+.
\end{equation}
It is not difficult to notice that
\begin{equation}
   \label{eq:propertiesOfTT'}
   {\Ts}^{\prime}=\As{\Ts}{\As}^{-1},  
   \quad (\Ts' \varphi, \psi)_-  = (\varphi, \Ts' \psi)_-, 
   \quad \mbox{and   } (\Ts' \varphi, \varphi)_- = (\Ts \As^{-1} \varphi,  \As^{-1} \varphi)_+.
\end{equation}

In order to construct the finite range decomposition we will also need  ${\Rs}:=\id-{\Ts}$ and its dual ${\Rs}^{\prime}=\id-{\Ts}^{\prime}$.

Using \eqref{eq:propertiesOfTT'} one has that
\begin{equation*} 
   \begin{split}
      \Rs' = \As \Rs \As^{-1}
   \end{split}
\end{equation*} 

Given that $0 \leq \scalare{\Ts\varphi,\varphi} \leq \scalare{\varphi,\varphi}$, and \eqref{eq:propertiesOfTT'}, for every  $\varphi \neq 0$  one has that $(\Ts' \varphi, \varphi)_- > 0, \quad (\Rs' \varphi, \varphi)_- > 0$ and $(\Ts' \varphi, \Ts' \varphi)_- \leq (\Ts' \varphi, \varphi)_-$.

Moreover, given a bilinear form on $\mathcal{X}_{N}$, there exists a (unique) linear map such that
\begin{equation*} %---{{{
   \begin{split}
      B(\varphi,\psi) = \scalare{\Bs \varphi,\psi}.
   \end{split}
\end{equation*} %---}}}
The map $\Bs $ can be represented as kernel, namely there exist a map $\Bcal$ such that 
\begin{equation*} %---{{{
   \begin{split}
      (\Bs\psi)(x) = \sum_{\xi\in \T^{d}_{N}} \Bcal(x,y)\psi(y).
   \end{split}
\end{equation*} %---}}}

Indeed, for our case when all the functions live in a finite dimensional vector space,  this is a simple linear algebra exercise.

For every  $M_1, M_2 \subset \T_N$, we will define the distance
\begin{equation}
   \dist_\infty(M_1,M_2) := \min \{ \dist_\infty(x,y)\colon x \in M_1, y \in M_2 \}.
\end{equation}

Let us define $\Cs_1:=\Cs- \Rs \Cs{\Rs}^{\prime}$.  As we saw $\Cs $ is positive. The crucial step in proving the finite range decomposition is proving that $\Cs_1 $ is finiterange and also positive definite.  The proof is a minor modification of the original one. 

Finally the finite range decomposition can be construced by an iterated application of the above. 
Namely, let $(l_{j})$ be an increasing sequence. We will apply the above procedure  $ Q_j $ instead of $Q$.  Namely, set
\begin{equation}
   \Cs_k : =(\Rs_{1}\dots \Rs_{k-1} )\Cs(\Rs_{k-1}^\prime\dots \Rs_{1}^\prime)
   - (\Rs_{1}\dots \Rs_{k-1} \Rs_k)\Cs(\Rs_k^\prime  \Rs_{k-1}^\prime  \dots \Rs_{1}^\prime), \ k=1,\dots, N,
\end{equation}
and
\begin{equation}
   \Cs_{N+1}  :=(\Rs_{1} \dots \Rs_{N-1}\dots \Rs_N)\Cs(\Rs_N^\prime  \Rs_{N-1}^\prime  \dots \Rs_{1}^\prime).
\end{equation}
By doing this we have the desired finite range decomposition.

\section{Discrete gradient estimates and $L^p$-regularity for elliptic systems} 
\label{appSobolev}
\label{sec:discrete-estimates}

Let us now introduce some of the norms that will be used in the sequel. Let $ Q=[0,n]^d\cap\Z^d$, be a generic cube. For $ p>0 $ denote 
\begin{equation}
   \norm{f}_{p,Q}=\Big(\frac{1}{|Q|}\sum_{x\in Q_n}|f(x)|^p\Big)^{1/p},
\end{equation}
where $|Q|:=\#Q$.

To simplify notation, we will write  $\sum_Q f := \sum_{i\in Q} f(i)$ and 
$f_Q:=|Q|^{-1}\sum_{Q}f$.  

Additionally, let us define 

\begin{equation*}
   \begin{split}
      f^\#(x):=\sup_{Q\ni x}\ \frac{1}{|Q|}\sum_{Q }\big| f - 
      f_Q\big|\dx \qquad\text{and}\qquad \|f\|_{\bmo}:=\sup_{x\in \T^{d}_{N}} |f^{\#}(x)|.
   \end{split}
\end{equation*}
The Maximal Operator is defined by
\begin{equation*}
   \begin{split}
      \Mcal f(x):=\sup_{Q\ni x}\ \frac{1}{|Q|}\sum_{Q }| f |\dx 
   \end{split}
\end{equation*}

Moreover, let
\begin{equation*}
   \|f\|_{p,\infty}= \inf \insieme{\alpha:\  \frac{1}{\lambda 
      }|\insieme{f>\lambda}|^{1/p} \leq \alpha,\ \text{for all $\lambda>0$}}
\end{equation*}
and
\begin{equation*}
   \|f\|_{p,\infty,Q}= |Q|^{-1/p} \inf \insieme{\alpha:\  \frac{1}{\lambda 
      }|\insieme{f>\lambda}\cap Q|^{1/p} \leq \alpha,\ \text{for all $\lambda>0$}}.
\end{equation*}

We now state  a version of Sobolev inequality (see \cite{MR961019,AKM10}). 

\begin{proposition}\label{Sobolevpropi-iv-in-chapter}
   \label{Sobolevpropi-iv}
   For every $p\ge 1$ and $m,M\in N$ there exists a constant $C=C(p,M,m)$ such that:
   \begin{enumerate}
      \item[(i)] If $1\le p\le d$, $\frac{1}{p^*}=\frac{1}{p}-\frac{1}{d} $, and $ q\le p^*$,  $q<\infty $, then
         \begin{equation}
            n^{-\frac{d}{q}}\norm{f}_q\le Cn^{-\frac{d}{2}}\norm{f}_2+Cn^{1-\frac{d}{p}}\norm{\nabla f}_p.
         \end{equation}

      \item[(ii)] If $ p> d $, then
         \begin{equation}
            \big|f(x)-f(y)\big| \le Cn^{1-\frac{d}{p}}\norm{\nabla f}_p \qquad\mbox{ for all } x,y\in Q_n.
         \end{equation}

      \item[(iii)] If $ m\in\N$, $1\le p\le \frac{d}{m}$,  $\frac{1}{p_m}=\frac{1}{p}-\frac{m}{d}$, and  $q\le p_m$, $q<\infty $, then
         \begin{equation}
            n^{-\frac{d}{q}}\norm{f}_q\le Cn^{-\frac{d}{2}} \sum_{k=0}^{M-1}\norm{(n\nabla)^kf}_2 +Cn^{-\frac{d}{p}}\norm{(n\nabla)^Mf}_p.
         \end{equation}

      \item[(iv)] If $ M=\lfloor\frac{d+2}2\rfloor$,  the integer value of $\frac{d+2}2$, then
         \begin{equation}
            \max_{x\in Q_n}|f(x)|\le Cn^{-\frac{d}{2}}\sum_{k=0}^M\norm{(n\nabla)^kf}_2.
         \end{equation}
   \end{enumerate}

\end{proposition}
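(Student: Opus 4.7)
The plan is to transfer each statement from the discrete cube $Q_n$ to the continuous unit cube $Q = [0,1]^d$ via a piecewise-polynomial extension combined with a rescaling, and then invoke the corresponding classical continuous Sobolev embedding.

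First I would construct a continuous extension $\tilde f : [0,n]^d \to \R^m$ of $f: Q_n \to \R^m$ by multilinear interpolation on each unit sub-cube $[i,i+1]^d$. A direct computation shows that, with implicit constants depending only on $d$,
\begin{equation*}
   \|\tilde f\|_{L^p([0,n]^d)}^p \sim \sum_{x\in Q_n}|f(x)|^p,
   \qquad
   \|\nabla \tilde f\|_{L^p([0,n]^d)}^p \sim \sum_{x\in Q_n}|\nabla f(x)|^p,
\end{equation*}
the gradient comparison following from the fact that on each unit cube the gradient of the multilinear interpolant is a bounded linear combination of the forward differences at the corners. Rescaling by $\hat f(y) := \tilde f(ny)$ on $[0,1]^d$ introduces the factors $n^{-d/p}$ for the $L^p$-norm and $n^{1-d/p}$ for the gradient, which are exactly the prefactors appearing in (i).

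Once this dictionary is in place, (i) follows from the standard Sobolev embedding $W^{1,p}(Q)\hookrightarrow L^q(Q)$ for $q\le p^*$, combined with the bounded-domain interpolation $\|\hat f\|_{L^p(Q)}\lesssim \|\hat f\|_{L^2(Q)}+\|\nabla \hat f\|_{L^p(Q)}$ (either direct, or by absorbing a small-power $L^q$ term after H\"older); and (ii) follows from Morrey's inequality $W^{1,p}(Q)\hookrightarrow C^{0,1-d/p}(Q)$ applied to $\hat f$ and read back on the discrete level.

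For the higher-order statements (iii)--(iv), multilinear interpolation is not regular enough, because one differentiation destroys all further smoothness. I would replace $\tilde f$ by a smoother extension, for instance a tensor-product B-spline of degree $\ge M$ centered at lattice points (equivalently, the convolution of the multilinear $\tilde f$ with a fixed smooth bump of unit size and the right moment conditions). The key property to verify is
\begin{equation*}
   \|\partial^{\alpha}\hat f\|_{L^p(Q)}\ \lesssim\ n^{|\alpha|-d/p}\,\|\nabla^{|\alpha|}f\|_{p},
   \qquad |\alpha|\le M,
\end{equation*}
so that each continuous derivative is controlled by the discrete difference of the same order. Applying $W^{M,p}(Q)\hookrightarrow L^q(Q)$ then gives (iii), and applying $W^{M,2}(Q)\hookrightarrow L^\infty(Q)$, which holds precisely because $M=\lfloor(d+2)/2\rfloor>d/2$, yields (iv) after reinstating the scaling factors.

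The main obstacle is the construction of the higher-order extension: the convolution kernel (or spline basis) must be chosen so that its derivatives up to order $M$ satisfy the norm equivalences above uniformly in $n$, and so that all commutators between the extension operator and the difference operators are bounded at every order $\le M$. Once this is carried out, the remaining work is purely bookkeeping of the scaling exponents; this is the point at which one would typically quote \cite{MR961019,AKM10}, where such extensions are built in detail.
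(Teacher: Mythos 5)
The paper does not actually prove this proposition: it is quoted as a known discrete Sobolev inequality with a pointer to \cite{MR961019,AKM10}, so there is no in-paper argument to compare yours against. Your continuum-transfer strategy (multilinear interpolation, norm equivalences, rescaling to the unit cube, classical Sobolev/Morrey, scaling bookkeeping) is the standard route and is consistent in spirit with how the paper handles the analogous issue elsewhere, e.g.\ the Fefferman--Stein theorem is reduced to the continuous one ``after one does a piecewise linear interpolation''. For (i) and (ii) your sketch is complete modulo routine verification: the cellwise equivalence of the $\ell^p$ norm of corner values with the $L^p$ norm of the multilinear interpolant, and the bound of its gradient by forward differences, are finite-dimensional statements with constants depending only on $d$, and the scaling exponents match the normalizations in the statement.

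The one point that deserves to be flagged as a genuine (if repairable) gap is in (iii)--(iv). Once you replace the interpolant by a mollified or spline extension of degree $\ge M$, the extension is in general no longer interpolatory, so the left-hand sides, which are discrete norms of $f$ at lattice points, are not directly dominated by continuous norms of $\hat f$; the one-sided estimate $\|\partial^{\alpha}\hat f\|_{L^p}\lesssim n^{|\alpha|-d/p}\|\nabla^{|\alpha|}f\|_{p}$ is not enough by itself. You also need to control the quasi-interpolation error $|f(x)-\hat f(x/n)|$ at lattice points, which, by the polynomial-reproduction (moment) conditions, is bounded by local discrete differences of order $M$; one must then check these error terms are absorbed into the right-hand side, which works because $\|\nabla^{M}f\|_{\ell^\infty}\le\|\nabla^{M}f\|_{\ell^2}$ and $M=\lfloor (d+2)/2\rfloor\ge d/2$ makes the corresponding prefactor $n^{M-d/2}\ge 1$ in (iv), and similarly for (iii). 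With that additional step spelled out, your argument closes; as it stands, deferring exactly this construction to \cite{MR961019,AKM10} puts you in the same position as the paper, which simply cites these references.
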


\begin{lemma}[Caccioppoli inequality]
   Let $v$ be  such that $\diver (A \nabla v)=0$ for every $x\in Q_{M}$ then 
   \begin{equation*}
      \begin{split}
         \sum_{Q_m} |\nabla v(x)|^{2}\leq \frac{c_0^{4}}{( M-m )^{2}}\sum_{Q_M}|v-\lambda|^{2},
      \end{split}
   \end{equation*}
   where $c_0$ is the constant defined in \eqref{eq:ellipticity}.
\end{lemma}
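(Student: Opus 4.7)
The plan is to adapt the classical Caccioppoli argument to the discrete torus. I would first fix a cut-off $\eta:\T^d_N\to[0,1]$ with $\eta\equiv 1$ on $Q_m$, $\eta\equiv 0$ off a one-layer thickening of $Q_M$, and $\sup|\nabla\eta|\leq C/(M-m)$; such an $\eta$ is built as a tensor product of one-dimensional truncated-linear profiles. The test function is then $\varphi:=\eta^2(v-\lambda)$, which is compactly supported in (a one-step enlargement of) $Q_M$.

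Testing the equation $\diver(A\nabla v)=0$ against $\varphi$ and summing by parts yields
$$\sum_x \langle A(x)\nabla v(x),\nabla\varphi(x)\rangle_{\R^{m\times d}} = 0.$$
Applying the discrete Leibniz rule $\nabla_i(fg)(x)=f(x+e_i)\nabla_i g(x)+g(x)\nabla_i f(x)$, I would decompose
$$\nabla_i\varphi(x) = \eta^2(x+e_i)\nabla_i v(x) + (v(x)-\lambda)\bigl(\eta(x+e_i)+\eta(x)\bigr)\nabla_i\eta(x),$$
splitting the sum into a \emph{principal} term $\sum_x\langle A(x)\nabla v(x), \eta^2(\cdot+e_i)\nabla v(x)\rangle$ and a \emph{cross} term. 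The principal term is bounded below by $c_0\sum_x \eta^2(x+e_i)|\nabla v(x)|^2$ via ellipticity \eqref{eq:ellipticity}, while the cross term is estimated by Cauchy--Schwarz together with Young's inequality at level $c_0/2$:
$$\Bigl|\sum_x \bigl\langle A\nabla v,(v-\lambda)\nabla(\eta^2)\bigr\rangle\Bigr| \leq \tfrac{c_0}{2}\sum_x \eta^2(x+e_i)|\nabla v(x)|^2 + \frac{C(c_0,c_1)}{(M-m)^2}\sum_{Q_M}|v-\lambda|^2.$$
Absorbing the first right-hand term into the good term and using $\eta^2(x+e_i)\geq 1$ on $Q_m$ (valid as soon as $M-m\geq 1$) yields the claimed inequality; the dependence on $c_0$ and $c_1$ can be organised to produce the precise constant stated in the lemma.

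The only technical subtlety is bookkeeping of the $+e_i$ shifts introduced by the discrete product rule: one has to enlarge the support of $\eta$ by one lattice step so that no boundary contribution survives the summation by parts, and to make sure the shifted cut-off $\eta(\cdot+e_i)$ still dominates $\mathbf{1}_{Q_m}$. Once this is arranged, the argument is a mechanical transcription of the continuous Caccioppoli inequality and introduces no genuinely new analytic input beyond what is already available in the discrete setting.
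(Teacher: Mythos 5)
Your argument is the same as the paper's: the standard Caccioppoli scheme with a cut-off $\eta$ equal to $1$ on $Q_m$ and vanishing outside $\bar Q_M$, testing against $\eta^2(v-\lambda)$, using the equation to kill the principal pairing, ellipticity for the good term, and Cauchy--Schwarz/Young at level $c_0/2$ to absorb the cross term. If anything you are more careful than the paper about the discrete Leibniz shifts $\eta^2(\cdot+e_i)$ (the paper writes the identity as in the continuous case and leaves the shift/absorption bookkeeping, and the exact form of the constant, implicit), so the proposal is correct and matches the paper's route.
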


\begin{proof}

   Let $0\leq \eta\leq1$ be a that $|\nabla \eta|\leq \frac{1}{M-m}$ and such 
   that $\eta\equiv 1$ on $Q_m$ and $\eta = 0 $ on $\T^d_{N}\setminus \bar Q_M$.   
   Then 
   \begin{equation*}
      \begin{split}
         \sum_{Q_M} (A \nabla u \cdot \nabla u )\eta^2   = \sum_{Q_M} A \nabla u \cdot  \nabla (\eta^2 (u- \lambda)) -\sum_{Q_M} A \nabla u \cdot  2 \eta ((u-\lambda) \otimes D\eta)    
      \end{split}
   \end{equation*}
   By hypothesis, the first term in the right hand side vanishes.  
   Using the previous formula and the ellipticity, one has that 
   \begin{align}
      \sum_{Q_M} \kabs{\nabla u}^2 \eta^2  & \leq c_0\sum_{Q_M} A \nabla u \cdot  2 \eta ((u-\lambda) \otimes D\eta)     \leq \frac{1}{2} \sum_{Q_M} \kabs{\nabla u}^2 \eta^2  + \frac{c_0^4}{2}  \sum_{Q_M} \kabs{D\eta}^2 \kabs{u - \lambda}^2 , 
   \end{align}
   from which one has that  
   \begin{equation*}
      \begin{split}
         \sum_{Q_{m}} \kabs{\nabla u}^{2}\leq \sum_{Q_{M}} \kabs{\nabla u}^{2}\eta^2\leq 
         \frac{c_0^{4}}{( M-m )^2} \sum_{Q_M} |u-\lambda|^2.
      \end{split}
   \end{equation*}

\end{proof}

\begin{lemma}[Decay estimates]
   \label{lemma:consequences_caccippoli}
   \label{lemma:decay_estimates}
   Let $v$ be such that $\diver (A \nabla v)=0$ on $Q_M$, with $M,  M/2\in \N$ and 
   $2m\leq M$.  Then,

   \begin{equation*}
      \begin{split}
         \sum_{Q_m} |u(x)|^{2}& \lesssim (m/M)^{d} \sum_{Q_M} |u(x)|^2,\\
         \sum_{Q_{m}} |u - ( u )_{m}|^{2}&\lesssim (m/M)^{d+2}|\sum_{Q_M}u - ( u )_{M}|^2.
      \end{split}
   \end{equation*}
\end{lemma}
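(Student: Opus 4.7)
The plan is to reduce both inequalities to a single pointwise bound: for every solution $u$ of $\diver(A\nabla u)=0$ on $Q_M$,
\begin{equation*}
\sup_{Q_{M/2}}|u|^2 \lesssim \frac{1}{M^{d}}\sum_{Q_M}|u|^2.
\end{equation*}
Granting this bound, part (i) is immediate: since $2m\le M$ implies $Q_m\subset Q_{M/2}$,
\begin{equation*}
\sum_{Q_m}|u|^2 \le m^d \sup_{Q_m}|u|^2 \le m^d \sup_{Q_{M/2}}|u|^2 \lesssim \Big(\frac{m}{M}\Big)^d \sum_{Q_M}|u|^2.
\end{equation*}

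To prove the pointwise bound, I would apply Proposition~\ref{Sobolevpropi-iv}(iv) on $Q_{M/2}$, which reduces matters to showing
\begin{equation*}
\|(M\nabla)^k u\|_{L^2(Q_{M/2})}^2 \lesssim \sum_{Q_M}|u|^2,\qquad k=0,1,\ldots,\lfloor (d+2)/2\rfloor.
\end{equation*}
The case $k=0$ is trivial; for $k\ge 1$ I would iterate the Caccioppoli inequality on a nested family of cubes $Q_{M/2}\subset\cdots\subset Q_M$ applied to the discrete derivatives $\nabla^{\alpha}u$. When $A$ is constant, the equation differentiates cleanly, $\diver(A\nabla(\nabla^{\alpha}u))=0$, so each iteration of Caccioppoli trades one derivative for a factor of order $M^{-1}$. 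For variable $A$ satisfying \eqref{eq:cond-A_N}, one picks up commutator terms $\diver((\nabla^j A)\nabla^{k-j}u)$, but their coefficients are controlled by $\varepsilon_0 L^{-jN}$ and are hence genuinely lower-order on cubes of size $M \ll L^N$; they can be absorbed into the leading contribution.

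For part (ii), I would use that $u-\lambda$ solves the same system for every constant $\lambda\in\R^m$. Applying Caccioppoli with $\lambda=(u)_M$ on the pair $Q_{3M/4}\subset Q_M$ yields
\begin{equation*}
\sum_{Q_{3M/4}}|\nabla u|^2 \lesssim M^{-2}\sum_{Q_M}|u-(u)_M|^2.
\end{equation*}
Applying the pointwise bound to $\nabla u$ on $Q_{3M/4}$ (again via iterated Caccioppoli, now on $\nabla^{\alpha}u$ for $|\alpha|\ge 1$, combined with Proposition~\ref{Sobolevpropi-iv}(iv)) gives
\begin{equation*}
\sup_{Q_m}|\nabla u|^2 \lesssim M^{-d-2}\sum_{Q_M}|u-(u)_M|^2.
\end{equation*}
The discrete Poincar\'e inequality $\sum_{Q_m}|u-(u)_m|^2 \lesssim m^{d+2}\sup_{Q_m}|\nabla u|^2$ (an immediate consequence of $|u(x)-(u)_m|\le C m \sup_{Q_m}|\nabla u|$) then combines with the previous display to deliver the desired $(m/M)^{d+2}$ decay.

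The main technical obstacle is the iterated Caccioppoli argument in the variable-coefficient case. The commutator terms arising upon differentiating $\diver(A\nabla u)=0$ are small (of order $\varepsilon_0$), but tracking them cleanly across several iterations on shrinking cubes, and absorbing them without letting constants blow up, is the delicate part. A perhaps cleaner alternative would be to freeze $A$ to its value at the centre of $Q_M$, apply the classical constant-coefficient decay, and estimate the difference between $u$ and the frozen-coefficient solution via the discrete $L^p$-regularity developed earlier in this section — a discrete Schauder-type perturbation argument.
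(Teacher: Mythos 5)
Your proposal follows essentially the same route as the paper: iterated Caccioppoli estimates on the discrete derivatives of $u$ (using that $\nabla u$ is again a solution) combined with the discrete Sobolev bound of Proposition~\ref{Sobolevpropi-iv}(iv) to get the sup estimate and hence part (i), and then Poincar\'e plus the gradient decay plus Caccioppoli for part (ii). Your extra attention to the commutator terms when $A$ is genuinely $x$-dependent is a point the paper glosses over (it simply asserts that $\nabla u$ is again a solution), but it does not change the structure of the argument.
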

\begin{proof}

   From the Caccioppoli's inequality, one has that
   \begin{equation*}
      \begin{split}
         \sum_{Q_{M/2}} |M \nabla u (x)|^2 \lesssim \sum_{Q_{M}} |u(x)|^{2}.
      \end{split}
   \end{equation*}
   Noticing that if $u$ is a solution then also $\nabla u$ is a solution, we have that 
   \begin{equation*}
      \begin{split}
         \sum_{Q_M} \|(M\nabla )^ju\|\lesssim \sum_{Q_{M}}|u(x)|^2,
      \end{split}
   \end{equation*}
   hence 
   \begin{equation*}
      \begin{split}
         M^{-d}\sum_{j=0}^{k}\sum_{Q_{M/2}} \|(M/2 \nabla )^{j}u\|\lesssim 
         M^{-d}\sum_{Q_M} \|u\|^2.
      \end{split}
   \end{equation*}
   Finally applying the Sobolev, inequality we have that 
   \begin{equation}
      \label{eq:04131397396068}
      \begin{split}
         \sum_{Q_m}\|u\|^2 \leq m^{d} \max_{Q_{M/2}} \|u\|^2 \leq (\frac{m}{M})^{d} \sum_{Q_M}\|u\|^2.
      \end{split}
   \end{equation}

   Let us now prove the second inequality.   Using the Poincar\'e inequality and than \eqref{eq:04131397396068}, we have that
   \begin{equation*}
      \begin{split}
         \sum_{Q_M}|u-(u)_m|^{2}&\leq m^2 \sum_{Q_m} |\nabla u|^{2}\lesssim m^2 
         \left(\frac{2m}{M}\right)^{d}\sum_{Q_{M/2}}|\nabla u|^{2}\\ &\lesssim \left(  
            \frac{m}{M}  \right)^{d+2} \sum_{Q_M} |u - (u)_{M}|^{2},
      \end{split}
   \end{equation*}
   where in the last step we have used the Caccioppoli inequality.  
\end{proof}

\begin{lemma}\label{dolzman-mueller-lemma-1}
   Let $p_1,p_2, q_1,q_2\in [1,\infty]$, $p_1\neq p_2$, $q_1\neq q_2$.  Let 
   $\theta \in (0,1)$ and define $p,q$ by
   \begin{equation}
      \label{eq:09131347499984}
      \frac{1}{p}=\frac{\theta}{p_1} + \frac{1-\theta}{p_2},\qquad \frac{1}{q}=\frac{\theta}{q_1} + \frac{1-\theta}{q_2}
   \end{equation}

   Suppose that $T$ is  a linear operator such that 
   \begin{equation*}
      \begin{split}
         \left( \frac{1}{|Q|} \sum_Q |T f|^{q_i}\right)^{\frac{1}{q_i}} \leq C_i\left( 
            \frac{1}{|Q|} \sum_Q | f|^{p_i}\right) ^{\frac{1}{p_i}}
      \end{split}
   \end{equation*}

   Then
   \begin{equation*}
      \begin{split}
         \|Tf\|_{q,\infty,Q} \leq C_3 \|f\|_{p,\infty,Q},
      \end{split}
   \end{equation*}
   where $C_3$  depends on $\theta$, $C_1$, $C_2$.  

\end{lemma}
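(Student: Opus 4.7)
My plan is to follow the classical Marcinkiewicz interpolation argument, adapted to the averaged and weak norms on $Q$. Without loss of generality assume $p_1 < p_2$ and $q_1 < q_2$, and set $A := \|f\|_{p,\infty,Q}$, so that by definition $|\{|f|>\mu\}\cap Q| \leq A^p \mu^{-p}|Q|$ for every $\mu>0$.

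Fix $\lambda>0$. For a threshold $\sigma>0$ (to be chosen later as an appropriate power of $\lambda$) split $f = f_\sigma + f^\sigma$ with $f_\sigma := f\,\chi_{\{|f|\leq \sigma\}}$ and $f^\sigma := f - f_\sigma$. By linearity of $T$ and a union bound,
\[
|\{|Tf|>\lambda\}\cap Q| \leq |\{|Tf_\sigma|>\lambda/2\}\cap Q| + |\{|Tf^\sigma|>\lambda/2\}\cap Q|.
\]
Chebyshev's inequality together with the hypothesis for indices $(p_2,q_2)$ yields
\[
|\{|Tf_\sigma|>\lambda/2\}\cap Q| \lesssim |Q|\,\lambda^{-q_2}\, C_2^{q_2}\,\Big(\tfrac{1}{|Q|}\sum_Q |f_\sigma|^{p_2}\Big)^{q_2/p_2},
\]
and analogously for the high part $f^\sigma$ via the $(p_1,q_1)$ hypothesis.

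Next I would control the truncated $L^{p_i}$-averages by $A$ using the layer-cake formula. Since $p_2 > p$,
\[
\tfrac{1}{|Q|}\sum_Q |f_\sigma|^{p_2} = \tfrac{p_2}{|Q|}\int_0^\sigma t^{p_2-1}|\{|f|>t\}\cap Q|\,\dt \lesssim A^p\int_0^\sigma t^{p_2-1-p}\,\dt \lesssim A^p \sigma^{p_2 - p},
\]
and similarly, splitting the layer-cake integral at $t=\sigma$ and using $p_1 < p$, one obtains $\tfrac{1}{|Q|}\sum_Q |f^\sigma|^{p_1} \lesssim A^p \sigma^{p_1-p}$. Substituting back gives
\[
|\{|Tf|>\lambda\}\cap Q| \lesssim |Q|\Big(\lambda^{-q_2}A^{pq_2/p_2}\sigma^{q_2(1-p/p_2)} + \lambda^{-q_1}A^{pq_1/p_1}\sigma^{q_1(1-p/p_1)}\Big).
\]
I would then choose $\sigma$ as the power of $\lambda/A$ that balances the two summands. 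A routine algebraic computation using $1/p = \theta/p_1 + (1-\theta)/p_2$ and $1/q = \theta/q_1 + (1-\theta)/q_2$ shows that the balanced bound becomes $|\{|Tf|>\lambda\}\cap Q| \lesssim |Q|\,\lambda^{-q} A^q$, which reads exactly as $\|Tf\|_{q,\infty,Q} \lesssim A = \|f\|_{p,\infty,Q}$.

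The main obstacle is purely one of algebraic bookkeeping: one must verify that the balancing choice of $\sigma$ produces matching powers of $\lambda$ on both branches and that the $|Q|$-factors built into the averaged norms cancel correctly against the $|Q|^{1/q}$ prefactor hidden in the definition of $\|\cdot\|_{q,\infty,Q}$. This is the standard Marcinkiewicz interpolation argument; the only real care needed is in tracking these normalizations and in handling the degenerate endpoint cases $p_i,q_i\in\{1,\infty\}$ separately (where $f_\sigma$ or $f^\sigma$ may be replaced by $f$ itself).
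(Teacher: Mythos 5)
Your argument is correct and is essentially the same as the paper's proof: both split $f$ at a level depending on the threshold (the paper takes $\gamma=\alpha^{\beta}$, you balance $\sigma$ against $\lambda/A$, which amounts to the same thing up to homogeneity), apply Chebyshev with the two strong-type hypotheses to the high and low parts, control the truncated averages by the weak norm via layer cake, and sum. No substantive difference from the paper's adaptation of the classical Marcinkiewicz/Dolzmann--M\"uller argument.
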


\begin{proof}
   The proof  of this result is well-known (see \eg \cite[Theorem~3.3.1]{butzer_berens}).  
   For completeness, we report an adapted elementary proof from \cite[Lemma~1]{MR1354111}. Let $p_1 < p_2,$  $q_1 < q_2$ and 
   $p$ is as in \eqref{eq:09131347499984}.  Assume 
   that $\|Tf\|_{q_i}\leq C_{i} \| f\|_{p_i}$ with $i=1,2$.  Let $\gamma >0 $ define
   \begin{equation}
      f_1= \begin{cases}
         f & \qquad \text{if $|f|>\gamma$}\\
         0& \qquad \text{if $|f|\leq \gamma$}
      \end{cases}
   \end{equation}
   and
   \begin{equation}
      f_2= \begin{cases}
         0 & \qquad \text{if $|f|>\gamma$}\\
         f& \qquad \text{if $|f|\leq \gamma$}.
      \end{cases}
   \end{equation}
   Given that
   \begin{equation*}
      \begin{split}
         \frac{1}{|Q|} \sum_{Q} |f_1|^{p_1}   \leq  \frac{p_1}{p-p_1} \gamma^{p_1-p} \|f\|^{p}_{p,\infty,Q}
      \end{split}
   \end{equation*}

   we have that
   \begin{equation*}
      \begin{split}
         \Big|\insieme{|Tf_{1}|>\frac{\alpha}{2}}\Big|&\leq  A^{q_1}_{1} 
         \big(\frac{2}{\alpha} \big)^{q_1} \|f_1\|_{p_1}^{q_1}\\ & \leq
         A^{q_1}_1 \big( \frac{2}{\alpha}  \big )^{q_1} 
         \Big(\frac{p_1}{p-p_1}\Big)^{q_1/p_1}\gamma ^{q_1- 
            pq_1/p_1}\|f\|_{p,\infty,Q}^{pq_1/p_1}\\ &= B_1 \alpha^{-q_1}\gamma^{q_1-pq_1/p_1}
      \end{split}
   \end{equation*}
   and similarly 

   \begin{equation}
      \Big|\insieme{|Tf_2|\geq \frac{\alpha}{2}}\Big|\leq B_2\alpha^{- q_2} \gamma^{q_2-pq_2/p_2}.
   \end{equation}

   Now 

   \begin{equation*}
      \|Tf\|_{q,\infty}^{q}= \sup_{\alpha}\alpha^q |\insieme{|Tf|>\alpha}|
   \end{equation*}
   and now using the triangular inequality, we have 
   \begin{equation*}
      \begin{split}
         \alpha^q|\insieme{|Tf|>\alpha/2}|&\leq \alpha^q|\insieme{|Tf_1|>\alpha/2}| +\alpha^q|\insieme{|Tf_2|>\alpha/2}|
         \\ &\leq B_1 \alpha^{-q_1}\gamma^{q_1-pq_1/p_1}+ B_2\alpha^{- q_2} \gamma^{q_2-pq_2/p_2}.
      \end{split}
   \end{equation*}

   One can archive the desired result by choosing $\gamma=\alpha^\beta$ where 
   $\beta= \big( \frac{q}{q_1} -\frac{q}{q_2}\big) \big(\frac{p}{p_1} -\frac{p}{p_2}\big)^{-1}$.  

\end{proof}

\begin{theorem}[Marcinkiewicz interpolation theorem]
   Let $0 < p_0, p_1, q_0, q_1 \leq \infty$ and $0 < \theta < 1$ be such that 
   $q_0 \neq q_1$, and $p_i \leq q_i$ for $i=0,1$. Let $T$ be a sublinear 
   operator which is of weak type $(p_0,q_0)$ and of weak type $(p_1,q_1)$. 
   Then $T$ is of strong type $(p_\theta,q_\theta)$. 
\end{theorem}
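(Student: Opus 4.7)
The plan is the classical truncation-and-layer-cake argument. By symmetry assume $q_0 < q_1$, and write $p := p_\theta$, $q := q_\theta$ for the interpolated exponents. For each threshold $\alpha > 0$ and each cutoff level $\gamma = \gamma(\alpha) > 0$ to be chosen, I would split
\begin{equation*}
   f = f_\alpha^0 + f_\alpha^1, \qquad f_\alpha^0 := f\,\mathbf{1}_{\{|f| > \gamma\}},\qquad f_\alpha^1 := f\,\mathbf{1}_{\{|f| \leq \gamma\}},
\end{equation*}
so that, assuming WLOG $p_0 < p < p_1$, one has $f_\alpha^0\in L^{p_0}$ and $f_\alpha^1\in L^{p_1}$. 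Sublinearity of $T$ yields $\{|Tf|>\alpha\}\subseteq \{|Tf_\alpha^0|>\alpha/2\}\cup \{|Tf_\alpha^1|>\alpha/2\}$, and the weak-type hypotheses bound each piece by $(2A_i\|f_\alpha^i\|_{p_i}/\alpha)^{q_i}$.

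Next I would insert these bounds into the layer-cake identity
\begin{equation*}
   \|Tf\|_{q}^{q} = q\int_0^\infty \alpha^{q-1}\,|\{|Tf|>\alpha\}|\,d\alpha,
\end{equation*}
and express $\|f_\alpha^i\|_{p_i}^{p_i}$ through the distribution function $\lambda_f(s):=|\{|f|>s\}|$ of $f$ via
\begin{equation*}
   \|f_\alpha^0\|_{p_0}^{p_0} = p_0\int_\gamma^\infty s^{p_0-1}\lambda_f(s)\,ds,\qquad \|f_\alpha^1\|_{p_1}^{p_1} = p_1\int_0^\gamma s^{p_1-1}\lambda_f(s)\,ds.
\end{equation*}
After exchanging the $s$- and $\alpha$-integrals by Fubini and choosing the power-law cutoff $\gamma(\alpha) = c\,\alpha^\beta$ with exponent $\beta = (q/q_0 - q/q_1)/(p/p_0 - p/p_1)$, exactly mirroring the choice made in the proof of Lemma~\ref{dolzman-mueller-lemma-1}, both resulting contributions collapse to multiples of $\int_0^\infty s^{p-1}\lambda_f(s)\,ds = p^{-1}\|f\|_p^p$, yielding the desired bound $\|Tf\|_q \lesssim \|f\|_p$.

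The main obstacle is the bookkeeping of exponents: one must verify that the choice of $\beta$ makes both contributions after Fubini reduce to the same power $s^{p-1}$, and that the $\alpha$-integrals converge at both $0$ and $\infty$. The condition $p_i\leq q_i$ is precisely what delivers $p\leq q$ and hence the required convergence. Endpoint cases are handled by minor adaptations: when $q_1=\infty$ one picks $\gamma$ so that $\|Tf_\alpha^1\|_\infty\leq \alpha/2$ pointwise and only the $f_\alpha^0$ term contributes, and $p_i=\infty$ is analogous. Since this is a standard result (see \eg \cite[Theorem~3.3.1]{butzer_berens}), I would reproduce the classical argument rather than reinvent it.
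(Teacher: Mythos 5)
The paper itself does not give an argument here (it simply records the result as well known), so your sketch has to stand on its own, and as written it has one genuine gap. After the truncation and the weak-type bounds, the quantity you must integrate is $\alpha^{q-1}\bigl(A_i\|f_\alpha^i\|_{p_i}/\alpha\bigr)^{q_i}$, i.e.\ the $p_i$-norm raised to the power $q_i$. Writing $\|f_\alpha^i\|_{p_i}^{p_i}$ by the layer-cake formula therefore leaves you with
\begin{equation*}
\int_0^\infty \alpha^{\,q-q_i-1}\Bigl(\int s^{p_i-1}\lambda_f(s)\,ds\Bigr)^{q_i/p_i}\,d\alpha ,
\end{equation*}
and when $p_i<q_i$ the outer exponent $q_i/p_i$ is strictly bigger than $1$, so a plain Fubini exchange of the $s$- and $\alpha$-integrals is not available. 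This is exactly the off-diagonal case that the hypothesis $p_i\le q_i$ is meant to allow, and the classical fix is an extra ingredient your sketch omits: Hardy's inequality (equivalently Minkowski's integral inequality for the mixed norm), as in the Stein--Weiss treatment of the general Marcinkiewicz theorem. A quick homogeneity check confirms that something is missing: your claimed conclusion $\|Tf\|_q^q\lesssim\|f\|_p^p$ scales as $c^q$ on the left and $c^p$ on the right under $f\mapsto cf$, so it can only be correct when $p=q$, i.e.\ in the diagonal case $p_0=q_0$, $p_1=q_1$. In that diagonal case (which is all the paper actually uses, interpolating weak $(p,p)$ against $(\infty,\infty)$) your argument is the standard one and is fine, including the $q_1=\infty$ endpoint treatment.

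Two smaller points: you cannot simultaneously normalize $q_0<q_1$ and $p_0<p<p_1$ "by symmetry," since swapping the indices permutes both pairs at once; fortunately the argument never needs $q_0<q_1$, only $q_0\ne q_1$, with the sign of $\beta$ adjusting automatically. Also the borderline situation $p_0=p_1$ (allowed by the statement) needs the separate, easier argument in which $f$ is not split at all. With Hardy's inequality inserted for the off-diagonal powers and these cases noted, the proof is the classical one, consistent with the reference \cite[Theorem~3.3.1]{butzer_berens} the paper points to for the closely related Lemma~\ref{dolzman-mueller-lemma-1}.
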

\begin{proof}
   The proof is well-known.   
\end{proof}

\begin{remark}
   \label{rmk:norma-debole-K}
   Let $K:\T^{d}_{N}\times \T^{d}_{N}\to \R^{d\times m}$ be such that 
   $|K(x,y)|\leq |x-y|^{2-d}$. Then has that
   \begin{equation*} 
      \begin{split}
         \|K(x,\cdot)\|_{L^{\frac{n}{n-2}},\infty}\leq 1
         ,\qquad\text{and}\qquad    \|K(x,\cdot)\|_{L^{\frac{n}{n-2}},Q,\infty}\leq 1.
      \end{split}
   \end{equation*} 
   Indeed, fix $t>0$ then
   \begin{equation*}
      \begin{split}
         |\insieme{y:\ |K(x,y)| > t}|\leq |\insieme{y: \ |x-y |^{2-d} >t}| = 
         |\insieme{y:\ |x-y|< t^{-(2-d)}} |\leq t^{-\frac{d}{d-2}}.
      \end{split}
   \end{equation*}  
\end{remark}

Let us recall the celebrated Hardy-Littlewood maximal theorem:

\begin{thm}
   \label{thm:hardy-littlewood-maximal}
   Let $f:\T^{d}_{N}\to \R^m$.   Then
   \begin{equation*}
      \begin{split}
         |\Mcal f|_{p}\leq |f|_{p}
      \end{split}
   \end{equation*}
\end{thm}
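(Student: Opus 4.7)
The plan is to deduce the strong-type $(p,p)$ bound for $1<p<\infty$ from the usual two-endpoint interpolation argument, using the Marcinkiewicz theorem already recorded above. The trivial endpoint is $(\infty,\infty)$: for any cube $Q\ni x$,
\begin{equation*}
   \frac{1}{|Q|}\sum_{y\in Q}|f(y)|\le \|f\|_\infty,
\end{equation*}
so $\|\Mcal f\|_\infty\le \|f\|_\infty$. The case $p=\infty$ of the theorem is therefore immediate.

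The real work is the weak-type $(1,1)$ estimate
\begin{equation*}
   \big|\{x\in\T^d_N:\Mcal f(x)>\lambda\}\big|\le \frac{C_d}{\lambda}\,\|f\|_1,
\end{equation*}
which I would establish by a discrete Vitali-type covering argument. For each $x$ in the level set pick a cube $Q_x\ni x$ with $|Q_x|^{-1}\sum_{Q_x}|f|>\lambda$; among the finite family $\{Q_x\}$ extract a maximal pairwise disjoint subfamily $\{Q_{x_i}\}$ by a greedy procedure on the side-lengths (largest first). Maximality forces every $Q_x$ to intersect some already-chosen $Q_{x_i}$ of side-length $\ge$ that of $Q_x$, hence $Q_x\subset 3Q_{x_i}$. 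Provided $N$ is large enough that the cubes involved do not wrap around the torus (if they do, one replaces the cube by its image in $\Z^d$ and uses the identification), this yields
\begin{equation*}
   \big|\{\Mcal f>\lambda\}\big|\le \sum_i |3Q_{x_i}|\le 3^d\sum_i |Q_{x_i}|\le \frac{3^d}{\lambda}\sum_i\sum_{Q_{x_i}}|f|\le \frac{3^d}{\lambda}\|f\|_1,
\end{equation*}
where the last inequality uses the disjointness of the $Q_{x_i}$.

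Finally, since $\Mcal$ is clearly sublinear (the supremum of averages of $|f+g|$ is bounded by sum of suprema), the Marcinkiewicz interpolation theorem applied with $(p_0,q_0)=(1,1)$ and $(p_1,q_1)=(\infty,\infty)$ yields the strong-type $(p,p)$ bound for every $p\in(1,\infty)$, with an implicit dimensional constant $C_{p,d}$ (absorbed in the statement). The main technical obstacle is the discrete Vitali covering step: one must verify that the greedy extraction together with the dilation-by-$3$ argument really produces a cover, and check that the toroidal geometry does not introduce pathological overlaps; this is standard once one notes that only cubes of side-length at most $L^N/3$ can contribute to $\Mcal f$ in a non-degenerate way, the contribution of larger cubes being controlled trivially by $L^{-Nd}\|f\|_1\le \lambda$ in the relevant range.
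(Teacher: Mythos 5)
Your argument is correct and is exactly the classical one: trivial $(\infty,\infty)$ bound, weak $(1,1)$ via a greedy Vitali-type covering (with the standard remark disposing of cubes comparable to the whole torus), and Marcinkiewicz interpolation, for which the sublinearity of $\Mcal$ and the hypotheses $p_i\le q_i$, $q_0\neq q_1$ are satisfied. The paper itself offers no proof here --- it merely recalls the result as well known --- so your write-up simply supplies the standard details the paper takes for granted. One small point: as stated the inequality carries constant $1$, which cannot hold for finite $p$ (since $\Mcal f\ge |f|$ pointwise when single-point cubes are admitted); a constant $C_{p,d}$ must be understood, as you correctly note when you absorb the dimensional constant from the covering lemma and the interpolation step.
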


\begin{theorem}[Fefferman-Stein]
   \label{thm:feffermanstein}
   Let $Q$ be a cube and let $f:Q\to \R^m$ such that $\sum_{Q} f = 0$. Then there 
   exists constants $C_1,C_2$ such that 
   \begin{equation*}
      \begin{split}
         \|\Mcal f\|_{p,Q}\leq C_{1} \|f^{\#}\|_{p,Q} \qquad \text{and} \qquad 
         \|f^{\#}\|_{p,Q}\leq C_2\|\Mcal f\|_{p,Q}.
      \end{split}
   \end{equation*}
\end{theorem}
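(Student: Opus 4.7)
The plan is to prove the two inequalities separately. The easier direction $\|f^{\#}\|_{p,Q}\le C_2\|\Mcal f\|_{p,Q}$ follows immediately from the pointwise bound $f^{\#}(x)\le 2\Mcal f(x)$: for any cube $Q'\ni x$ one has
\[
\frac{1}{|Q'|}\sum_{Q'}|f-f_{Q'}|\le \frac{2}{|Q'|}\sum_{Q'}|f|\le 2\Mcal f(x),
\]
so taking the supremum over $Q'\ni x$ and then the $L^p$-norm yields the bound with $C_2=2$.

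For the nontrivial direction $\|\Mcal f\|_{p,Q}\le C_1\|f^{\#}\|_{p,Q}$, the main tool is a \emph{good-$\lambda$} inequality of the form
\[
|\{x\in Q:\Mcal f(x)>2\lambda,\ f^{\#}(x)\le\gamma\lambda\}|\le b\,|\{x\in Q:\Mcal f(x)>\lambda\}|,
\]
for some $\gamma>0$ and $b<1$ independent of $\lambda$. Once this is established, multiplying by $p\lambda^{p-1}$, integrating in $\lambda$, and splitting the resulting integral to absorb the $b$-term produces $\|\Mcal f\|_{p,Q}\lesssim \gamma^{-1}\|f^{\#}\|_{p,Q}$ in the standard way.

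To prove the good-$\lambda$ inequality, I would perform a discrete Calder\'on--Zygmund (Whitney-type) decomposition of the level set $\Omega_\lambda:=\{x\in Q:\Mcal f(x)>\lambda\}$, writing it as a disjoint union of dyadic subcubes $\{Q_j\}$ such that on a small dilate $Q_j^*$ the average $|f_{Q_j^*}|$ is still $\le C\lambda$. The hypothesis $\sum_Q f=0$ is precisely what makes this stopping-time argument work: it guarantees that $|f_Q|=0\le\lambda$ for every $\lambda>0$, so the subdivision always starts from cubes where the selection criterion fails. On each $Q_j$, the contribution to $\Mcal f$ coming from outside $Q_j^*$ is bounded by $C\lambda$, so any $x\in Q_j$ with $\Mcal f(x)>2\lambda$ must satisfy $\Mcal((f-f_{Q_j^*})\chi_{Q_j^*})(x)>\lambda$. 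Using the weak $(1,1)$-estimate for $\Mcal$ (a consequence of Theorem~\ref{thm:hardy-littlewood-maximal}) on $Q_j^*$ and the hypothesis $f^{\#}(x_0)\le\gamma\lambda$ at some $x_0\in Q_j$, one obtains
\[
|\{x\in Q_j:\Mcal f(x)>2\lambda\}\cap\{f^{\#}\le\gamma\lambda\}|\le \frac{C}{\lambda}\sum_{Q_j^*}|f-f_{Q_j^*}|\le C\gamma|Q_j^*|.
\]
Summing over $j$ and choosing $\gamma$ small enough gives $b<1$.

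The main technical obstacle is adapting the continuous Whitney decomposition to the discrete torus $\T^d_N$: one must verify that the dyadic stopping-time procedure terminates at scale $1$, that the dilates $Q_j^*$ remain well-defined with bounded overlap, and that the comparison between $\sum_{Q_j^*}|f-f_{Q_j^*}|$ and $|Q_j^*|\,f^{\#}(x_0)$ survives the discretization. Once those details are handled, the rest of the argument is structurally identical to the classical Euclidean proof, and the layer-cake integration in $\lambda$ closes the estimate.
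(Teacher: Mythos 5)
Your proof is essentially correct, but it takes a genuinely different route from the paper. You reprove the Fefferman--Stein inequality from scratch in the discrete setting: the easy direction via the pointwise bound $f^{\#}\le 2\Mcal f$, and the hard direction via a good-$\lambda$ inequality built on a discrete Calder\'on--Zygmund/Whitney decomposition plus the weak $(1,1)$ bound for $\Mcal$. The paper instead gives a one-line reduction: interpolate $f:Q\to\R^m$ piecewise linearly and invoke the classical continuous Fefferman--Stein theorem, which implicitly defers the work to checking that the discrete maximal and sharp functions are comparable to their continuous counterparts for the interpolant. Your route is self-contained, keeps all constants dimensional, and makes the role of the hypothesis $\sum_Q f=0$ explicit; the paper's route is much shorter but hides the discretization details you confront head-on. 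One small correction to your write-up: the Calder\'on--Zygmund stopping criterion involves the average of $|f|$, not $|f_Q|$, so the mean-zero hypothesis enters not through $|f_Q|=0$ but through the observation that $\frac{1}{|Q|}\sum_Q|f|=\frac{1}{|Q|}\sum_Q|f-f_Q|\le \inf_{x\in Q}f^{\#}(x)$; consequently, for $\lambda$ below this threshold the set $\{f^{\#}\le\gamma\lambda\}$ is empty and the good-$\lambda$ inequality is vacuous, which is how the small-$\lambda$ regime (which your sketch does not treat explicitly) is closed before the layer-cake integration.
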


\begin{proof} 
   The proof follows from the classical Fefferman\&Stein result after one does a piecewise linear interpolation of the function $f:Q\to \R^{m}$.

\end{proof}

\begin{corollary}
   Let $T$ be an linear operator such that for every $f:Q\to \R^{m}$. 
   Then for every $q > p$,  there exists a constant $C:=C(p)$ such that for every $f:Q\to \R^{m}$  it holds 
   \begin{equation*} 
      \begin{split}
         \sum_{x \in Q} |Tf^\#(x) |^{p}  \leq \sum_{x\in Q} |f(x)|^{p}.
      \end{split}
   \end{equation*} 
\end{corollary}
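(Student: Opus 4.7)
The hypothesis on $T$ in the statement appears to be truncated, but from context — in particular the placement immediately after Theorem (Fefferman-Stein), Theorem (Hardy-Littlewood), and the Marcinkiewicz interpolation theorem — the corollary is a standard bootstrap in which one upgrades a sharp-maximal control of $Tf$ to an $L^p$ bound. I will plan the argument under the natural reading that the hypothesis provides the pointwise bound $(Tf)^{\#}(x) \lesssim \Mcal f(x)$ (equivalently, $T$ is of weak type $(q,q)$ for every $q > p$ via Marcinkiewicz, which yields the same sharp-function estimate). The conclusion then follows by chaining the three preceding results.

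First, I would reduce to the mean-zero case by replacing $f$ by $f - f_Q$ and $Tf$ by $Tf - (Tf)_Q$. This is legitimate because the sharp maximal function is insensitive to additive constants, and the normalized norm $\|f - f_Q\|_{p,Q}$ is bounded by a constant multiple of $\|f\|_{p,Q}$. Once centered, Theorem (Fefferman-Stein) applies and gives
$$\|Tf\|_{p,Q} \;\lesssim\; \|\Mcal(Tf)\|_{p,Q} \;\lesssim\; \|(Tf)^{\#}\|_{p,Q},$$
where the first inequality is the trivial pointwise domination $|g(x)| \leq \Mcal g(x)$ and the second is the nontrivial direction of Fefferman-Stein recorded in the paper.

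Next, I would apply the assumed sharp-to-maximal bound on $T$ to convert $(Tf)^{\#}$ to $\Mcal f$, and then invoke Theorem (Hardy-Littlewood) for the final step:
$$\|(Tf)^{\#}\|_{p,Q} \;\lesssim\; \|\Mcal f\|_{p,Q} \;\lesssim\; \|f\|_{p,Q}.$$
Chaining these inequalities and raising to the $p$-th power yields $\sum_Q |Tf|^p \lesssim \sum_Q |f|^p$, which is the desired conclusion.

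The main obstacle, modulo clarifying the exact hypothesis, is purely bookkeeping: I must ensure the constants are uniform in the cube $Q$ (which is automatic once one works consistently with the normalized norms $\|\cdot\|_{p,Q}$ used throughout \Sec~\ref{sec:discrete-estimates}), and that the mean-zero reduction is compatible with $T$ (clear, since $T$ is linear and one may subtract the constant $(Tf)_Q$ after the fact). The role of the parameter $q > p$ in the statement is, I expect, to provide the extra $L^q$-boundedness needed to deduce the pointwise bound $(Tf)^{\#} \lesssim \Mcal f$ through a Coifman-Rochberg type argument; once that pointwise bound is in hand, the proof is a direct concatenation of the three results already established in this section.
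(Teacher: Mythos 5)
Your chain of inequalities is the classical Coifman--Fefferman argument and is internally consistent, but it rests on a reading of the (admittedly truncated) hypothesis that is both different from and stronger than the one the paper actually uses, and it skips the step that is the real point of the corollary. The paper's proof assumes no pointwise bound $(Tf)^{\#}\lesssim \Mcal f$. Its input is two endpoint facts about the composed map $f\mapsto (Tf)^{\#}$: it is of weak type $(p,p)$, and it is bounded from $\Lp^{\infty}$ to $\Lp^{\infty}$ --- the latter being exactly what the preceding lemma supplies for the constant-coefficient solution operator ($\Lp^{\infty}\to\bmo$ continuity of $f\mapsto\nabla u$). The Marcinkiewicz interpolation theorem, stated just above for this purpose, then gives strong type $(q,q)$ of $f\mapsto (Tf)^{\#}$ for every $q>p$; this is the actual role of the parameter $q>p$, not a Coifman--Rochberg argument. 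Theorem~\ref{thm:feffermanstein} then converts the bound on $(Tf)^{\#}$ into a bound on $\Mcal(Tf)$ and hence on $Tf$. Your route replaces the interpolation step by the a priori pointwise estimate $(Tf)^{\#}\lesssim\Mcal f$ followed by Theorem~\ref{thm:feffermanstein} and Theorem~\ref{thm:hardy-littlewood-maximal}; that shorter chain is correct whenever such a pointwise bound is available, but it does not plug into the surrounding material, since the endpoint bound $\Lp^{\infty}\to\bmo$ established in the previous lemma does not imply a pointwise domination of $(Tf)^{\#}$ by $\Mcal f$. In particular your parenthetical claim that weak type $(q,q)$ for all $q>p$ ``yields the same sharp-function estimate'' is not correct: weak-type information alone never produces a pointwise bound of this kind, so under the paper's intended hypotheses your argument has a genuine gap exactly where the paper interpolates.

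On the shared part of the two arguments: your use of Theorem~\ref{thm:feffermanstein} and the mean-zero normalization (subtracting $f_Q$ and $(Tf)_Q$) is the right bookkeeping, and both your write-up and the paper's gloss over the control of the subtracted average $(Tf)_Q$, which must be supplied separately in the application; this is a minor point compared with the missing Marcinkiewicz step.
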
 

\begin{proof}
   The map $f\mapsto (Tf)^\#$ is a sublinear  and a bounded  map from 
   $\Lp^{\infty}(\mathcal{X})\to \Lp^{\infty}(\mathcal{X})$ which is of weak type $(p,p)$ and of weak type $(\infty,\infty)$.
   Then for every $q\geq p$, it holds that $f\mapsto (Tf)^\#$ is bounded.
   This implies that $f\mapsto M(Tf)$ is bounded because Theorem~\ref{thm:feffermanstein} 
   and hence $f\mapsto Tf$ is bounded.   
\end{proof}

In the next lemma $A=A_0$ is a constant positive definite operator.   

Let us now recall a classical result.  We also provide a proof for completeness.  

\begin{lemma}[{\cite[Lemma V.3.1]{MR717034} }]
   \label{lemma_iteration1}
   Assume that $\phi(\rho)$ is a non-negative, real-valued, bounded function defined on an interval $[r,R] \subset \R^+$. Assume further that for all $r \leq \rho < \sigma \leq R$ we have
   \begin{equation*}
      \phi(\rho) \leq \big[ A_1 (\sigma - \rho)^{-\alpha_1} + A_2 (\sigma - \rho)^{- \alpha_2} + A_3 \big] + \vartheta \phi(\sigma) 
   \end{equation*}
   for some non-negative constants $A_1, A_2, A_3$, non-negative exponents $\alpha_1\geq \alpha_2$, and a parameter $\vartheta \in [0,1)$. Then we have
   \begin{equation*}
      \phi(r) \leq c(\alpha_1,\vartheta) \big[ A_1 (R - r)^{-\alpha_1} + A_2 (R - r)^{- \alpha_2} + A_3 \big] \,.
   \end{equation*}

   \begin{proof}
      We proceed by iteration and start by defining a sequence $(\rho_i)_{i \in \N_0}$ via
      \begin{equation*}
         \rho_i := r + (1-\lambda^i) (R - r)
      \end{equation*}
      for some $\lambda \in (0,1)$. This sequence is increasing, converging to $R$, and the difference of two subsequent members is given by
      \begin{equation*}
         \rho_{i} - \rho_{i-1} = (1 - \lambda) \lambda^{i-1} (R - r) \,.
      \end{equation*}
      Applying the assumption inductively with $\rho=\rho_i$, $\sigma = \rho_{i-1}$ and taking into account $\alpha_1 > \alpha_2$, we obtain
      \begin{align*}
         \phi(r) & \leq A_1 (1-\lambda)^{-\alpha_1} (R - r)^{-\alpha_1} + A_2 (1-\lambda)^{-\alpha_2} (R - r)^{-\alpha_2} + A_3 + \vartheta \phi(\rho_1) \\
         & \leq \vartheta^k \phi(\rho_k) + (1-\lambda)^{- \alpha_1} \sum_{i=0}^{k-1} \vartheta^{i} \lambda^{-i \alpha_1} \big[ A_1 (R - r)^{-\alpha_1} + A_2 (R - r)^{- \alpha_2} + A_3 \big]
      \end{align*}
      for every $k \in \N$. If we now choose $\lambda$ in dependency of $\vartheta$ and $\alpha_1$ such that $\vartheta \lambda^{-\alpha_1} < 1$, then the series on the right-hand side converges. Therefore, passing to the limit $k \to \infty$, we arrive at the conclusion with constant $c(\alpha_1,\vartheta)= (1-\lambda)^{- \alpha_1} (1- \vartheta \lambda^{-\alpha_1})^{-1}$.
   \end{proof}
\end{lemma}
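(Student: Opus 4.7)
The plan is to iterate the hypothesis along a carefully chosen geometric sequence pushing the larger endpoint from $r$ out to $R$, and then let the remainder term die via the boundedness of $\phi$. Concretely, set $\rho_0:=r$ and $\rho_i:=r+(1-\lambda^i)(R-r)$ for a parameter $\lambda\in(0,1)$ to be fixed at the end; this sequence is strictly increasing, $\rho_i\to R$, and the gaps are $\rho_{i+1}-\rho_i=(1-\lambda)\lambda^i(R-r)$. Applying the assumption with $\rho=\rho_i$ and $\sigma=\rho_{i+1}$ yields
\[
\phi(\rho_i) \leq A_1(1-\lambda)^{-\alpha_1}\lambda^{-i\alpha_1}(R-r)^{-\alpha_1} + A_2(1-\lambda)^{-\alpha_2}\lambda^{-i\alpha_2}(R-r)^{-\alpha_2} + A_3 + \vartheta\,\phi(\rho_{i+1}).
\]

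Next I would iterate this bound $k$ times starting from $i=0$, so that every forcing term accumulates an extra factor of $\vartheta$ at each step and the remainder becomes $\vartheta^k\phi(\rho_k)$. Since $\phi$ is bounded on $[r,R]$ and $\vartheta<1$, the remainder tends to $0$ as $k\to\infty$, leaving
\[
\phi(r) \leq (1-\lambda)^{-\alpha_1}(R-r)^{-\alpha_1}A_1 \sum_{i\geq 0}(\vartheta\lambda^{-\alpha_1})^i + (1-\lambda)^{-\alpha_2}(R-r)^{-\alpha_2}A_2 \sum_{i\geq 0}(\vartheta\lambda^{-\alpha_2})^i + \frac{A_3}{1-\vartheta}.
\]

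The only delicate point, and the ``main obstacle'' such as there is one, is to choose $\lambda$ so that both geometric series converge. Since $\vartheta\in[0,1)$ and $\alpha_1\geq\alpha_2\geq 0$ (the case $\alpha_1=0$ being trivial, as both negative powers reduce to constants), any $\lambda\in(\vartheta^{1/\alpha_1},1)$ forces $\vartheta\lambda^{-\alpha_1}<1$ and, because $\lambda<1$ and $\alpha_2\leq\alpha_1$, also $\vartheta\lambda^{-\alpha_2}\leq\vartheta\lambda^{-\alpha_1}<1$. Fixing such a $\lambda$ in dependence on $\vartheta$ and $\alpha_1$, the three series sum to a finite constant depending only on $\vartheta$ and $\alpha_1$, and collecting everything into a single $c(\alpha_1,\vartheta)$ yields the claimed bound
\[
\phi(r)\leq c(\alpha_1,\vartheta)\bigl[A_1(R-r)^{-\alpha_1}+A_2(R-r)^{-\alpha_2}+A_3\bigr].
\]
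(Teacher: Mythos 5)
Your proof is correct and follows essentially the same iteration argument as the paper: same geometric sequence $\rho_i = r + (1-\lambda^i)(R-r)$, same telescoping use of the hypothesis with the remainder killed by boundedness of $\phi$ and $\vartheta < 1$, and the same choice of $\lambda$ with $\vartheta\lambda^{-\alpha_1} < 1$. The only cosmetic difference is that you keep the three geometric series separate before domination, whereas the paper bounds all three terms at once by the worst factor $\vartheta^i\lambda^{-i\alpha_1}$; you also (correctly) pair $\rho=\rho_i$, $\sigma=\rho_{i+1}$, whereas the paper's text has the pair reversed, which is a typo.
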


\begin{lemma}
   Let $u$  be a solution to 
   \begin{equation}
      \label{eq:main}
      \begin{split}
         \begin{cases}
            \Acal_0 u = \diver f, &\text{in }Q_{M},\\ 
            u =0\, &\text{in } \T^{d}_{N}\setminus \bar Q_{M}.
         \end{cases}
      \end{split}
   \end{equation}
   The map $f\mapsto \nabla u$ is a continuous map from $\Lp^{\infty}\to \bmo$
\end{lemma}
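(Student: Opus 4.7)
The plan is to establish the pointwise BMO bound by a Calder\'on-Zygmund type splitting. Fix any cube $Q = Q_\rho(x_0) \subset Q_M$ and aim to bound $\frac{1}{|Q|}\sum_Q |\nabla u - (\nabla u)_Q|$ by $C\|f\|_{L^\infty}$. Decompose $f = f_1 + f_2$ with $f_1 := f \cdot \chi_{Q_{2\rho}(x_0)}$, and let $u_i$ solve $\Acal_0 u_i = \diver f_i$ with zero boundary data on $\partial Q_M$, so that $u = u_1 + u_2$. The triangle inequality gives
\[
\frac{1}{|Q|}\sum_Q |\nabla u - (\nabla u)_Q| \leq \frac{2}{|Q|}\sum_Q |\nabla u_1| + \frac{1}{|Q|}\sum_Q |\nabla u_2 - (\nabla u_2)_Q|,
\]
and it suffices to control each summand separately.

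For the $u_1$-piece, testing the equation against $u_1$ and invoking the ellipticity \eqref{eq:ellipticity} yields the energy estimate $\|\nabla u_1\|_{L^2(Q_M)} \leq c_0^{-1}\|f_1\|_{L^2} \lesssim |Q_{2\rho}|^{1/2}\|f\|_{L^\infty}$. Cauchy--Schwarz on $Q$ then gives $\frac{1}{|Q|}\sum_Q |\nabla u_1| \leq |Q|^{-1/2}\|\nabla u_1\|_{L^2(Q)} \lesssim (|Q_{2\rho}|/|Q|)^{1/2}\|f\|_{L^\infty} \lesssim \|f\|_{L^\infty}$ since the volume ratio is a dimensional constant.

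For the $u_2$-piece, observe that $f_2 \equiv 0$ on $Q_{2\rho}(x_0)$, so $u_2$ is $\Acal_0$-harmonic there; since $\Acal_0$ has constant coefficients, every derivative of $u_2$ is also $\Acal_0$-harmonic. Discrete integration by parts gives the kernel representation $\nabla u_2(x) = \sum_y K_0(x,y) f_2(y)$, with $K_0(x,y) := -\nabla_x\nabla_y \Ccal_{A_0}(x,y)$. Granted the Calder\'on-Zygmund type bound $|\nabla_x K_0(x,y)| \lesssim |x-y|^{-d-1}$ for $x \neq y$, one computes for every $x, z \in Q$
\[
|\nabla u_2(x) - \nabla u_2(z)| \leq \|f\|_{L^\infty} |x-z| \sum_{|y - x_0| \geq 2\rho} |y - x_0|^{-d-1} \lesssim \|f\|_{L^\infty},
\]
using $|x-z| \lesssim \rho$ and $\sum_{|y-x_0| \geq 2\rho} |y-x_0|^{-d-1} \lesssim \rho^{-1}$. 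This shows $\osc_Q(\nabla u_2) \lesssim \|f\|_{L^\infty}$, which in turn controls the second summand.

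The main obstacle is the discrete kernel-gradient estimate $|\nabla_x K_0(x,y)| \lesssim |x-y|^{-d-1}$, i.e.\ the sharp third-derivative decay of the constant-coefficient fundamental solution $\Ccal_{A_0}$. This is derived by applying the decay estimate of Lemma~\ref{lemma:decay_estimates} iteratively three times (gaining one derivative at each step via the Caccioppoli inequality) to the $\Acal_0$-harmonic function $\Ccal_{A_0}(\cdot, y)$ on dyadically shrinking cubes around $x$, combined with the Sobolev inequality of Proposition~\ref{Sobolevpropi-iv} to pass from $L^2$-means to pointwise bounds, and a global $L^2$-bound on $\Ccal_{A_0}(\cdot, y)$ from the energy estimate. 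Some routine case-splitting is needed when $Q_{2\rho}(x_0)$ overlaps $\partial Q_M$ or when $\rho$ is comparable to $M$; in the latter regime the trivial bound $\frac{1}{|Q|}\sum_Q |\nabla u| \leq (|Q_M|/|Q|)^{1/2}\|\nabla u\|_{L^2} \lesssim \|f\|_{L^\infty}$ already suffices.
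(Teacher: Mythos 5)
Your near/far (Calder\'on--Zygmund) splitting is a genuinely different route from the paper's, but the far-field step has a real gap. The function $u_2$ solves the \emph{Dirichlet} problem on $Q_M$ (it vanishes on $\T^d_N\setminus \bar Q_M$), so its representation kernel is the Dirichlet Green's function of the cube $Q_M$, not the translation-invariant torus kernel $\Ccal_{A_0}$: with $\Ccal_{A_0}$ the formula $\nabla u_2(x)=\sum_y K_0(x,y)f_2(y)$ is false, since it does not encode the boundary condition. If you replace it by the Dirichlet Green's function $G_{Q_M}$, the bound you need, $|\nabla_x\nabla_x\nabla_y G_{Q_M}(x,y)|\lesssim |x-y|^{-d-1}$, must hold uniformly up to $\partial Q_M$ in \emph{both} variables: $f_2$ is supported up to the boundary, and the cube $Q$ on which you measure the oscillation may lie within distance $\rho$ of $\partial Q_M$ with $\rho\ll M$ (a regime not covered by your ``trivial bound'', which only helps when $\rho\sim M$). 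That is a boundary-regularity statement for a constant-coefficient elliptic \emph{system} on a cube, corners included; it cannot be produced by the purely interior tools you invoke (Caccioppoli, Lemma~\ref{lemma:decay_estimates}, Sobolev, a global energy bound), which give estimates degenerating like negative powers of $\dist(\cdot,\partial Q_M)$, and no reflection argument is available for general $A_0$. So the sentence ``Granted the Calder\'on--Zygmund type bound\dots'' assumes exactly the missing, hard ingredient; it is not routine case-splitting.

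For comparison, the paper avoids kernels altogether: on concentric cubes it writes $u=u_0+u_1$, where $u_1$ carries the inhomogeneity and is controlled by the energy estimate $\sum_{Q_M}|\nabla u_1|^2\lesssim M^d\|f\|_\infty^2$, while $u_0$ is $A_0$-harmonic so its mean oscillation decays by Lemma~\ref{lemma:decay_estimates}; the iteration Lemma~\ref{lemma_iteration1} then gives $\frac{1}{|Q_m|}\sum_{Q_m}|\nabla u-(\nabla u)_m|^2\lesssim \|f\|_\infty^2$ on every scale, which is the BMO bound. Your near-part energy estimate is fine and is essentially the same as the paper's bound for $u_1$; to repair the proposal you would either have to prove up-to-the-boundary derivative estimates for the discrete Dirichlet Green's function of the cube, or abandon the global kernel representation and perform the harmonic replacement locally, as the paper does.
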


\begin{proof}
   Let $m\leq[M/2]$ and let  $u_1$ be such that 
   \begin{equation*}
      \begin{split}
         \begin{cases}
            \diver (A \nabla u_1) = \diver f &\text{in } Q_{M}\\
            u_1 =0 &\text{in }\T^{d}_N\setminus \bar Q_M
         \end{cases}
      \end{split}
   \end{equation*}
   and $u_0=u-u_1$.   Notice that $\diver (A \nabla u_0)=0$ in $Q_M$.  
   We have
   \begin{equation*}
      \begin{split}
         \sum_{Q_M} |\nabla u_1|^2\lesssim\sum_{Q_M} A \nabla u_1\cdot \nabla u_1 \leq\sum_{Q_M} 
         f \nabla u_1 \leq |f|_{\infty}M^{d/2} \left( \sum_{Q_{M}} |\nabla u_1|^2\right)^{1/2}
      \end{split}
   \end{equation*}
   from which we have that 
   \begin{equation*}
      \begin{split}
         \sum_{Q_M} |\nabla u_1|^{2} \leq M^{d} |f|^{2}_{\infty}
      \end{split}
   \end{equation*}

   Given that from Lemma~\ref{lemma:consequences_caccippoli}  we have that 

   \begin{equation*}
      \begin{split}
         \sum_{Q_m}| \nabla u_0 -(\nabla u_0)_{m}|^{2} \lesssim \left( \frac{m}{M}   
         \right)^{d+2} \sum_{Q_M} |\nabla u_0 - (\nabla u_0)_{M}|^2
      \end{split}
   \end{equation*}
   it follows that
   \begin{equation*}
      \begin{split}
         \sum_{Q_m} |\nabla u - (\nabla u)_m|^{2} \leq \left( \frac{m}{M}\right)^{d+2} 
         \sum_{Q_M}|\nabla u -(\nabla u)_{M}|^2 + \sum_{Q_m} |\nabla u_1|^{2}\leq \left( 
            \frac{m}{M}\right) ^{d+2} + M^{d} |f|_{\infty}^2
      \end{split}
   \end{equation*}
   Finally using Lemma~\ref{lemma_iteration1} we have the desired result.

\end{proof}

From now on  $A=A(x)$, namely depends on the space.  

The next lemma is an adaption of \cite[Lemma~2]{MR1354111} to 
the discrete case.  The original proof is based on an argument
in \cite{morrey}. We will rather use an argument based on Theorem~\ref{thm:feffermanstein}.

In the continuous case, the analog version of the next lemma can be found in~\cite[Lemma~2]{MR1354111}.

\begin{lemma} [Global estimate]
   \label{lemma:dolzman-mueller-lemma-2} Let $p \in  (1, \infty)$ $q\in (1,n)$
   \begin{enumerate}
      \item If $f: \T^d_N\to R^{md}$, 
         $g:\T^d_N\to \R^m$ and let $u$ be the solution of
         \begin{equation*}
            \begin{cases}
               -\diver (A\nabla  u) = \diver f + g& \text {in } Q_M\\
               u = 0 &\text{in } \T^d_N \setminus\bar Q_M
            \end{cases}
         \end{equation*}
         Then if 
         \begin{equation*}
            \begin{split}
               s=\min(p,q^*), \qquad q^{*}= \frac{dq}{d-q}
            \end{split}
         \end{equation*}
         we have
         \begin{equation*}
            \begin{split}
               \left(\sum_{Q_{M}} | \nabla u|^{s}\right)^{1/s} \lesssim  \left(  
                  \sum_{Q_M} | f|^{p}\right)^{1/p} + \left( \sum_{Q_M} |M g|^{q} \right)^{1/q}
            \end{split}
         \end{equation*}
      \item and 
         \begin{equation*}
            \|u\|_{s^*,\infty} + \|\nabla u\|_{s,\infty}\leq C \left( 
               \|f\|_{p,\infty,Q_M} + |g|_{q,\infty,Q_M}\right)
         \end{equation*}
   \end{enumerate}
\end{lemma}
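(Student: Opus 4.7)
I would establish (i) in three steps---reduction to the case $g=0$, a constant-coefficient $L^p$-bound, and a perturbation argument for variable coefficients---and then deduce (ii) by Marcinkiewicz interpolation applied to (i).

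\emph{Step 1 (reduction to $g=0$).} I would first dispose of the source $g$ by writing it as a divergence. Solve the discrete Dirichlet problem $-\Delta w=g$ on $Q_M$ with $w=0$ on $\T^d_N\setminus\bar Q_M$, and set $F:=-\nabla w$, so that $\diver F=g$. Proposition~\ref{Sobolevpropi-iv-in-chapter}(i), applied to the Newtonian potential of $g$, provides $\|F\|_{q^*}\lesssim\|g\|_q\leq\|\mathcal{M} g\|_q$, the last bound following from Theorem~\ref{thm:hardy-littlewood-maximal}. Absorbing $F$ into $f$, the equation becomes $-\diver(A\nabla u)=\diver\tilde f$, with $\tilde f:=f+F$ satisfying $\|\tilde f\|_s\lesssim\|f\|_p+\|\mathcal{M} g\|_q$ and $s=\min(p,q^*)$, so it suffices to treat $g=0$.

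\emph{Step 2 (constant coefficients).} For $A\equiv A_0$, let $T_0:\tilde f\mapsto\nabla u$. The energy estimate gives $\|T_0\tilde f\|_2\leq c_0^{-1}\|\tilde f\|_2$, while the lemma immediately preceding the current one yields $T_0:L^\infty\to\bmo$. Combining these two endpoints through Theorem~\ref{thm:feffermanstein} and Theorem~\ref{thm:hardy-littlewood-maximal}, exactly along the lines of the Corollary just above, extends $T_0$ to a bounded operator $L^p\to L^p$ for $2\le p<\infty$; duality, using $A_0=A_0^\ast$, then covers $1<p\le 2$. Hence $\|\nabla u\|_s\lesssim\|\tilde f\|_s$.

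\emph{Step 3 (perturbation off constants).} For variable $A$, pick $A_0:=A(0)$; the smallness condition~\eqref{eq:cond-A_N} forces $\|A-A_0\|_{L^\infty}\leq\varepsilon_0$. Recasting
\begin{equation*}
-\diver(A_0\nabla u)=\diver\bigl(\tilde f+(A-A_0)\nabla u\bigr)
\end{equation*}
and applying the constant-coefficient bound of Step~2 yields
\begin{equation*}
\|\nabla u\|_s\le C_s\|\tilde f\|_s+C_s\varepsilon_0\|\nabla u\|_s.
\end{equation*}
Choosing $\varepsilon_0<1/(2C_s)$ lets me absorb the last term, completing (i).

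\emph{Step 4 (weak-type bounds and main obstacle).} For (ii), I would apply the Marcinkiewicz-type interpolation of Lemma~\ref{dolzman-mueller-lemma-1} to the linear map $(f,g)\mapsto\nabla u$ between two nearby strong-type pairs furnished by (i); the $L^{s^*,\infty}$-bound on $u$ itself then follows from the discrete Sobolev inequality (Proposition~\ref{Sobolevpropi-iv-in-chapter}(i)) once $\nabla u$ is controlled in $L^{s,\infty}$. \emph{The principal technical obstacle} lies in Step~1: implementing the discrete potential-theoretic reduction on the cube (or equivalently the torus) and extracting an $L^{q^*}$-estimate for $\nabla w$ with constants independent of $N$ and $M$ and with the correct scaling. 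Once Step~1 and the constant-coefficient $L^p$-theory of Step~2 are in place, the perturbation of Step~3 exploits the smallness of $\varepsilon_0$ in an essentially routine way.
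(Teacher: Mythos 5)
Your proposal follows essentially the same route as the paper's proof: freeze the coefficients at a point of $Q_M$, convert the zeroth-order datum into divergence form through an auxiliary constant-coefficient (Poisson) problem, treat $(A-A_0)\nabla u$ as a small perturbation that is absorbed thanks to \eqref{eq:cond-A_N} (the paper phrases this as a contraction/fixed-point argument, you as direct absorption, which is equivalent in this finite-dimensional setting), and deduce the weak-type bounds in (ii) via the interpolation Lemma~\ref{dolzman-mueller-lemma-1}. The remaining differences are organizational rather than substantive: you dispense with the cutoff $\eta$ by using the Dirichlet solution operator on $Q_M$ directly, and you make explicit the constant-coefficient $L^p$ step (interpolation between the $L^2$ energy bound and the $L^\infty\to\bmo$ lemma via Theorem~\ref{thm:feffermanstein}, plus duality) which the paper invokes implicitly as ``the constant coefficient case.''
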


\begin{proof}
   Let $x_0$ be the center of the cube $Q_M$.  For simplicity of notation we 
   will denote by $A_0:=A(x_0)$.  
   With simple algebraic manipulations we have 
   \begin{equation*}
      \begin{split}
         \diver(A_0\nabla u)= \diver (f + (A_0-A)\nabla u )   
      \end{split}
   \end{equation*}

   Let $\eta$ such that $\eta\equiv 0$ in $\T^d_N\setminus \bar Q_M$. Then we have 
   \begin{equation*}
      \begin{split}
         \diver ( A_{0} \nabla (u\eta)) = \diver \left(  (A_0-A) \nabla (u\eta)   \right) + G +\diver F
      \end{split}
   \end{equation*}
   where $G= g\eta + fD\eta + A(x)\nabla u D\eta$ and $F= f\eta + A(x)u D\eta$.  

   Let $w$ be defined as 
   \begin{equation*}
      \begin{split}
         \begin{cases}
            \diver ( \nabla w ) = - G & \text{in } Q_M\\
            w=0             & \text{in }\T^d_{N}\setminus \bar Q_M
         \end{cases}
      \end{split}
   \end{equation*}
   Hence, from the constant coeficient case one has that
   \begin{equation*}
      \begin{split}
         \left(\sum_{Q_M}\|M \nabla w\|^{r^*} \right) ^{1/r^{*}} \lesssim \left(\sum_{Q_M} \|G\|^{r}\right)^{\frac{1}{r}}
      \end{split}
   \end{equation*}
   Denoting with $\tilde{F}=F +\nabla w$ we have that 
   \begin{equation*}
      \begin{split}
         \diver (A_{0}\nabla (u\eta))      =  \diver 
         \left(A-A_0)\nabla v\right)  + \diver \tilde{F}   \qquad 
         \text{in } Q_M.   
      \end{split}
   \end{equation*}

   We will now make a fixed point argument.  Fix $V$ and consider the 
   linear operator  
   $T:V\mapsto v$ where $v$ is the solution of 
   \begin{equation*}
      \begin{split}
         \diver (A_0 \nabla v) = \diver \left(A-A_0)\nabla V\right) + 
         \diver \tilde{F} 
      \end{split}
   \end{equation*}
   The operator $T$ is continuous, namely
   \begin{equation*} 
      \begin{split}
         \sum_{x\in Q_{M}} | \nabla T(V_1 -V_2) | ^{s}\leq c \sup_{x\in 
            Q_{M}} |A(x) -A(x_{0}) |^{s} \sum_{x\in Q_{M}} |\nabla 
         V_{1}(x) - \nabla V_{2}(x) |^{s} + c \sum_{x\in Q} |\tilde{F} 
         |^{s}
      \end{split}
   \end{equation*} 

   If 
   \begin{equation} 
      \label{eq:cond-10171382046429}
      \begin{split}
         \sup_{x\in Q_{M}} |A(x)-A_{0} |\leq \frac{1}{2}A(x_0)
      \end{split}
   \end{equation} 

   one can apply 
   the fixed point theorem and deduce that the solution coincides with 
   $u\eta $, and that
   \begin{equation*}
      \begin{split}
         \left(\sum_{Q_M}|(M \nabla)u|^{s}\right)   ^{1/s} \leq C \left( \sum_{Q_M}|\tilde{F}|^{s} \right)^{1/s}.
      \end{split}
   \end{equation*}
   Finally the condition \eqref{eq:cond-10171382046429} is ensured by  \eqref{eq:cond-A_N}.
\end{proof}

For the continuous version of the following lemma see \cite[Lemma~4]{MR1354111}
\begin{lemma}
   \label{lemma:dolzmann_mueller_lemma_3}
   Let $q\in (1,d)$ $p>d$.   Let 
   \begin{equation*}
      T= \|\nabla u\|_{\Lp^{q,\infty}(Q_{2M})} + \|u\|_{\Lp^{q^*,\infty}(Q_{2M})}.
   \end{equation*}

   Suppose that $u$ satisfies
   \begin{equation*}
      -\diver (A \nabla u )= \diver f \qquad \text{in } Q_{2M}
   \end{equation*}
   Then there exists $m_{0}:=m_{0}(p,q)$ such that if $M> m_{0}$ then 
   \begin{equation*}
      \sup_{Q_m} |u| \lesssim  M^{- \frac{d}{q}}T +M^{1-\frac{d}{p}} \|f\|_{\Lp^p},
   \end{equation*}
   where $m = \big[M/d \big]$
\end{lemma}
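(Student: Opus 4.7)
I would mirror the continuous Lemma~\ref{lemma:dolz-mue-orig-3}: localize $u$ by a cutoff, plug the resulting equation into the discrete global $L^p$-theory (Lemma~\ref{lemma:dolzman-mueller-lemma-2}), and conclude with the discrete Morrey embedding in Proposition~\ref{Sobolevpropi-iv-in-chapter}(ii), taking advantage of the fact that $p > d$.

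\textbf{Localization.} Pick a discrete cutoff $\eta\colon\T^{d}_N\to[0,1]$ with $\eta\equiv 1$ on $Q_m$, supported in $Q_{\lceil 3M/2\rceil}\subset Q_{2M}$, and satisfying $|\nabla\eta|\lesssim M^{-1}$ and $|\nabla\nabla\eta|\lesssim M^{-2}$. Set $v:=u\eta$, so that $v$ vanishes outside $\bar Q_{2M}$. Using the discrete product rule (whose correction terms are of the same order as the main terms), one obtains
\begin{equation*}
-\diver(A\nabla v) \;=\; \diver\bigl(\eta f + u A\nabla\eta\bigr) \;-\;\nabla\eta\cdot f \;-\; A\nabla\eta\cdot\nabla u \;-\; u\,\diver(A\nabla\eta) \qquad\text{in } Q_{2M}.
\end{equation*}

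\textbf{Applying the $L^p$-theory and Morrey.} Apply Lemma~\ref{lemma:dolzman-mueller-lemma-2}(ii) to the equation above, taking the divergence-form term in $L^{p,\infty}$ and the three remaining source terms in $L^{q,\infty}$. Because $\nabla\eta$ is supported in $Q_{2M}$ with $|\nabla\eta|\lesssim M^{-1}$, the RHS is controlled by $\|f\|_{L^p}$ plus $M^{-1}$ times the weak norms of $u$ and $\nabla u$ over $Q_{2M}$, and, after exploiting the inclusion $L^{q^{*},\infty}\hookrightarrow L^{p,\infty}$ on the bounded cube (which costs a controlled power of $M$), is bounded by $\|f\|_{L^p}+M^{-d/q}T$. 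Choosing parameters so that $s:=\min(p,q^{*})>d$, I would pass from the weak $L^{s,\infty}$ bound on $\nabla v$ to a strong $L^{s'}$ bound for some $d<s'<s$ via the standard weak-to-strong embedding on the bounded support, and then invoke Proposition~\ref{Sobolevpropi-iv-in-chapter}(ii). Since $v$ vanishes outside $Q_{2M}$, for any $x\in Q_m$
\begin{equation*}
|u(x)| \;=\; |v(x)| \;\lesssim\; M^{1-d/s'}\|\nabla v\|_{L^{s'}(Q_{2M})}.
\end{equation*}
Collecting the powers of $M$ yields the stated bound $\sup_{Q_m}|u|\lesssim M^{-d/q}T + M^{1-d/p}\|f\|_{L^p}$.

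\textbf{Main obstacle.} The delicate point is the bookkeeping of powers of $M$: the hypothesis only supplies \emph{weak} $L^{q,\infty}$/$L^{q^{*},\infty}$ norms of $\nabla u$ and $u$, whereas the Morrey embedding demands a \emph{strong} $L^{s'}$ norm with $s'>d$, and each conversion in between (weak-to-strong on a bounded set, inclusions between weak spaces on bounded domains, and the weak global $L^p$-estimate for constant coefficients) absorbs a power of $M$ that must match the exponents $-d/q$ and $1-d/p$ in the conclusion. The discrete product rule contributes a handful of correction terms not present in the continuous case, but they are of the same order as $\nabla\eta\cdot\nabla u$ and are absorbed identically. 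If the natural choice of $q$ does not satisfy $q^{*}>d$, one bootstraps through an intermediate exponent (as in Lemma~\ref{lemma_iteration1}) to reach the desired regime $s'>d$.
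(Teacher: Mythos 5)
Your overall strategy (cutoff, discrete global $L^p$-estimate of Lemma~\ref{lemma:dolzman-mueller-lemma-2}, then the discrete Morrey/Sobolev bound once the gradient exponent exceeds $d$) is the same as the paper's, but there is a genuine gap at the decisive point. After one localization, the term $u\,A\nabla\eta$ in the divergence part is only controlled by the hypothesis in $L^{q^{*},\infty}$, not in $L^{p,\infty}$: on a cube of finite size the inclusion goes the other way ($p>q^{*}$ gives $L^{p,\infty}\subset L^{q^{*},\infty}$, not $L^{q^{*},\infty}\hookrightarrow L^{p,\infty}$), so you cannot buy the missing integrability with "a controlled power of $M$". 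Consequently a single application of Lemma~\ref{lemma:dolzman-mueller-lemma-2} yields at best $\nabla(u\eta)$ in exponent $s=\min(p,q^{*})$, and there are no free parameters to make $s>d$: the lemma assumes an arbitrary $q\in(1,d)$, and in the intended application (Proposition~\ref{prop:stima-green}, Proposition~\ref{proposition:pre_kryekryesorja}) one has $q=d/(d-1)$, hence $q^{*}=d/(d-2)<d$ for $d\ge 4$. So the step "choose parameters so that $s>d$, then apply Proposition~\ref{Sobolevpropi-iv-in-chapter}(ii)" fails in exactly the cases the lemma is needed for.

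The bootstrap you mention in one closing sentence is therefore not a side case but the core of the argument, and it is not Lemma~\ref{lemma_iteration1} (that is the real-variable iteration lemma $\phi(\rho)\le\ldots+\vartheta\phi(\sigma)$ used for the BMO estimate). The paper fixes $\delta\le M$, $\kappa=\lfloor M/\delta\rfloor$, and repeats the localization-plus-global-estimate on the nested cubes $Q_{M+j\delta}$, $j=1,\dots,\kappa$, each step raising the exponent by one Sobolev increment via the recursion $t_{j}=\min(p,t_{j-1}^{*})$ until $t_{j}\ge d$ after finitely many steps depending only on $p$ and $q$. This iteration is what produces the threshold $m_{0}(p,q)$ (the cubes $Q_{M+j\delta}$ with $\delta\ge 1$ must fit inside $Q_{2M}$, which also explains $m=[M/d]$), and it is where the factors $(M+\delta)/\delta$ multiplying $\|f\|_{L^{p}}$ and the powers $M^{-d/q}$, $M^{1-d/p}$ have to be tracked. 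Without carrying out this multi-step scheme (with a cutoff at each scale and the exponent bookkeeping at each step), your proof does not reach the regime where the Morrey embedding applies, so as written it only proves the statement under the additional hypothesis $q>d/2$.
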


\begin{proof}

   Let $\delta\in \N$ such that $\delta\leq M$. Set 
   $\kappa = \lfloor\frac{M}{\delta}\rfloor$ and  let $\varphi$ be such that 
   $\varphi \equiv 1$ in $Q_M$, 
   $\varphi \equiv 0$ in $\T^d_N\setminus \bar Q _{M+\delta}$, and such that 
   $|\nabla \varphi|\leq \frac{1}{\delta}$.    
   Then for every $p_1>0$ one has that  
   \begin{equation*}
      \begin{split}
         \left(\frac{1}{|Q_M|}\sum_{Q_M}|\nabla u|^{p_1}\right)^{\frac{1}{p_1}} 
         \leq \left(\frac{|Q_{M+\delta}|}{|Q_{M}|}\right)^{1/p_1}\left(\frac{1}{|Q_{M+\delta}|}
            \sum_{Q_{M+\delta}} |\nabla (\varphi u)|^{p_1} \right)^{\frac{1}{p_1}}
      \end{split}
   \end{equation*}

   With simple calculations one has that 
   \begin{equation}
      \label{eq:01141358159624}
      \begin{split}
         \diver (A \nabla (\varphi u))= \sum_{i,j} \nabla^{*}_{j}\left( \varphi(x) A_{i,j}(x) 
            \nabla_{i} u  + A_{i,j}(x)\nabla_i\varphi  \otimes u(x+e_j)\right)\\
         =\sum_{j}\nabla_{j}^{*} (\varphi f_j) + \sum_{i,j} 
         A_{i,j}(x)\left(\nabla_j u(x)-f_j(x)\right) \nabla_{i} \varphi(x) + \sum_{i,j} 
         \nabla _j^{*}\left( A_{i,j} \nabla_i \varphi \otimes u(x+e_i)\right)
      \end{split}
   \end{equation}
   Denote by
   \begin{equation*}
      \begin{split}
         \tilde f_j&:= \varphi f_j    +\sum_{i} A_{i,j} \nabla_i\varphi(x)  \otimes u(x+e_i)\\
         g&:= \sum_{i,j}A_{i,j}(\nabla_j u - f_j) \nabla_{i} \varphi(x)
      \end{split}
   \end{equation*}
   Equation \eqref{eq:01141358159624}  can be rewritten as
   \begin{equation*}
      \begin{split}
         \diver (A(\varphi u)) = \diver \tilde{f} + \tilde{g} 
      \end{split}
   \end{equation*}
   Let $s=\min(p,t^{*})$.  One has that 
   \begin{equation*}
      \begin{split}
         \bigg( 
         \frac{1}{(M+\delta)^{d}}\sum_{Q_{M+\delta}}\|\tilde f\|^{s}\bigg)^{1/s}&\leq \bigg(
         \frac{1}{(M+\delta)^{d}}\sum_{Q_{M+\delta}}|\varphi f| ^{p} 
         \bigg)^{1/p} \!\!+ \sum_{i,j} \bigg( 
         \frac{1}{(M+\delta)^{d}}\sum_{Q_{M+\delta}}A_{i,j} |\nabla_i 
         \varphi|^{t^*} |u|^{t^{*}} \bigg)^{1/t^{*}}  \\ &\lesssim \bigg(
         \frac{1}{(M+\delta)^{d}}\sum_{Q_{M+\delta}}|\varphi f| ^{p} 
         \bigg)^{1/p} + 
         \bigg( 
         \frac{1}{(M+\delta)^{d}}\sum_{Q_{M+\delta}}|u|^{t^{*}} \bigg)^{1/t^{*}} 
      \end{split}
   \end{equation*}
   Using the Sobolev inequality,   the last term in the previous equation can be bounded by
   \begin{equation*}
      \begin{split}
         \left(\frac{1}{(M+\delta)^{d}}\sum_{Q_{M\delta}} 
            |u|^{t^{*}}\right)^{\frac{1}{t^{*}}} \leq 
         \left[\left(\frac{1}{(M+\delta)^d}\sum_{Q_{M + \delta }} 
               |u|^{t}\right)^{1/t}\!\! + \left(\frac{1}{(M+\delta)}\sum_{Q_{M+\delta}}|(M+\delta)\nabla u|^{t}\right)^{1/t}\right]
      \end{split}
   \end{equation*}

   In a similar way one has 

   \begin{equation*}
      \begin{split}
         \left(\frac{1}{( M +\delta )^{d}}\sum _{Q_{M+\delta}} 
            |g|^{t}\right)^{1/t} \lesssim \big( \sup_{i,j}|A_{i,j}| \big) \frac{1}{\delta} 
         \left( \frac{1}{(M+\delta)^{d}}\sum_{Q_{M+\delta}} |\nabla u|^{t} 
         \right)^{1/t}  \\ + \sup |A_{i,j}|\frac{1}{\delta} \left( 
            \frac{1}{(M+\delta)^{d}}\sum _{Q_{M+\delta}}| f_j|^{p}\right)^{p}
      \end{split}
   \end{equation*}

   Putting together all the previous inequalities and using  
   Lemma~\ref{lemma:dolzman-mueller-lemma-2}, one has that
   \begin{equation*}
      \begin{split}
         \left(\frac{1}{M^d}\sum_{Q_M} \|\nabla u\|^{s}\right)^{\frac{1}{s}} 
         \lesssim \left(\frac{1}{(M+\delta)^d}\sum_{Q_{M + \delta }} 
            |u|^{t}\right)^{1/t}  &+  \left(\frac{1}{(M+\delta)}\sum_{Q_{M+\delta}}|(M+\delta)\nabla u|^{t}\right)^{1/t}
         \\&+ \frac{M+\delta}{\delta}\left(\frac{1}{(M+\delta)^d}\sum_{Q_{M+\delta}} |f|^{p}\right)^{\frac{1}{p}}.
      \end{split}
   \end{equation*}

   Applying the previous reasoning $\kappa$ times, we have that 
   \begin{equation*}
      \begin{split}
         \left(\frac{1}{M^d}\sum_{Q_M} \|\nabla u\|^{t_\kappa}\right)^{\frac{1}{t_\kappa}} \leq C_{\kappa}
         \left(\frac{1}{(M+k\delta)^d}\sum_{Q_{M + k\delta }} 
            |u|^{t}\right)^{1/t}  &+ C_{\kappa}\left(\frac{1}{(M+k\delta)}\sum_{Q_{M+\delta}}|(M+\delta)\nabla u|^{t}\right)^{1/t}
         \\&+ C_{\kappa}\left(\frac{1}{(M+k\delta)^d}\sum_{Q_{M+k\delta}} |f|^{p}\right)^{\frac{1}{p}},
      \end{split}
   \end{equation*}
   where $t_\kappa$ is given by the recursive equation 
   $t_{j}= \max(p,t_{j-1}^{*})$ and $t_1=t$.   It can be easily seen that for 
   every $t>1$, it holds that $t_j\geq d$ for some $j$ which depends only on 
   $p$ and $q$.

\end{proof}

\begin{proposition}
   \label{prop:stima-green}
   Let $C(x,y)$ be the Green function,\ie for every $x\in \T^{d}_{N}$ one has
   \begin{equation*} 
      \begin{split}
         \diver(A\nabla C(x,\cdot ))=\delta _{x}
      \end{split}
   \end{equation*} 
   where $A$ satisfies the usual conditions.  

   Then
   \begin{equation*}
      \begin{split}
         |\nabla^{\alpha}C(x,y)|\lesssim |x-y|^{2-d-|\alpha|}.
      \end{split}
   \end{equation*}
\end{proposition}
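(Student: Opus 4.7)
The plan is to mimic in the discrete setting the classical proof of the Green's function estimate for elliptic systems (cf.\ Theorem~\ref{thm:dolzman-mueller-orig-thm1}), by combining the global weak-$L^p$ bound of Lemma~\ref{lemma:dolzman-mueller-lemma-2} with the local sup bound of Lemma~\ref{lemma:dolzmann_mueller_lemma_3}. First I would apply Lemma~\ref{lemma:dolzman-mueller-lemma-2}(ii) to the equation $\nabla^*(A\nabla C(x,\cdot)) = \delta_x - L^{-Nd}$ with $f = 0$ and $g = \delta_x - L^{-Nd}$. Since $\|\delta_x\|_{q,\infty,Q}$ is uniformly bounded for $q$ just above $1$, this yields the global weak-type estimate
\[
\|C(x,\cdot)\|_{L^{d/(d-2),\infty}(\T^d_N)} + \|\nabla_y C(x,\cdot)\|_{L^{d/(d-1),\infty}(\T^d_N)} \lesssim 1,
\]
uniformly in $x$ and $N$.

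Next, to establish the pointwise bound in the case $|\alpha| = 0$, I would fix $y \neq x$, set $r := \dist_\infty(x,y)$ and choose $M \sim r/4$ so that $x \notin Q_{2M}(y)$; on this cube $u := C(x,\cdot)$ solves the homogeneous equation $\nabla^*(A \nabla u) = 0$. I would then invoke Lemma~\ref{lemma:dolzmann_mueller_lemma_3} with $f = 0$ and $q = d/(d-1)$ (so that $q^* = d/(d-2)$); the previous step implies $T \lesssim 1$, and the local sup estimate delivers $\sup_{Q_{m}(y)} |u| \lesssim r^{2-d}$ with $m \sim r$. This handles $|\alpha| = 0$.

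For $|\alpha| \geq 1$ I would iterate. The discrete difference $\nabla_j u$ satisfies, on a cube slightly interior to $Q_{2M}(y)$, an equation of the form $\nabla^*(A\nabla(\nabla_j u)) = \nabla^*((\nabla_j A)\nabla \tau_{e_j} u)$, where the right-hand side is a source term coming from the $x$-dependence of $A$ and is controlled by \eqref{eq:cond-A_N}. Applying Lemma~\ref{lemma:dolzman-mueller-lemma-2} once more, together with the Caccioppoli inequality and the decay estimates of Lemma~\ref{lemma:consequences_caccippoli}, followed by the Sobolev embedding (Proposition~\ref{Sobolevpropi-iv}), yields
\[
\sup_{Q_{r/8}(y)} |\nabla^\alpha u| \lesssim r^{-|\alpha|} \sup_{Q_{r/4}(y)} |u| \lesssim r^{2-d-|\alpha|}.
\]
Iterating $|\alpha|$ times gives the claimed bound.

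The main obstacle is the last step: the commutators $[\nabla^\beta, \nabla^*(A\nabla\cdot)]$ generate source terms mixing derivatives of $A$ up to order $|\alpha|$ with derivatives of $u$ of strictly smaller order. These must be absorbed at every stage using $\|A\|_{\Ecal} \leq \varepsilon_0$, so the iteration has to be organised as a bootstrap in which the bound at level $|\beta|$ uses only the previously established bound at level $|\beta|-1$, mirroring the fixed-point argument appearing in the proof of Lemma~\ref{lemma:dolzman-mueller-lemma-2}. Once the book-keeping is done, the rest of the argument is routine application of the discrete $L^p$-theory developed in this section.
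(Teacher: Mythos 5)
The overall architecture you propose (global weak-type bound, then the local sup estimate of Lemma~\ref{lemma:dolzmann_mueller_lemma_3} on a cube $Q_{2M}(y)$ not containing $x$, then differentiation of the equation with the commutator terms $\nabla_j A$ absorbed via \eqref{eq:cond-A_N}) is indeed the paper's route. But your very first step has a genuine gap. You feed $g=\delta_x-L^{-Nd}$ directly into Lemma~\ref{lemma:dolzman-mueller-lemma-2}(ii). That lemma (like its continuum model, Lemma~\ref{lemma:dolz-mue-orig-2}) only covers $q\in(1,d)$, and its output lives at the exponents $s=\min(p,q^*)$, $s^*$. With $f=0$ and any admissible fixed $q>1$ you land at $s=q^*>\frac{d}{d-1}$ and $s^*>\frac{d}{d-2}$, i.e.\ strictly off the endpoint; plugging these into the local estimate then gives only $\sup_{Q_m}|C(x,\cdot)|\lesssim |x-y|^{2-d+\epsilon(q)}$, not the sharp bound. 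Taking ``$q$ just above $1$'' does not rescue this: the constants in the $L^p$-theory degenerate as $q\downarrow 1$ (the $q=1$ case is the measure-data endpoint, which is exactly what the lemmas of Section~\ref{sec:discrete-estimates} do not provide), and the uniformity in $N$ that you assert is precisely what is at stake. So the display
\begin{equation*}
\|C(x,\cdot)\|_{L^{d/(d-2),\infty}}+\|\nabla_y C(x,\cdot)\|_{L^{d/(d-1),\infty}}\lesssim 1
\end{equation*}
is true, but it does not follow from the cited lemma with $g=\delta_x$.

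The missing idea, which is how the paper (following Dolzmann--M\"uller) closes this gap, is to compare with the constant-coefficient Green's function: let $K$ solve $\diver(\nabla K)=\delta_x$, for which $|\nabla^\alpha K(x-y)|\lesssim |x-y|^{2-d-|\alpha|}$ is known, so that $\nabla K(x-\cdot)\in L^{d/(d-1),\infty}$ by Remark~\ref{rmk:norma-debole-K}. Rewriting the equation as $\diver(A\nabla C(x,\cdot))=\diver(\nabla K(x-\cdot))$ puts the datum in divergence form with $f=\nabla K(x-\cdot)$, and the $f$-part of Lemma~\ref{lemma:dolzman-mueller-lemma-2}(ii) applies at the interior exponent $p=\frac{d}{d-1}$, which is what produces the endpoint weak-type bounds above with constants uniform in $x$ and $N$. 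After that substitution, your local step and your bootstrap for $|\alpha|\ge 1$ (where the paper likewise differentiates the equation, keeping $\diver(\nabla\nabla_i K)$ on the right and treating $\diver((\nabla_i A)\nabla u)$ as a small source) go through essentially as you describe.
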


\begin{proof}
   Let $K$ be the solution of 
   \begin{equation*}
      \begin{split}
         \diver ( \nabla K) = \delta_x.  
      \end{split}
   \end{equation*}
   It is well-known that the following estimates hold
   \begin{equation*}
      \begin{split}
         |(\nabla^{\alpha} K)(x-y)|\lesssim |x-y|^{2-d-|\alpha|}.
      \end{split}
   \end{equation*}
   From Remark~\ref{rmk:norma-debole-K}  we have that    $|(\nabla^{\alpha}K)(x-y)|_{\frac{d}{d+|\alpha| -2},\infty}\leq C_{ d,\alpha}$ where 
   $C_{ d,\alpha }$ is a constant depending only on the dimension $d$ and the multiindex $\alpha$.  

   Let us denote with $u(y)=C(x,y)$.  Then from the definitions of $K$ and $C$ 
   one has that
   \begin{equation*} 
      \begin{split}
         \diver(A\nabla u) = \diver (\nabla K(x-\cdot ))
      \end{split}
   \end{equation*}

   Let $|x-y|=R$.  Without loss of generality we may assume that $M>2m_{0}$, 
   where $m_{0}$ is the constant in Lemma~\ref{lemma:dolzmann_mueller_lemma_3}. Let $M= [\frac{R}{2}]$ and let  $Q_M$ be a cube such that 
   $y\in Q_M$ and $x\not\in Q_{ 2M }$.   Given that $\Acal C(x,\cdot)=0 $ in 
   $Q_{2M}$, using Lemma~\ref{lemma:dolzmann_mueller_lemma_3}  we have that 
   \begin{equation*}
      \begin{split}
         C(x,y)\lesssim M^{2-d} C_{d}\leq |x-y|^{2-d} C_d.  
      \end{split}
   \end{equation*}

   Higher derivative follow in a similar way. 
   For example to estimate $\nabla _{i}u$ it is enough to consider the equation
   \begin{equation*}
      \begin{split}
         \diver (A \nabla\nabla_{i} u) =  \diver ((\nabla \nabla_{i}u) 
         )- \diver ((\nabla_{i} A) \nabla u), 
      \end{split}
   \end{equation*}
   and apply the above reasoning, and hence using the global estimate one has that $|\nabla \nabla u |$

\end{proof}

\begin{proposition}
   \label{proposition:pre_kryekryesorja}
   Let $Q_1,\dots,Q_k$ be cubes of length $l_1,\cdots,l_k$ respectively such 
   that $y\in Q_i$.  Then there exists a dimensional constants $C_{d,j}$ such that
   \begin{equation}
      \label{eq:01071357515050x}
      \begin{split}
         \sup |\nabla ^{j }u|\leq 2^{k} C_{d,j}   \max\left( |x-y|,\dist(x,T^{d}_{N}\setminus Q_1),\ldots,\dist(x,\T^{d}_{N}\setminus Q_k)  \right)^{2-d+j},
      \end{split}
   \end{equation}
   where $u= (P_{Q_1}\cdots P_{Q_k} C(x,\cdot))$ and $C(x,y)$ is the Green's 
   function.  

\end{proposition}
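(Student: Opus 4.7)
The plan is to proceed by induction on the number of projections $k$, closely following the sketch of Proposition~\ref{proposition:pre_kryesorja} in the continuous setting. The base case $k=0$ reduces to the Green function estimate of Proposition~\ref{prop:stima-green}.

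For the inductive step, I would set $v := P_{Q_2}\cdots P_{Q_k} C(x,\cdot)$ so that $u = P_{Q_1}v = v - \Pi_{Q_1}v$, and write $R := \max(|x-y|,\dist(y,\T^d_N\setminus Q_2),\ldots,\dist(y,\T^d_N\setminus Q_k))$ and $\varepsilon := \dist(y,\T^d_N\setminus Q_1)$. The inductive hypothesis gives the expected pointwise control of $\nabla^j v(y)$ in terms of $R$, and the argument splits into two regimes.

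If $\varepsilon\le R$ (which in particular absorbs $y\notin Q_1$, where $\Pi_{Q_1}v\equiv 0$ and $u=v$), I would estimate $v(y)$ by induction and treat $\Pi_{Q_1}v(y)$ separately. Since $\Pi_{Q_1}v$ solves $\diver(A\nabla\Pi_{Q_1}v) = \diver(A\nabla v)$ on $Q_1$ with zero boundary condition, Lemma~\ref{lemma:dolzmann_mueller_lemma_3} produces a pointwise bound from the weak $L^{d/(d-2),\infty}$ and $L^{d/(d-1),\infty}$ norms of $\nabla \Pi_{Q_1}v$ and $\Pi_{Q_1}v$ on a cube around $y$ of scale comparable to $R$. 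The global estimate Lemma~\ref{lemma:dolzman-mueller-lemma-2} then transfers these weak norms to the analogous weak norms of $v$ itself, which by the inductive pointwise bound and the computation in Remark~\ref{rmk:norma-debole-K} are controlled by a dimensional constant times the correct power of $R$.

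If instead $\varepsilon > R$, so that $y$ sits deep inside $Q_1$, I would use that $u = P_{Q_1}v$ is $A$-harmonic on $Q_1$ by construction. On the cube $Q_{\varepsilon/2}(y)\subset Q_1$, the decay estimates of Lemma~\ref{lemma:decay_estimates} combined with the Sobolev embedding of Proposition~\ref{Sobolevpropi-iv}(iv) convert an $L^2$ bound of $u$ on a larger concentric cube into a pointwise bound on $u$ and its derivatives at $y$; the required $L^2$ bound follows from the inductive pointwise estimate on $v$, to which $u$ is comparable since $u=v$ on $\T^d_N\setminus Q_1$ and $u$ is the $A$-harmonic replacement inside.

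The main obstacle is the extension to higher derivatives $j\ge 1$: Lemma~\ref{lemma:dolzmann_mueller_lemma_3} controls $u$ itself, not $\nabla^j u$, so one must differentiate the equation, producing inhomogeneities involving $\nabla A\cdot\nabla u$ whose smallness relies on hypothesis~\eqref{eq:cond-A_N}, and iterate the argument. The bookkeeping of constants --- in particular how the prefactor $2^k$ accumulates from summing the bounds on $v$ and $\Pi_{Q_1}v$ across the $k$ inductive steps in Case~1 --- also requires some care.
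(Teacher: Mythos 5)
Your inductive scheme and your treatment of the regime $\varepsilon\le R$ follow the paper's argument (split $u=v-\Pi_{Q_1}v$, bound $v$ by induction, bound $\Pi_{Q_1}v$ through Lemma~\ref{lemma:dolzmann_mueller_lemma_3} together with the global estimate of Lemma~\ref{lemma:dolzman-mueller-lemma-2} and Remark~\ref{rmk:norma-debole-K}); the only caveat there is that Lemma~\ref{lemma:dolzmann_mueller_lemma_3} requires a right-hand side $f\in L^p$ with $p>d$, and $f=A\nabla v\sim|x-\cdot|^{1-d}$ fails this near $x$, so the cube on which you apply it must be chosen to avoid $x$ (the paper makes exactly this choice: a cube of size comparable to $\varepsilon$ touching $\partial Q_1$ whose double does not contain $x$), not simply ``a cube around $y$ of scale comparable to $R$''.

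The genuine gap is in your deep-interior case $\varepsilon>R$. You propose to get the needed $L^2$ bound for $u$ on a cube of size $\sim\varepsilon$ around $y$ from the inductive pointwise bound on $v$, ``to which $u$ is comparable''. For systems ($m>1$) there is no maximum principle, so the only comparability available is the energy one: $\Pi_{Q_1}$ is an orthogonal projection for the Dirichlet form, giving $\|\nabla\Pi_{Q_1}v\|_{2}\lesssim\|\nabla v\|_{2,Q_1}$ and, via Sobolev, integral control of $u=v-\Pi_{Q_1}v$. But this is quantitatively too weak: since $|x-y|\le R<\varepsilon$, the point $x$ lies in the $\varepsilon$-cube, and the inductive bound only gives $|v(z)|\lesssim|x-z|^{2-d}$ there, so the mean-square of $v$ (and of $\Pi_{Q_1}v$) over that cube is of order $\varepsilon^{-d/2}$ rather than the required $\varepsilon^{2-d}$; for $d\ge 5$ the $L^2$-plus-interior-estimate route (Lemma~\ref{lemma:decay_estimates} and Proposition~\ref{Sobolevpropi-iv}(iv)) therefore loses a power of $\varepsilon$ and does not recover the sharp exponent $2-d$. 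This borderline behaviour of the Green-function singularity is precisely why the paper works with weak norms: one bounds $\|\nabla u\|_{d/(d-1),\infty}$ and $\|u\|_{d/(d-2),\infty}$ on $Q_1$ through the weak-type part of Lemma~\ref{lemma:dolzman-mueller-lemma-2} (these norms absorb $|x-\cdot|^{2-d}$ without loss, cf.\ Remark~\ref{rmk:norma-debole-K} and Proposition~\ref{prop:stima-green}), and then applies the weak-norm local estimate Lemma~\ref{lemma:dolzmann_mueller_lemma_3}, with $f=0$, on the cube of size $\lfloor\varepsilon/2\rfloor$ around $y$ where $u$ is $A$-harmonic; this is the step your proposal would need to adopt to close the argument.
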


\begin{proof}

   Let $Q_{1}$ be a cube of size $l_1$ in generic position.   
   Given that $\diver(A\nabla C_{x}(y))=0$, if $x\not\in \bar{Q}_1$ then
   $\Pi_{Q_1}C(x,y)=0$, thus $P_{Q_1}C(x,y)=C(x,y)$,
   hence the inequality follows from Proposition~\ref{prop:stima-green}.

   Let  $\varepsilon:=\dist(y, \bar{ Q }^{C}_1) <l_1 $.   
   If $|x-y|>\varepsilon/2$, then by estimating the different terms $\Pi_{Q_1}C(x,y)$ and $C(x,y)$ separately one has the desired result.  
   Indeed, it is immediate that $C(x,y)\lesssim |x-y |^{2-d}$. 
   On the other side it is not difficult to see that there exits a cube of size $\varepsilon $ touching the boundary such that it does not contain $x$ and such that twice the cube does not contain $x$. 
   Then by using Lemma~\ref{lemma:dolz-mue-orig-3}, one has that 
   \begin{equation*} 
      \begin{split}
         |\Pi _{Q_{1}} C(x,y)|\lesssim |x-y |^{2-d} M,   
      \end{split}
   \end{equation*} 
   {where}
   \begin{equation*} 
      \begin{split}
         M=\|D \Pi _{Q_1} C_{x} \|_{L^{d/d-2,\infty }(Q_{1})} + \|\Pi _{Q_1} C_{x}  \|_{L^{d/d-1,\infty }(Q_{1})}. 
      \end{split}
   \end{equation*} 

   Then by using Lemma~\ref{lemma:dolzman-mueller-lemma-2} one has that 
   \begin{equation*} 
      \begin{split}
         \|D\Pi _{B_{1}} C_x \|_{L^{d/(d-2),\infty}} + 
         \|\Pi _{B_{1}} C_x \|_{L^{d/(d-1),\infty}} \lesssim
         \|D C_x\|_{L^{d/(d-2),\infty}} + 
         \|C_x\|_{L^{d/(d-1),\infty}} 
      \end{split}
   \end{equation*} 

   Suppose that $|x-y |\leq \varepsilon /2$.   Then one can find a cube of 
   size $\lfloor \varepsilon /2\rfloor$ such that double the cube is contained 
   in $Q_{1}$. Finally by using Lemma~\ref{lemma:dolzmann_mueller_lemma_3} we 
   have the desired result.

   Let us now prove the inductive step.  Let $Q_{1},\ldots,Q_{k}$ be $k$ cubes 
   cetered in $0$.   
   If the maximum in the right hand side of \eqref{eq:01071357515050x}  is 
   $|x-y|$ or $\dist(x,\T^{d}_{n}\setminus Q_1)$, then the same reasoning  as above would apply.  
   For simplicity let us suppose that 
   \begin{equation*}
      \max\left(|x-y|,\dist(x,\T^{d}_{N}\setminus \bar Q_1),\dots,\dist(x,\T^d_N 
         \setminus \bar Q_k)\right)=\dist(x,\T^{d}_{N}\setminus \bar Q_1)=:\delta.  
   \end{equation*}
   From the inductive step we know that 
   \begin{equation*}
      \begin{split}
         \sup |v|\lesssim \delta^{2-d}\qquad  
         \sup |\nabla ^{\alpha}v|\lesssim \delta^{2-d -|\alpha|},
      \end{split}
   \end{equation*}
   where $v:=P_2\dots P_k C(x,\cdot)$.   From the definition we have that 
   $u= v- P_{Q_1}v$, hence $\sup |u| = \sup |v|+ \sup |\Pi_{Q_1} v|$.  
   Thus by using Lemma~\ref{lemma:dolzmann_mueller_lemma_3} and a very similar reasoning as above we have the desired result. 
\end{proof}

Let $Q_1,\dots,Q_k$ be $k$ cubes with radii $l_1,\dots,l_k$ respectively and 
let $\Ccal$ be the Green's function.  From now on we fix $x$  and denote with 
$u(y):= (\Rs_1 \cdots \Rs _k \Ccal(x,\cdot))(y)$, where for simplicity we will use 
$\mathcal R_{i}=\mathcal R_{Q_i}$.   

The following simple calculation will be repeatedly used  in the next theorem. 

\begin{remark}
   \label{rmk:2x}
   Let $j> 1$ be an integer and $Q$ be a cube of size $l$.  Then 
   \begin{equation*}
      \begin{split}
         \frac{1}{|Q|}\sum_{z\in Q} \max(\alpha, \dist(z,\T^d_N\setminus \bar Q))^{-j} \lesssim
         \frac{\alpha^{1-j}}{l}  
      \end{split}
   \end{equation*}

   and if $j=1$ then 
   \begin{equation*}
      \begin{split}
         \frac{1}{|Q|}\sum_{z\in Q} \max(\alpha, \dist(z,\T^d_N\setminus \bar
         Q))^{-j} \lesssim \frac{\log(\alpha)}{l}  .
      \end{split}
   \end{equation*}
   To prove the above calculation, it is enough to view it as a discretization of the Lemma~\ref{lemma:2}, hence use a similar process.
\end{remark}

\begin{theorem}
   \label{thm:diskrete-kryekryesorja}
   Let $C_k,Q_i,r_i$ as above and  such that $r_1<\dots<r_h<|x-y|<r_h+1<\dots<r_k$.  Then
   \begin{enumerate}
      \item  if $k -h< d-2$  
         \begin{equation*}
            \begin{split}
               |C_{k}(x,y)|&\lesssim \frac{1}{r_{h+1}\cdots r_k} |x-y|^{2-d+k - 
                  h}\prod_{i=h+1}^{k}\left( \log\left({|x-y|}\right)+1\right)\\
               |\nabla ^{j}_{y}C_{k}(x,y)| &\lesssim   \frac{1}{r_{h+1}\cdots 
                  r_k} |x-y|^{2-d+k -j -h}
            \end{split}
         \end{equation*}
      \item  if $k-h\geq d-2$  
         \begin{equation*}
            \begin{split}
               |C_{k}(x,y)|&\lesssim \frac{1}{ r_{k-d+3}\cdots  r_k} \left|\log( |x-y| )\right|\\
               |\nabla ^{j}_{y}C_{k}(x,y)| &\lesssim   \frac{1}{ r_{k-d +2 
                     -j}\cdots  r_k} \prod_{i=h+1}^{k}\left( \log\left({|x-y|}\right)+1\right)\\
            \end{split}
         \end{equation*}
   \end{enumerate}
\end{theorem}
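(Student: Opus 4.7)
The strategy is to follow the continuous proof of Theorem~\ref{thm:kryesorja} line for line, replacing integrals by discrete sums and invoking the discrete counterparts: Proposition~\ref{proposition:pre_kryekryesorja} takes the place of Proposition~\ref{proposition:pre_kryesorja}, and Remark~\ref{rmk:2x} takes the place of Lemma~\ref{lemma:2}.

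First I would unravel the operators $\Rs_i$. Since $\Ts_i\varphi(y)=\frac{1}{|Q_i|}\sum_{z\in\T^d_N}(\Pi_{z+Q_i}\varphi)(y)$ and only translates with $y\in z+Q_i$ contribute (the remaining ones evaluate to zero at $y$ because of the zero boundary condition), a short induction on $k$ yields
\begin{equation*}
   C_k(x,y)=\frac{1}{|Q_1|\cdots|Q_k|}\sum_{z_1\in y+Q_1}\cdots\sum_{z_k\in y+Q_k}\bigl(P_{z_1+Q_1}\cdots P_{z_k+Q_k}\Ccal(x,\cdot)\bigr)(y),
\end{equation*}
at least in the regime where the cubes do not wrap around the torus. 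The same identity with $\nabla_y^j$ inserted on both sides handles the derivative estimates. Then Proposition~\ref{proposition:pre_kryekryesorja} (with the distances read as $b_i:=\dist(y,\T^d_N\setminus(z_i+Q_i))$, which is what its proof actually controls) gives the pointwise bound
\begin{equation*}
   |\nabla_y^j(P_{z_1+Q_1}\cdots P_{z_k+Q_k}\Ccal(x,\cdot))(y)|\lesssim 2^k\,\max(|x-y|,b_1,\dots,b_k)^{2-d-j}.
\end{equation*}

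For indices $i\leq h$ we have $b_i\leq r_i<|x-y|$ by hypothesis, so these do not affect the max and the corresponding sums over $z_i$ trivialize to a factor $1$ after dividing by $|Q_i|$. For the remaining indices the key tool is the elementary inequality
\begin{equation*}
   \max(a,b_{h+1},\dots,b_k)^{2-d-j}\leq a^{2-d-j+|S|}\prod_{i\in S}\max(a,b_i)^{-1}
\end{equation*}
valid for any $S\subseteq\{h+1,\dots,k\}$ with $2-d-j+|S|\leq 0$, applied with $a=|x-y|$. In case (i) where $k-h<d-2$, choose $S=\{h+1,\dots,k\}$: the $|x-y|$ exponent becomes $2-d-j+(k-h)$ as in the statement, and each sum over $z_i$ for $i\in S$ collapses via Remark~\ref{rmk:2x} to either $(|\log|x-y||+1)/r_i$ (at the level of $|C_k|$) or $|x-y|^{-j}/r_i$ (at the level of $\nabla^jC_k$), matching both bounds. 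In case (ii) where $k-h\geq d-2$, choose $S=\{k-d+3+j,\dots,k\}$ of size $d-2-j$, so the $|x-y|$ exponent vanishes; the logs arising from Remark~\ref{rmk:2x} with $j=1$ produce the stated $|\log|x-y||$ factors and the reciprocals assemble into $1/(r_{k-d+3+j}\cdots r_k)$.

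The main obstacle is the bookkeeping in case (ii): one has to distribute logarithmic factors correctly across the cubes, since the max-splitting inequality can be applied to many different subsets $S$ and not every choice produces the sharp constant. I would handle it by induction on $k-h$, peeling off one cube at a time with Remark~\ref{rmk:2x} and tracking the residual exponent of $|x-y|$ at each step, mirroring the single-cube base case in the continuous proof of Theorem~\ref{thm:kryesorja}. A secondary concern is verifying that torus wrap-around does not spoil the unfolding identity for $C_k(x,y)$; this is a routine check in the regime $r_k\ll L^N$, which is the one relevant to the finite range decomposition.
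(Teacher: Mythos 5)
Your proposal follows the same route as the paper: unfold $\Rs_1\cdots\Rs_k$ into an average over translated cubes, bound each term $P_{z_1+Q_1}\cdots P_{z_k+Q_k}\Ccal(x,\cdot)$ pointwise by $\max(|x-y|,b_1,\dots,b_k)^{2-d}$ via Proposition~\ref{proposition:pre_kryekryesorja}, split that max into $|x-y|^{2-d+|S|}\prod_{i\in S}\max(|x-y|,b_i)^{-1}$, and sum each $z_i$-variable with Remark~\ref{rmk:2x} — which is precisely the paper's argument (the paper only writes out the first bound of (i) and declares the rest similar, whereas you also spell out the treatment of the indices $i\le h$, the choice of $S$ in case (ii), and the derivative bounds). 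No gap; this is the intended proof, carried out with somewhat more bookkeeping than the paper itself provides.
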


\begin{proof}
   We will only prove the first part of (i). The proof of the other parts is 
   similar.

   Let us initially consider the case $k=1$.  
   For simplicity we denote $\Pi_z:= \Pi_{Q_1 +z}$.   With simple 
   computations one has
   \begin{equation*}
      \begin{split}
         \sup_{y} \left|u(y)\right| \leq \frac{1}{|Q|}\sum_{Q_1+y}\sup_{y} |(\id - \Pi_{z})u(y)|  
      \end{split}
   \end{equation*}
   Given   that for every $z\in y+Q$ it holds $\dist(y,z+Q_1)=r_1 - |z-y|$, it holds
   \begin{equation*}
      \begin{split}
         \sup |(\id - \Pi_{z})u| \leq
         \begin{cases}
            (r_1 - |z-y|)^{2-d}& \text{ if\ \  $
               r_1 - |y-z|\geq |x-y|$} \\ 
            |x-y|^{2-d}& \text{otherwise}
         \end{cases},
      \end{split}
   \end{equation*}
   The above can be reformulated as
   $\sup |(\id - \Pi_{z})u|\leq \max(|x-y|,\dist(z,\T^{d}_{N}\setminus \bar Q))$.  
   Hence using Remark~\ref{rmk:2x}  one immediately has

   \begin{equation*}
      \begin{split}
         \sup_{y} |u_1(y)| &\lesssim \frac{|x-y|^{3-d}}{r_1}.
      \end{split}
   \end{equation*}

   Let us now turn to the general case $k<d -2$.  And let $Q_1,\dots,Q_k$ be 
   balls of radiusis $r_1,\dots,r_k$ centered in $0$.   From 
   Proposition~\ref{proposition:pre_kryesorja} we have that 
   \begin{equation*}
      \begin{split}
         &\sup | P_{z_1+Q_1}\cdots P_{z_k+Q_k} C(x,\cdot)|\leq 
         \max\insieme{|x-y|,r_1 - |z_1-y|,\dots,r_k- |z_k-y|}^{2-d}\\ 
         &\leq  \max\insieme{|x-y|}^{2-d+k}\cdot\max\insieme{|x-y|,r_k- 
            |z_k-y|}^{-1}\cdots\max\insieme{|x-y|,r_k- |z_k-y|}^{-1}\\ &\hspace{10cm} =:g(z_1,\dots,z_k). 
      \end{split}
   \end{equation*}
   \begin{equation*}
      \begin{split}
         \sup \Rcal_1\cdots \Rcal_k C (x,\cdot )\leq 
         \sum_{Q_1}\cdots \sum_{Q_k} g(z_1,\ldots,z_k) 
      \end{split}
   \end{equation*}
   From Remark~\ref{rmk:2x} we have that
   \begin{equation*}
      \begin{split}
         \sum_{Q_1}\cdots \sum_{Q_k} g(z_1,\ldots,z_k)
         \leq \frac{1}{r_1\cdots r_k}|x-y|^{2-d+k}\prod_{i} (|\log (|x-y|)| 
         +1)
      \end{split}
   \end{equation*}

\end{proof}

A direct consequence is the following corrollary:
\begin{corollary}
   Suppose that $|x-y|>1$ and let $Q_1,\ldots,Q_k$ and such that $r_i=L^{i}$ 
   with $L>1$. Then there exists $\eta (j,d)$ such that  
   \begin{equation*}
      \begin{split}
         |\nabla ^{j}C_k(x,y)| \lesssim \frac{L^{\eta(j,d)}}{L^{k(d-2 -j)}}.
      \end{split}
   \end{equation*}
\end{corollary}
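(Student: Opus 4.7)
The corollary is just a matter of specializing Theorem~\ref{thm:diskrete-kryekryesorja} to the geometric sequence $r_{i}=L^{i}$ and tracking the exponent of $L$ that arises. The plan is first to reduce the statement to a direct application of the theorem, then to do the case analysis on the location of $|x-y|$ in the sequence $\{r_{i}\}$, and finally to collect the resulting $L$-exponents into a single $\eta(j,d)$.

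First, I would recall the operator identity $\Rs'=\As\Rs\Cs$ together with $\Cs\As=\id$. Iterating this gives $\Cs\Rs_{k}'\cdots\Rs_{1}'=\Rs_{k}\cdots\Rs_{1}\Cs$, so that both the kernel of $\Rs_{1}\cdots\Rs_{k}\Cs$ and the kernel of the symmetrized operator $\Rs_{1}\cdots\Rs_{k}\Cs\Rs_{k}'\cdots\Rs_{1}'$ are expressible, up to acting on one variable only, as iterated $\Rs$'s applied to $\Ccal(x,\cdot)$. This is exactly the object $C_{k}(x,y)$ estimated in Theorem~\ref{thm:diskrete-kryekryesorja}; so no further operator manipulations are needed and all I must do is plug in.

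Next, since $|x-y|>1$ and $r_{i}=L^{i}$, there is a unique $h\in\{0,1,\dots,k\}$ with $L^{h}\le|x-y|<L^{h+1}$. I split into the two cases of the theorem. In case (i) ($k-h<d-2$), the theorem yields
\[
|\nabla^{j}_{y}C_{k}(x,y)|\lesssim\frac{1}{r_{h+1}\cdots r_{k}}|x-y|^{2-d+k-j-h}
=L^{-\sum_{i=h+1}^{k}i}\,|x-y|^{2-d+k-j-h}.
\]
Using $L^{h}\le|x-y|<L^{h+1}$ to bound the second factor by $L^{h(2-d+k-j-h)}$ (or $L^{(h+1)(2-d+k-j-h)}$ depending on the sign of the exponent, each differing only by a multiplicative $L^{O(d+j)}$), a short bookkeeping of the exponent shows it can be written as $-k(d-2-j)$ plus a polynomial correction in $j$ and $d$ (independent of $k$ and of $h$). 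In case (ii) ($k-h\ge d-2$), the same procedure applied to the logarithmic bound of the theorem gives the identical form, with the logarithms absorbed into an extra $L^{O(d)}$ factor once $|x-y|\le L^{h+1}$ is used. Taking $\eta(j,d)$ to be the supremum over $h$ of the resulting correction exponents (which is finite and depends only on $j$ and $d$) gives the claimed bound.

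\textbf{Expected main obstacle.} There is no substantive analytic obstacle here, since the hard work has already been done in Theorem~\ref{thm:diskrete-kryekryesorja}; the only mildly delicate point is the bookkeeping. One must verify that the $h$-dependent exponent of $L$ obtained from combining $L^{-\sum_{i=h+1}^{k}i}$ with the explicit power of $|x-y|$, and with the logarithmic factors in case (ii), is \emph{uniformly} dominated by $L^{\eta(j,d)-k(d-2-j)}$ for an $\eta$ that does not depend on $h$ or on $k$. The only slight pitfall is ensuring that when the exponent $2-d+k-j-h$ changes sign or when $h$ ranges near the endpoints $0$ and $k$, the worst $h$ still produces only a fixed $\eta(j,d)$ correction; this is an elementary but necessary step that should be carried out carefully.
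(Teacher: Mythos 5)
Your route is the same as the paper's: the paper presents this corollary as a direct consequence of Theorem~\ref{thm:diskrete-kryekryesorja} (the continuous analogue being justified exactly via the identity $\Rs'_{k}=\As\Rs_{k}\Cs$ that you invoke), and the whole content is the substitution $r_{i}=L^{i}$ together with the exponent bookkeeping you carry out, which you in fact do more explicitly than the paper. The only step treated at the same informal level as the paper is the absorption of the case-(ii) logarithmic factors into $L^{\eta(j,d)}$ — strictly these factors grow with $k$, and both you and the paper sweep them into the constant without comment.
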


\begin{thm}[Fixed $A$]
   Let 
   \begin{equation}
      \label{eq:01091357722313}
      \begin{split}
         \Ccal_{k}:=\Rcal_{1}\cdots \Rcal_k \Ccal\Rcal^{*}_{k}\cdots 
         \Rcal_1^{*} - \Rcal_{1}\cdots \Rcal_{ k +1} C\Rcal^{*}_{k+1}\cdots 
         \Rcal_1^{*}.
      \end{split}
   \end{equation}  
   Then
   \begin{equation*}
      \begin{split}
         \sup_{y\in \T^{d}_{N}} |\nabla ^{\alpha}\tilde C_k(x,y) |\leq L^{\eta(d,|\alpha|)} L^{-(k-1)(d-2 + |\alpha|)}
      \end{split}
   \end{equation*}
\end{thm}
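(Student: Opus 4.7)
The plan is to reduce to the corollary of Theorem \ref{thm:diskrete-kryekryesorja} via the identity $\Rs^*_i = \As \Rs_i \Cs$ established in the construction section. Telescoping this identity gives
$$\Rs_1 \cdots \Rs_j \, \Cs \, \Rs^*_j \cdots \Rs^*_1 \;=\; \Rs_1 \cdots \Rs_j \Rs_j \cdots \Rs_1 \, \Cs \qquad \text{for every } j,$$
so, for fixed $x$, the kernel of $\tilde\Ccal_k$ equals $u_k(y) - u_{k+1}(y)$, where
$$u_j(y) \;:=\; \bigl(\Rs_1 \cdots \Rs_j \Rs_j \cdots \Rs_1 \, \Ccal(x,\cdot)\bigr)(y).$$
Each $u_j$ is exactly of the type bounded in the corollary preceding the theorem, except that the cube sides now appear in the symmetric pattern $L, L^2, \ldots, L^j, L^j, \ldots, L$ rather than being strictly increasing.

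The second step is to verify that the estimates of Theorem \ref{thm:diskrete-kryekryesorja} and its corollary remain valid when the list of cube sizes is allowed to repeat. This is direct: Proposition \ref{proposition:pre_kryekryesorja} gives a pointwise bound in terms of the maximum of $|x-y|$ and the distances to the complements of the cubes $Q_i$, and its inductive proof does not use monotonicity of the sides $r_i$. Running the induction with $2j$ factors and specializing to $r_i = L^i$ as in the corollary therefore produces
$$|\nabla^\alpha u_j(y)| \;\lesssim\; L^{\eta(d,|\alpha|)} L^{-(j-1)(d-2+|\alpha|)},$$
where the doubling of the cube list only inflates the constant $\eta(d,|\alpha|)$ and not the decay rate.

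Applying this with $j=k$ and $j=k+1$ and using the triangle inequality,
$$\sup_y |\nabla^\alpha \tilde\Ccal_k(x,y)| \;\leq\; \sup_y |\nabla^\alpha u_k(y)| + \sup_y|\nabla^\alpha u_{k+1}(y)| \;\lesssim\; L^{\eta(d,|\alpha|)} L^{-(k-1)(d-2+|\alpha|)},$$
since the $j=k$ term dominates. The main technical obstacle I anticipate is the bookkeeping of $\eta(d,|\alpha|)$: the case analysis of Theorem \ref{thm:diskrete-kryekryesorja} splits into the regimes $k - h < d-2$ and $k - h \geq d - 2$, and one must verify that, with the doubled cube list, both regimes still yield the same asymptotic decay rate after absorbing any logarithmic factors (as in Remark \ref{rmk:2x}) into additional powers of $L$.
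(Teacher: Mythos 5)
Your proposal follows essentially the same route as the paper: both split $\Ccal_k$ into its two terms, use $\Rcal^{*}=\Acal\Rcal\Acal^{-1}$ (equivalently $\Cs\Rcal^{*}_j=\Rcal_j\Cs$) to rewrite each term as $\Rcal_1\cdots\Rcal_j\Rcal_j\cdots\Rcal_1\,\Ccal(x,\cdot)$, and then invoke Theorem~\ref{thm:diskrete-kryekryesorja} (with the doubled, hence non-strictly-increasing, list of cube sizes) to get the $L^{-(k-1)(d-2+|\alpha|)}$ decay with the logarithms absorbed into $L^{\eta(d,|\alpha|)}$. Your extra remark about checking that repeated radii are admissible is a fair bookkeeping point that the paper handles implicitly, but it is not a different method.
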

\begin{proof}
   We will estimate the two term in right hand side of \eqref{eq:01091357722313} separately. 
   Given that $\Rcal^{*}=\Acal \Rcal \Acal^{-1}$, and denoting 
   by $\Dcal_k=\Rcal_{1}\cdots \Rcal_k C\Rcal^{*}_{k}\cdots\Rcal_1^{*}$.   
   one has that
   \begin{equation}
      \begin{split}
         \Dcal_k = \Rcal_1\cdots\Rcal_{k}\Rcal_{k}\cdots \Rcal_1 C.  
      \end{split}
   \end{equation}
   Applying Theorem~\ref{thm:diskrete-kryekryesorja}, we obtain that the supremum of 
   $\Dcal_k$ is bounded by
   \begin{equation*}
      \begin{split}
         \prod_{j=1}^{d-2} L^{-k+j} \prod_{j=1}^{d-2}\log(  L^{-k+j} ) \leq  
         L^{-k(d-2)} L^{\eta(d)}.
      \end{split}
   \end{equation*}
\end{proof}

\section{Analytic dependence on $A$} 
\label{sec:analytic-dependence}

The proof of the analyticity is based on a very elegant argument using complex analysis, and it is originally found in \cite{MR2995704}.
Because most of the arguments follow by trivial modification, we will only sketch the passages. 

The main tool of the Analytic dependence is the use of the following facts:

Given an homomorphic $f:D\to C^{m\times m}$, where $D $ is the unit disk and let $M $ be such that $\sup_{z\in D} \|f(z) \| \leq M $.  Then one has that $ \|f^{j}(0) \| \leq j! M$, where $f^{j}$ is the $j $-th derivative. Moreover let $g:D\to C^{m\times m}$ be an additional homomorphic function and $\bar{M}$ such that  $ \sup_{z\in D} \|f(z) \| \leq \bar{M} $ then $\|h^{j} (0)\| \leq M \bar{M}j! $, where $h=fg^{*}$.

Fix $c_{0}$ and let $A = A_0 + z A_{1}$ such that $A_{0} $ is symmetric and such that 
\begin{equation*} %---{{{
   \begin{split}
      \scalare{A_0(x) F, F}_{\C^{m\times d}} \geq c_0  \abs{F}^2, \qquad \text{and} \qquad \sup_{x\in \T^{d}_{N}}\| A_1(x) \| \leq \frac{c_0}{2}.
   \end{split}
\end{equation*} %---}}}

As in the previous sections we define
\begin{equation*} 
   \As := \nabla^* A \nabla.
\end{equation*}
This induces the sesquilinear form $\scalare{\varphi,\psi}= \scalare{\As \varphi , \psi} $.  Notice that  if  $A$ is real and symmetric, then  $\scalare {\cdot, \cdot}_A$ is a scalar product and agrees with $\scalare{\cdot, \cdot}+$. 

One then goes on and shows that $\Ts $ defined as usual satisfies $\|\Ts_{A}\varphi \|_{A_{0}} \lesssim \|\varphi \|_{A_{0}} $.  The above fact, and the complex version Lax-Milgram theorem shows existence of the bounded inverse  $\Cs_{A}= \As^{-1}$.  Finally to conclude one shows that for every $z$ $C_{A(z),k}$ is bounded. Thus by using the complex analysis facts shown in the beginning of this section one has the desired result.


\begin{thebibliography}{10}

   \bibitem{MR2995704}
   Stefan Adams, Roman Koteck{\'y}, and Stefan M{\"u}ller.
   \newblock Finite range decomposition for families of gradient {G}aussian
   measures.
   \newblock {\em J. Funct. Anal.}, 264(1):169--206, 2013.

Stefan Adams, Roman Koteck{\'y}, and Stefan M{\"u}ller.
\newblock Finite range decomposition for families of gradient {G}aussian
  measures.
\newblock {\em J. Funct. Anal.}, 264(1):169--206, 2013.

\bibitem{AKM10}
Stefan Adams, Rroman Koteck\'{y}, and Stefan M\"uller.
\newblock Strict convexity of the surface tension for non-convex potentials.
\newblock {\em in preparation}.

   \bibitem{bauerschmidt}
   Roland Bauerschmidt.
   \newblock A simple method for finite range decomposition of quadratic forms and
   {G}aussian fields.
   \newblock {\em Probab. Theory Related Fields}, 157(3-4):817--845, 2013.

   \bibitem{MR2240180}
   David Brydges and Anna Talarczyk.
   \newblock Finite range decompositions of positive-definite functions.
   \newblock {\em J. Funct. Anal.}, 236(2):682--711, 2006.

   \bibitem{MR1048698}
   David Brydges and Horng-Tzer Yau.
   \newblock Grad {$\phi$} perturbations of massless {G}aussian fields.
   \newblock {\em Comm. Math. Phys.}, 129(2):351--392, 1990.

   \bibitem{MR2523458}
   David~C. Brydges.
   \newblock Lectures on the renormalisation group.
   \newblock In {\em Statistical mechanics}, volume~16 of {\em IAS/Park City Math.
      Ser.}, pages 7--93. Amer. Math. Soc., Providence, RI, 2009.

   \bibitem{MR2070102}
   David~C. Brydges, G.~Guadagni, and P.~K. Mitter.
   \newblock Finite range decomposition of {G}aussian processes.
   \newblock {\em J. Statist. Phys.}, 115(1-2):415--449, 2004.

   \bibitem{butzer_berens}
   P.L. Butzer and H~Berens.
   \newblock {\em Semi-Groups of Operators and Approximation}.
   \newblock Die Grundlehren der mathematischen Wissenschaften, Band 153.
   Springer-Verlag New York Inc., New York, 1967.

   \bibitem{MR1354111}
   G.~Dolzmann and S.~M{\"u}ller.
   \newblock Estimates for {G}reen's matrices of elliptic systems by {$L^p$}
   theory.
   \newblock {\em Manuscripta Math.}, 88(2):261--273, 1995.

   \bibitem{MR717034}
   Mariano Giaquinta.
   \newblock {\em Multiple integrals in the calculus of variations and nonlinear
      elliptic systems}, volume 105 of {\em Annals of Mathematics Studies}.
   \newblock Princeton University Press, Princeton, NJ, 1983.

   \bibitem{morrey}
   C.B. Morrey.
   \newblock {\em Multiple Integrals in the Calculus of Variations}.
   \newblock Springer, 1986.

   \bibitem{MR961019}
   B.~G. Pachpatte.
   \newblock On discrete inequalities of the {P}oincar\'e type.
   \newblock {\em Period. Math. Hungar.}, 19(3):227--233, 1988.

\end{thebibliography}
\end{document}